\documentclass[a4paper,10pt]{article}
\usepackage{booktabs} 
\usepackage[ruled,linesnumbered]{algorithm2e} 

\SetAlFnt{\small}
\SetAlCapFnt{\small}
\SetAlCapNameFnt{\small}
\SetAlCapHSkip{0pt}
\IncMargin{-\parindent}

\usepackage{natbib} 
\usepackage[margin=1in]{geometry}

\setcitestyle{authoryear}

\title{Fair Online Resource Allocation}

\usepackage{authblk}

\author[1]{Christopher En\thanks{Corresponding Author: ce2456@columbia.edu}}
\author[1]{Yuri Faenza\thanks{yf2414@columbia.edu}}
\author[2]{Andrea Lodi\thanks{al748@cornell.edu}}
\author[3]{Gonzalo Mu\~noz\thanks{gonzalo.m@uchile.cl}}
\affil[1]{Columbia University, IEOR Department}
\affil[2]{Cornell Tech}
\affil[3]{Universidad de Chile}

\usepackage[utf8]{inputenc}
\usepackage{amsthm}
\usepackage{amsmath}
\usepackage{amsfonts}
\usepackage{float}

\usepackage{thmtools, thm-restate}

\usepackage{graphicx}
\usepackage{subcaption}

\usepackage{xcolor}

\usepackage{hyperref}
\hypersetup{colorlinks=true,linkcolor=blue,citecolor=blue,urlcolor=blue}

\usepackage{soul}

\usepackage{color-edits}
\addauthor{ce}{blue}
\addauthor{yf}{purple}
\addauthor{al}{red}
\addauthor{gm}{green}

\usepackage[normalem]{ulem}

\usepackage{enumerate}

\newtheorem{myclaim}{Claim}
\theoremstyle{plain}
\newtheorem{theorem}{Theorem}

\newtheorem{corollary}[theorem]{Corollary}
\newtheorem{lemma}[theorem]{Lemma}

\theoremstyle{definition}
\newtheorem{definition}[theorem]{Definition}
\newtheorem{example}{Example}[section]

\newtheorem*{theorem*}{Theorem}

\theoremstyle{remark}

\renewcommand{\[}{\begin{equation}}
\renewcommand{\]}{\end{equation}}

\newcommand{\bea}{\begin{eqnarray}}
\newcommand{\eea}{\end{eqnarray}}

\newcommand\gfair{\mathop{\mbox{$\gamma$-$\mathit{FAIR}$}}}

\newcommand\unf{\mathop{\mbox{$\mathit{UNF}$}}}

\newcommand\flu{\mathop{\mbox{$\mathit{FLU}$}}}

\newtheoremstyle{boldnum}
  {\topsep}   
  {\topsep}   
  {\itshape}  
  {}          
  {\bfseries} 
  {.}         
  {.5em}      
  {#1 #3}     

\theoremstyle{boldnum}
\newtheorem*{manualthm}{Theorem}

\begin{document}

\maketitle

\begin{abstract}
We study the problem of fair online resource allocation, motivated by applications such as refugee resettlement and airline scheduling, where agents arrive sequentially and must be assigned to facilities with limited capacities. We introduce a model that maximizes the overall welfare subject to resource constraints and a Lipschitz fairness requirement, which ensures that similar agents arriving in the same batch receive similar expected outcomes. We first analyze the offline problem, proving that the value of the optimal fair allocation is at least an $\Omega(1/\gamma)$ fraction of the optimal unfair allocation, where $\gamma$ is the fairness coefficient, thereby bounding the price of fairness. For the online setting, we propose an algorithm based on dual mirror descent that enforces fairness constraints within batches while estimating optimal dual variables. We prove that this algorithm achieves sublinear regret relative to the optimal offline fluid benchmark. Finally, we validate our theoretical results using real-world data from the Refugee Economies Programme, demonstrating the algorithm's performance and examining the trade-offs between welfare maximization and fairness enforcement.
\end{abstract}

\section{Introduction}\label{sec:introduction}

The problem of allocating limited resources to agents arriving sequentially is a fundamental challenge in operations research and economics, appearing in domains ranging from online advertising and airline scheduling to public sector applications like housing allocation. In many of these settings, the primary objective is to maximize a global welfare metric—such as revenue, efficiency, or social utility—subject to capacity constraints. For instance, in online advertising auctions, algorithms aim to allocate ad slots to advertisers to maximize total revenue while respecting budget constraints (e.g., the AdWords problem, see~\cite{mehta2010online}). Similarly, in airline revenue management, the goal is often to price and allocate seats to maximize yield given fixed capacity, see, e.g.,~\cite{talluri2006theory}.

In these scenarios, standard algorithms—such as those based on online primal-dual formulations or dual mirror descent—focus purely on the aggregate value, effectively treating agents (advertisers or passengers) as means to an optimal end. However, in many high-stakes applications, such as those involving human subjects, maximizing utilitarian welfare alone is often insufficient. This tension has sparked significant research into the ``price of fairness'': quantifying the loss in system efficiency required to achieve equitable outcomes. This trade-off is not limited to human-centered domains; it appears in computer networking (e.g., fair queuing to prevent bandwidth starvation) \citep{demers1989analysis, kleinberg1999fairness}, machine learning (ensuring classifiers do not discriminate against protected groups) \citep{dwork2012fairness, hardt2016equality}, and organ exchange programs, where allocating kidneys purely to maximize life-years might systematically disadvantage hard-to-match patients \citep{dickerson2014price}. In these settings, fairness is not just a constraint but a fundamental requirement of system stability and ethics.

 An example motivating our work is refugee resettlement, where families arrive sequentially and must be assigned to host communities with limited capacity. Existing approaches, including bespoke refugee resettlement algorithms such as those proposed by \citet{bansak2018improving} and \citet{ahani2021placement}, explicitly aim to maximize global expected employment. While well-intentioned, unconstrained optimization in this context can lead to severe inequities. Consider a scenario where two refugees with nearly identical profiles arrive. A profit-maximizing algorithm, prioritizing agents who contribute most to the objective, might assign the first refugee to a highly desirable urban center where they have a high chance of employment. However, to conserve capacity for a hypothetical ``perfect match'' arriving later, the algorithm might relegate the second, nearly identical peer to a remote area with scarce opportunities. Despite their similarities, their outcomes diverge drastically based on minor profile differences or mere arrival timing. Such disparities violate the principle of individual fairness,  which dictates that similar individuals should be treated similarly. See Example~\ref{ex:fairness-matters} for details.  Interestingly, in Example~\ref{ex:fairness-matters} there exists a solution that is at the same time ex-ante fair, and whose expected objective function value is close to that of the optimal solution that disregards fairness concerns.   This is the solution which assigns to the two agents similar probability vectors of being matched to each of the two locations, and then samples an assignment from these probability distributions. See again Example~\ref{ex:fairness-matters} for details.  
 Drawing inspiration from such an example, this work seeks to answer the following research questions: 
 
 \begin{quote}\emph{Is the ``price of fairness'' always bounded\footnote{In this paper, we focus on ex-ante fairness. As is common for models with indivisible goods, ex-post fairness can only be guaranteed in the model we consider if we accept to drastically reduce the objective function value.} for online resource allocation problems? If so, can we efficiently find a fair solution that achieves high profit in expectation? }
\end{quote}

In this paper, we give positive answers to both questions above and more generally address the tension between welfare maximization and fairness in an online resource allocation model. We consider a setting  with groups of agents arriving simultaneously as batches at discrete time steps (e.g., a weekly intake of asylum seekers), rather than one by one. As a batch arrives, we want to find a lottery for assigning its agents to a set of facilities, so that the resources consumed by agents when the lottery is realized do not exceed the resource budget. To maximize profit, we select the lotteries so as to optimize a linear objective function defined over pairs of agents and facilities. To guarantee fairness, we assume to be also given a distance function between agents, and constrain the selected lotteries to satisfy a Lipschitz  constraint between any pair of agents in the same batch, ensuring that any difference in expected outcomes between two agents is bounded proportionally to how different the agents are. If two agents are indistinguishable, they must receive statistically identical assignments. See Section~\ref{sec:preliminaries} for a formal description of the model. Compared to offline fairness constraints, enforcing this condition in an online setting is highly non-trivial. The decision-maker must satisfy these local fairness constraints within each batch to avoid discrimination, while simultaneously managing global resource consumption over the entire horizon to avoid running out of capacity. A key challenge we address is determining how detrimental this fairness requirement is to the global objective function.


\subsection*{Main Contributions}
Our work provides a rigorous theoretical framework as well as a sound empirical analysis for fair online allocation. Our contributions are three-fold:

\noindent \paragraph{1. Bounding the Price of Fairness (Offline Analysis).} We first analyze the offline version of the problem, where all arrivals are known in advance. We prove that the optimal fair allocation is not arbitrarily worse than a fluid benchmark for the unfair allocation. In the following theorem, $\gamma$ is a parameter determining the fairness requirements of the model. The larger the $\gamma$, the more stringent the fairness constraints imposed on the lotteries.

\begin{theorem*}[Informal version of Theorem~\ref{thm:fairness}]
    For any constant $\gamma\ge 1/2$, the value of the offline $\gamma$-fair resource allocation problem is at least an $\Omega(1/\gamma)$ fraction of the value of the offline fluid unfair allocation problem.
\end{theorem*}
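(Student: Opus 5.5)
We do not yet have the formal model, so the plan rests on an assumed reading. We take the offline $\gamma$-fair problem to be the following. In batch $t$, agent $i$ receives a lottery $x_i\in\Delta$ over facilities $j$, plus possibly a null option. The objective is the linear reward $\sum_{i,j} r_{ij}x_{ij}$, and resource constraints $\sum_{i,j} a_{ij}x_{ij}\le B$ hold in expectation, or in the fluid sense. The Lipschitz fairness constraint for any two agents $i,i'$ in the same batch reads $\gamma\,\|x_i-x_{i'}\|_1\le d(i,i')$, or a similar outcome-based version, with $d$ a metric normalized so that $d\le 1$. The fluid unfair benchmark $\flu$ is the same LP with the fairness constraints dropped.

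\textbf{Approach: turn an optimal unfair fluid solution into a fair one by mixing.} Let $x^*$ be an optimal solution of $\flu$. For a parameter $\lambda\in(0,1]$, define within each batch
\[
y_i \;=\; \lambda\, x_i^* + (1-\lambda)\,\mathbf{0},
\]
where $\mathbf 0$ is the null assignment. Equivalently, agent $i$ follows $x^*_i$ with probability $\lambda$ and is otherwise unassigned. Both feasibility conditions are then immediate.
\begin{itemize}
\item Resource use scales by $\lambda\le 1$, so capacities still hold.
\item The objective equals $\lambda\cdot\flu$.
\end{itemize}
Fairness is the only issue. We have $\|y_i-y_{i'}\|_1=\lambda\|x^*_i-x^*_{i'}\|_1\le 2\lambda$, so the constraint holds whenever $2\lambda\gamma\le d(i,i')$. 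For pairs at tiny distance this fails. Hence pure mixing is not enough, and we must also handle nearby agents.

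\textbf{Fixing close pairs.} We would refine the construction by smoothing $x^*$ over the metric within each batch. One option is $y_i=\lambda\sum_{k}w_{ik}x^*_k$, where $w_{ik}$ are weights that vary Lipschitzly in $i$, for instance a normalized kernel $\max(0,\,1-\gamma d(i,k))$. A cleaner alternative is to note that the fluid optimum can be chosen \emph{type-symmetric}: identical agents get identical lotteries, by averaging, which preserves feasibility and value. We then interpolate between types. For fairness, the goal is to show that each $y_i$ is $1/(c\gamma)$-Lipschitz in $i$; this follows from Lipschitzness of the weights scaled by $\lambda=\Theta(1/\gamma)$.

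For value and feasibility, smoothing could reallocate an agent's lottery to a facility where it earns less or consumes more. The plan is to keep the lottery attached to the original agent, and to aggregate resource use by summing over $k$ with weights that are column-stochastic as well, so total consumption is at most that of $x^*$. Alternatively, we can apply the pure scaling construction to a well-separated sub-collection, such as a greedy net at scale $1/\gamma$, and leave the other agents unassigned except for shared averaged lotteries.

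\textbf{Expected main obstacle.} The hard step is preserving value under smoothing. Agents with nearby features may have very different rewards $r_{ij}$, so averaging lotteries across a neighborhood could destroy value. We would first try the simplest route: set $y_i=\lambda x_i^*$ with $\lambda=1/(2\gamma)$ and the normalization $d\ge$ const for distinct types. This is plausibly what the condition $\gamma\ge 1/2$ hints at, since it makes $\lambda\le 1$. If the paper's fairness constraint is instead on expected outcomes, for instance $|\sum_j r_{ij}x_{ij}-\sum_j r_{i'j}x_{i'j}|\le \gamma^{-1}\text{-scaled } d$ with $r$ tied to $d$, then mixing alone gives $\lambda$-scaled differences and the bound $\lambda=\Theta(1/\gamma)$ suffices directly. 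In either case the conclusion would be $\gfair\ge \flu/(2\gamma)=\Omega(1/\gamma)\,\flu$.
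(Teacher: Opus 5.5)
\textbf{There is a genuine gap: the proposal never constructs a fair allocation with constant value under the paper's assumptions.} Your last step does match the paper: for $\gamma\ge 1/2$ the paper also rescales a base allocation by $1/(2\gamma)$. But the whole content of the theorem is the base allocation. That is, you need a feasible allocation that is $\tfrac12$-fair and earns a constant fraction (depending on $\underline d,\overline b$) of $\unf$.

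Rescaling the unfair optimum $x^*$ cannot supply it. The paper's constraint is outcome-based, $d(u_1,u_2)\ge\gamma(w_{u_1}^\top x_1-w_{u_2}^\top x_2)$. The assumption $d(u_1,u_2)\ge\|w_{u_1}-w_{u_2}\|_\infty$ controls this gap only when both agents receive the \emph{same} lottery. It says nothing about $w_{u_1}^\top x^*_1-w_{u_2}^\top x^*_2$ when $x^*_1\ne x^*_2$. Example~\ref{ex:fairness-matters} is exactly this case: $d=\epsilon$, yet the unfair optimum has gap $0.4-\epsilon$, so $\lambda x^*$ is $\gamma$-fair only for $\lambda\le\epsilon/(\gamma(0.4-\epsilon))$.

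Two of your fallbacks therefore fail. First, your claim that for outcome-based fairness ``mixing alone gives $\lambda$-scaled differences'' with $\lambda=\Theta(1/\gamma)$ is false. Second, the normalization ``$d\ge$ const for distinct types'' is not available. The paper separates types by $\underline d$ only when $b_{u_1}\ne b_{u_2}$, and distinct types with equal $b$ and nearby $w$ are precisely the hard case. Under your stronger assumption the statement would be trivial: symmetrize identical types within each batch, then scale by $\min\{1,\underline d/\gamma\}$.

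\textbf{The missing idea is a redistribution in which lagging agents copy the lotteries of nearby higher-value agents, with the extra resource use charged to some budget.} Your smoothing sketch points that way but leaves the critical steps open.
\begin{itemize}
\item A kernel such as $\max(0,1-\gamma d)$ can mix agents with different $b$ whenever $\gamma\underline d<1$. Column-stochasticity then no longer bounds consumption.
\item The $\ell_1$-Lipschitz constant of row-normalized kernel weights is not $O(\gamma)$ independently of the batch size. For example, $\Theta(1/(\gamma\epsilon))$ agents lying just inside one agent's kernel support and just outside that of a neighbour at distance $\epsilon$ shift a constant fraction of normalized weight. So Lipschitzness does not simply ``follow''.
\item No multiplicative value bound is derived.
\end{itemize}
The worry you single out, nearby agents with very different rewards, is actually excluded by $d\ge\|w_{u_1}-w_{u_2}\|_\infty$. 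This assumption gives $w_{u_i}^\top x^*_k\ge w_{u_k}^\top x^*_k-d(u_i,u_k)$, and the real difficulty is resources.

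The paper's fair water-filling (Algorithm~\ref{alg:water}) handles this as follows.
\begin{itemize}
\item Agents fill along their own $y^*_\xi$. Once a fairness constraint binds, the lagging agent switches to the best eligible vector of an agent ahead of it. By the inequality above, the gap grows by at most $d$ over the remaining unit of time, which gives $\tfrac12$-fairness.
\item The copier renounces an equal amount of its own capacity, and its consumption is charged to the copied vector's budget, which is closed when exhausted. This gives feasibility.
\item The value bound re-splits value between copier and copied agent and uses the chaining Lemma~\ref{lem:triangle}. It invokes $\underline d$ when a copier needs a resource the original did not use, and $\overline b$ for faster consumption. The resulting ratio is $\min\{\underline d/2,1/(2\overline b)\}$ if $\overline b\ge 2$, and $\min\{\underline d/2,3/8\}$ if $\overline b=1$.
\end{itemize}
Without a construction of this kind, your argument does not prove the theorem.
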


Theorem~\ref{thm:fairness} provides a theoretical guarantee on the ``price of fairness,'' demonstrating that equitable outcomes can be achieved without a catastrophic loss in total welfare.

\noindent \paragraph{2. Online Algorithm with Sublinear Regret.} We propose a new online algorithm based on dual mirror descent. Our approach adapts the standard dual descent framework by enforcing Lipschitz fairness constraints within the primal allocation step for each batch. We prove that this algorithm effectively learns the optimal dual variables (shadow prices) over time, achieving sublinear regret.
\begin{theorem*}[Informal version of Theorem~\ref{thm:regret}]
    There exists an online algorithm that achieves $O(\sqrt{T})$ expected regret for the fair online resource allocation problem for constant batch size, where $T$ is the time horizon.
\end{theorem*}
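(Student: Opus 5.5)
The plan is the standard dual mirror descent analysis of \citet{balseiro2020dual}-type algorithms, adapted so that the per-period primal step is a fair batch problem. The online algorithm keeps a dual vector $\mu_t\ge 0$ over the resources. When batch $t$ arrives, it solves the batch-level problem: maximize the reward minus $\mu_t^\top(\text{consumption})$ over lotteries satisfying the $\gamma$-Lipschitz fairness constraints inside that batch. It then samples an assignment from this lottery. Finally, it updates $\mu_{t+1}$ by a mirror descent step on the subgradient $\rho - b_t$, where $\rho = B/T$ is the per-period budget and $b_t$ is the expected (or realized) consumption. Because the batch feasible set is a polytope (lotteries plus linear Lipschitz constraints) and the batch size is constant, the dual function $D_t(\mu)$ for each batch is convex, piecewise linear, and has bounded subgradients. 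These are the only properties the analysis needs.

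\textbf{Step 1 (benchmark bound).} Take the offline fluid fair benchmark $\mathrm{OPT}$ and use weak duality for i.i.d.\ batch arrivals. For every $\mu\ge 0$, $\mathbb{E}[\mathrm{OPT}] \le T\,\bar D(\mu)$, where $\bar D(\mu)=\mathbb{E}[f^*(\mu)]+\mu^\top\rho$ and $f^*$ is the Lagrangian value of one batch's fair problem. Lagrangifying only the resource constraints and keeping the fairness constraints inside the per-batch maximization is what makes this relaxation valid.

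\textbf{Step 2 (per-period accounting).} Let $\tau$ be the stopping time at which some resource first comes within one batch's maximum consumption of its budget. Before $\tau$ the algorithm runs unconstrained, so each period's reward satisfies $r_t = f^*_t(\mu_t) + \mu_t^\top b_t$. Taking conditional expectations and using that $\mu_t$ is independent of batch $t$ gives
\[
\mathbb{E}\Big[\sum_{t\le\tau} r_t\Big] \ge \mathbb{E}\Big[\tau\,\bar D(\bar\mu_\tau)\Big] - \mathbb{E}\Big[\sum_{t\le \tau}\mu_t^\top(\rho-b_t)\Big],
\]
with $\bar\mu_\tau$ the average iterate. The last sum is bounded by the standard online mirror descent regret against any comparator $\mu$, which is $O(\sqrt{T})$ with step size $\eta\sim 1/\sqrt{T}$. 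This uses bounded gradients, which hold because the batch size and consumptions are bounded.

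\textbf{Step 3 (the stopping-time argument).} This is the main obstacle. If $\tau<T$, choose the comparator $\mu = (\bar r/\rho_j) e_j$ for the depleted resource $j$. Since $\sum_{t\le\tau} b_{t,j}\ge B_j - O(1)$, we get $\sum_{t\le\tau}\mu^\top(\rho-b_t)\le -\bar r(T-\tau)+O(1)$. This offsets the lost reward $(T-\tau)\bar r$ in the periods after $\tau$. If $\tau=T$, choose $\mu=0$. Combining this with Step 1 yields $\mathbb{E}[\mathrm{OPT}]-\mathbb{E}[\mathrm{ALG}]=O(\sqrt{T})$, with constants depending polynomially on the batch size and on bounds on rewards and consumptions.

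Two points need care in this step. First, the fairness constraints must not break the $\mu_t$-independence of the primal step. Second, the sampled realization, not just the expected consumption, must respect capacity. I would handle the second point by using expected consumption in the update and an Azuma-type concentration to bound how far realized consumption drifts from it. That drift is $O(\sqrt{T})$, which the regret absorbs.
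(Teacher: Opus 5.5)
Your proposal is correct. Steps 1 and 2 match the paper's Lemma~\ref{lem:dual} and the Lagrangian identity behind Lemma~\ref{lem:part2}, but Step 3 handles the stopping time by a genuinely different mechanism. You use the complementary-slackness comparator $\mu=(\bar r/\rho_j)e_j$ from the original analysis of \citet{balseiro2020dual}. This is legitimate because the mirror descent regret inequality holds pathwise for every comparator simultaneously. The comparator may therefore be chosen after $\tau$ and the depleted index $j$ are revealed, and the post-$\tau$ loss is absorbed inside the same regret bound. The paper instead uses only the comparator $\mu=0$ in Lemma~\ref{lem:part2} and bounds the lost periods separately. Lemma~\ref{lem:stepsize} shows the iterates stay below $\mu^{max}$; this is where coordinate-separability of $h$, $\sigma_2$-strong convexity, and the restriction $\eta\le\sigma_2/(S\overline b)$ are needed. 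Lemma~\ref{lem:stopping_time} then telescopes the first-order condition $\dot h_j((\mu_t)_j)\ge \dot h_j((\mu_{t-1})_j)-\eta(\tilde g_t)_j$ up to $\tau_A$ to get $\mathbb E[S(T-\tau_A)]=O(1/\eta)$, which is multiplied by $\overline w$ in the final bound. Your route dispenses with those extra assumptions on $h$ and $\eta$. The paper's route gives, as a by-product, an explicit bound on how early a resource can be exhausted. On realized versus expected consumption, your Azuma step costs an extra $O(\sqrt T)$ term, which you could avoid by feeding realized consumption into the dual update. Since $\mu_t$ is fixed before the batch-$t$ lottery is sampled, the Step 2 identity still holds in expectation, and Step 3 then uses the definition of $\tau$ directly. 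Note also that the paper's proof of Lemma~\ref{lem:stopping_time} applies optional stopping to a fixed coordinate $j$. It then invokes the definition of $\tau_A$ for the depleted resource, which is path-dependent, so your explicit treatment of this drift is, if anything, more careful than the paper's at that point.
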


To our knowledge, this is one of the first results establishing sublinear regret for online allocation under Lipschitz fairness constraints.

\paragraph{3. Empirical Validation on Refugee Data.} We validate our theoretical findings using real-world data from the Refugee Economies Programme \citep{betts2024economic}, comprising household survey data from Kenya, Uganda, and Ethiopia. Our experiments demonstrate that the proposed algorithm performs efficiently in practice, achieving welfare close to the offline optimum while strictly satisfying fairness requirements. In contrast, we observe empirically on this dataset that, in the allocation produced by the algorithm by~\citet{balseiro2020dual}, a significant proportion of agents are treated unfairly (according to our criterion), thus justifying the rationale for imposing the fairness constraint explicitly. We also provide a detailed analysis of the trade-offs, showing that in many realistic regimes, significant fairness gains can be realized with only a marginal reduction in total employment outcomes.

\subsection*{Technical Overview}

Our theoretical analysis addresses two primary challenges: quantifying the efficiency loss inherent to fairness (bounding the ``Price of Fairness'') in the offline setting, and designing an algorithm that learns to allocate optimally under these strict constraints in the online setting (minimizing the ``Price of Learning''). 

\smallskip

\noindent\textbf{Bounding the Price of Fairness}. The first major challenge is to determine how much global welfare must be sacrificed to satisfy the Lipschitz fairness constraint. The optimal fair solution is the solution to a large linear program with numerous fairness constraints coupling the agents, making direct analysis of the feasible region difficult. Instead of analyzing the fair solution space directly, our strategy is constructive. We derive a lower bound on the fair optimal value by constructing a specific allocation that is both provably fair and a constant approximation of the fluid unfair solution. We introduce a ``Fair Water-Filling'' algorithm to generate this candidate solution. In this algorithm, we initially have agents fill ``buckets'' of probability mass in accordance with the fluid unfair solution. As fairness constraints become tight, the algorithm dynamically adjusts the buckets to maintain a baseline of fairness while simultaneously maintaining an approximation of the fluid unfair solution and ensuring resource constraints remain satisfied. We prove that this process terminates with a solution that is fair and achieves a constant fraction of the unfair welfare, establishing that the Price of Fairness is bounded (Theorem~\ref{thm:fairness}).

\smallskip

\noindent\textbf{Minimizing the Price of Learning}. The second challenge is the online setting, where we must make irrevocable allocations batch-by-batch without knowledge of future arrivals. The core difficulty here is satisfying the local fairness constraints (fairness within the current batch) while optimizing the global resource constraints (capacity over the entire horizon). We adapt a framework based on the dual online mirror descent algorithm of \citet{balseiro2020dual}. The standard approach in online allocation is to learn optimal dual variables (shadow prices) for the resources. However, a standard greedy assignment based on these prices would violate the fairness constraints. Instead of the myopic greedy approach, we solve a local fair optimization problem for each batch. Thus, the algorithm balances welfare against the estimated shadow cost of resources, showing Theorem~\ref{thm:regret}.

\subsection*{Related Work}

Online resource allocation has been one of the most fundamental and well-studied problems in operations research, computer science, and economics. The vast majority of literature in the field has been purely utilitarian, focusing on maximizing some central or overall objective. \citet{devanur2009adwords} studied the AdWords problem, where the value derived from an allocation is proportional to the resources allocated, presenting an algorithm which achieved $O(T^{2/3})$ regret. \citet{agrawal2014dynamic} present a more complex algorithm that involves solving successive linear programs, achieving $O(T^{1/2})$ regret. \citet{agrawal2014fast} also provide an algorithm for a more general class of online convex programming problems, achieving $O(T^{1/2})$ regret, though they allow some constraint violations. \citet{guo2022online} and \citet{castiglioni2022unifying} have similar results with similar constraint tradeoffs. \citet{balseiro2020dual} adapt the online mirror descent algorithm from optimization literature to learn optimal dual ``shadow prices,'' achieving $O(T^{1/2})$ regret in more general online resource allocation settings. Our algorithm is based on their techniques, adapted to incorporate fairness constraints. A common theme throughout these works is a focus purely on utility maximization, without consideration for fairness over outcomes.

Fairness in algorithmic decision-making can generally be classified into two types: group fairness and individual fairness. Group fairness generally aims to ensure that various protected groups are treated equally, using metrics such as demographic parity or equality of opportunity. \citet{manshadi2021fair, donahue2020fairness, freund2023group, balseiro2021regularized} explore various revenue management problems under a range of group fairness constraints. However, group fairness faces some ethical limitations, such as ``fairness gerrymandering,'' where a subgroup within a protected class is treated poorly to balance aggregate statistics. In general, group fairness requirements may fail to treat individuals according to their specific merits and needs.

Individual fairness has also been studied extensively. \cite{esmaeili2023rawlsian} provide an online bipartite matching algorithm that simultaneously optimizes welfare and Rawlsian (max-min) individual fairness. \cite{sinclair2022sequential} study an online resource allocation problem where agent utility is proportional to resources consumed. They define fairness as a version of envy-freeness, and design an algorithm that achieves the optimal envy-efficiency tradeoff. However, they assume full knowledge of the arrival distribution. \citet{dwork2012fairness} introduce the concept of individual Lipschitz fairness in the context of machine learning classification, with the core axiom that ``similar individuals should be treated similarly.'' \citet{gupta2021individual} apply individual Lipschitz fairness to an online stochastic contextual bandit problem, requiring fairness over the ex-post decisions. In many resource allocation settings with indivisible goods, though, ex-post individual fairness is unreasonable or impossible to guarantee. Our work builds on the foundation set by \citet{dwork2012fairness}, adapting ex-ante individual Lipschitz fairness to the online resource allocation setting.

\subsection*{Organization}
The remainder of this paper is organized as follows. Section~\ref{sec:preliminaries} formally defines the model, the Lipschitz fairness constraint, and the fluid benchmarks. It also presents some examples to discuss the importance of explicitly imposing fairness constraint, as well as of some of our assumptions. Section~\ref{sec:offline} presents the offline analysis and the bound on the price of fairness, leading to a proof of Theorem~\ref{thm:fairness}. Section~\ref{sec:algorithm} introduces our fair online algorithm based on dual mirror descent, followed by the regret analysis in Section~\ref{sec:online}. These results prove Theorem~\ref{thm:regret}. Finally, Section~\ref{sec:experiments} presents experimental results using the refugee resettlement dataset.

\section{Preliminaries}\label{sec:preliminaries}

In this section, we present the model for fair online resource allocation. Our goal is to allocate facilities to incoming agents over a finite time horizon to maximize the overall weight of the matching, subject to resource constraints and a Lipschitz fairness requirement.

\paragraph{Input.} 
The primary components of our model are formalized as follows:
\begin{itemize}
    \item \textbf{Facilities and Resources}: We are given a finite set $V$ of distinct facilities and a set of $N$ different resources. 
    
    \item \textbf{Time and Arrivals}: Agents arrive over a discrete time horizon of $T$ periods, indexed by $t\in [T]$. In each period $t$, a batch of $S$ agents arrives, indexed by $s\in [S]$. Thus, each agent is uniquely identified by the pair of indices $t,s$. These agents are independently and identically distributed (i.i.d.) from an unknown distribution $F$ from a known family $\mathcal F$ over a known finite set of distinct agent types, denoted by $U$. We denote the type of the $s$-th agent that arrives in period $t$ by $u_{ts}\in U$.
    
    \item \textbf{Resource Capacity}: The total available capacity for all resources over the entire horizon is given by $B_0\in \mathbb N^{N}$. We also define $\rho := B_0/(TS)$. Note that resources are global and not tied to specific facilities. Moreover, each facility has the same initial amount of each resource, though it can be consumed at different rates, depending on the facility, the resource, and the agent -- see below.
    
    \item \textbf{Agents}: Each agent in each batch with type $u$ is defined by three characteristics $(w_u, b_u, e_u)$:
    \begin{enumerate}
        \item A \textbf{weight vector} $w_u \in [0, 1]^V$, where $(w_u)_v$ is the value obtained if agent $u$ is allocated to facility $v$. We assume $w_u > 0$ for all $u \in U$ (componentwise).
        \item A \textbf{resource consumption matrix} $b_u \in \mathbb{N}^{V\times N}$, representing the quantity of each of the $N$ resources consumed by an agent of type $u$ when allocated each facility in $V$.
        \item A unit \textbf{type vector} $e_u\in \{0,1\}^U$ indicating the type of the agent.
    \end{enumerate}
    The type of the $s$-th agent that arrives in period $t$ is given by $u_{ts}$. For notational convenience, we denote $(w_{ts}, b_{ts}, e_{ts}):= (w_{u_{ts}}, b_{u_{ts}}, e_{u_{ts}})$. Note that for algorithmic purposes, it is not necessary to define the types $U$ in advance.
  
    \item \textbf{Distance Function}: We are given a symmetric distance metric $d:\{0,1\}^U \times \{0,1\}^U \rightarrow \mathbb{R}_{\geq 0}$ measuring the dissimilarity between agent types. This metric is used to enforce fairness by ensuring that similar individuals are treated similarly. 
\end{itemize}

\paragraph{Allocations.} The decision at each time $t$ is how to allocate the arriving agents: each agent in the batch is assigned to at most one facility, represented by $\tilde x_{ts}\in \{0,1\}^V$. Decisions may also be randomized, where each agent is given a lottery $x_{ts}\in \{x\in \mathbb R^V_+\mid \sum_{v \in V} x_{v}\le 1\}$ over the facilities. Assignments are then realized as $\tilde x_{ts}\sim x_{ts}$ after allocation distributions are decided but before the next batch $t+1$ arrives. Define $\theta$ as the random variable determining the realization of $\tilde x_{ts}\sim x_{ts}$ for each agent in each batch.

\paragraph{Constraints.}
The allocation decisions must satisfy the following constraints:
\begin{enumerate}
    \item \textbf{Lipschitz Fairness Constraint}: For any two agents $s_1$ and $s_2$ arriving in the same batch $t$, the difference in their expected outcomes is bounded by their similarity:
    \begin{equation}\label{eq:lipsch-fair}
    d(u_{ts_1}, u_{ts_2}) \ge \gamma (\mathbb E_\theta[w_{ts_1}^\top \tilde x_{ts_1}] - \mathbb E_\theta[w_{ts_2}^\top \tilde x_{ts_2}]) = \gamma(w_{ts_1}^\top x_{ts_1} - w_{ts_2}^\top x_{ts_2}) \quad \forall t \in [T], \forall s_1, s_2 \in [S].
    \end{equation}
    Here, $\gamma > 0$ is a parameter controlling the strictness of the fairness requirement. An allocation satisfying this constraint, or an algorithm producing such allocations, is called \emph{$\gamma$-fair}. Note that this constraint corresponds to ex-ante fairness, instead of ex-post fairness; we discuss the rationale behind this choice below.
    
    \item \textbf{Resource Constraint}: The total consumption of resources across all agents and all time periods cannot exceed the available capacity:
    \begin{equation}\label{eq:resource-ex-post}
    \sum_{t=1}^{T} \sum_{s=1}^{S}  b_{ts}^\top \tilde x_{ts}  \le B_0.
    \end{equation}
    Note that even with randomized algorithms, the resource capacity constraint must be satisfied deterministically.

    \item \textbf{Allocation Feasibility}: Each agent can be assigned to at most one facility, namely
    \begin{align*}
        \sum_{v\in V} (\tilde x_{ts})_v &\le 1 \quad \forall t \in [T], s \in [S], \\
        \tilde x_{ts} &\ge 0 \quad \forall t \in [T], s \in [S].
    \end{align*}
\end{enumerate}

\paragraph{Reward of the online problem.}
Given an algorithm $A$ that allocates agents to facilities respecting the constraints above and an arrival distribution $F$, the reward of the algorithm over the distribution is given by
\[R(A\mid F) = \mathbb E_{F}\left[\sum_{t=1}^T\sum_{s=1}^S w_{ts}^\top \tilde x_{ts}\right],\]
where $\tilde x$ is computed by the algorithm.

\paragraph{Benchmarks.} The optimal fair reward, given the distribution $F$ of the arrivals $u_{ts}$ for each agent $s\in [S]$ in each batch $t\in [T]$, is bounded above by the following linear program:
\begin{equation*}
\gfair(\{u_{ts}\}_{t\in [T], s\in [S]}) =  \left[
\begin{aligned}
    \max_{x} \quad & \sum_{t=1}^T\sum_{s=1}^S w_{ts}^\top  x_{ts} \\
    \text{s.t.} \quad & \sum_{t=1}^T\sum_{s=1}^S b_{ts}^\top  x_{ts}\le B_0 \\
    & d(u_{ts_1}, u_{ts_2}) \ge \gamma (w_{ts_1}^\top  x_{ts_1} -  w_{ts_2}^\top  x_{ts_2}) && \forall t \in [T], \forall s_1, s_2 \in [S] \\
    & \sum_{v\in V} ( x_{ts})_v \le 1 && \forall t \in [T], s \in [S] \\
    &  x_{ts} \ge 0 && \forall t \in [T], s \in [S]
\end{aligned}
\right].
\end{equation*}
Note that the optimal solution to $\gfair$ may be fractional. We may interpret the optimal solution $x$ as a randomized allocation, which is required to satisfy the resource constraints in expectation only. The optimal offline unfair reward is bounded above by a similar linear program, without the Lipschitz fairness constraints:
\begin{equation*}
\unf(\{u_{ts}\}_{t\in [T], s\in [S]}) =  \left[
\begin{aligned}
    \max_{ x} \quad & \sum_{t=1}^T\sum_{s=1}^S w_{ts}^\top  x_{ts} \\
    \text{s.t.} \quad & \sum_{t=1}^T\sum_{s=1}^S b_{ts}^\top x_{ts}\le B_0 \\
    & \sum_{v\in V} ( x_{ts})_v \le 1 && \forall t \in [T], s \in [S] \\
    &  x_{ts} \ge 0 && \forall t \in [T], s \in [S]
\end{aligned}
\right].
\end{equation*}
We will refer to $\gfair$ and $\unf$ as the \emph{fluid relaxations}. Observe that for each realization of $F \in \mathcal{F}$ we have
\begin{equation}\label{eq:unflu}
\unf(\{u_{ts}\}_{t\in [T], s\in [S]})\geq \gfair(\{u_{ts}\}_{t\in [T], s\in [S]}).
\end{equation}

Similarly, the optimal \emph{ex-ante} reward for the fair problem is bounded by the expected optimal value of $\gfair$, over the arrival distribution $F$

\begin{equation}\label{eq:flu}
\flu(F) := \mathbb E_F\left[\gfair(\{u_{ts}\}_{t\in [T], s\in [S]})\right] \geq R(A\mid F).
\end{equation}

\paragraph{Objective function of the online problem.} The objective of the online problem is to minimize the regret of $A$ over $F$, relative to the fluid benchmark
\[Regret(A\mid F) = \flu(F) - R(A\mid F).\]
If $A$ is randomized, then the expected regret is given by
\[\mathbb E_\theta[Regret(A\mid F)] = \flu(F) - \mathbb E_\theta[R(A\mid F)].\]

\paragraph{Assumptions and Notation.}
We introduce the following notation and make the following assumptions on the model parameters and the distance function:
\begin{itemize}
    \item We denote $\overline w := \max_{u\in U}||w_u||_\infty$.
    \item We denote $\overline b := \max_{u\in U}||b_u||_\infty$.
    \item We assume there exists $\underline\rho, \overline\rho\in \mathbb R_{++}$ such that $\underline \rho \le \rho_{k}\le \overline \rho$ for all $k\in N$.
    \item \textbf{Distance Function Assumptions}:
    \begin{enumerate}
        \item[(0)] Every type has distance $0$ from itself, i.e., $d(u,u)=0$ for all $u \in U$. \item The triangle inequality holds: $d(u_1, u_2) + d(u_2, u_3)\ge d(u_1, u_3)$ for all $u_1, u_2, u_3\in U$.\label{d_assumption_t}
        \item The distance is lower bounded by weight vector difference: $d(u_1, u_2)\ge ||w_{u_1} - w_{u_2}||_\infty$ for all $u_1, u_2\in U$.\label{d_assumption_w}
        \item If resource consumption differs, distance is bounded away from zero: If $b_{u_1}\ne b_{u_2}$, then $d(u_1, u_2)\ge \underline d$, for some constant $\underline d>0$ independent of $u_1,u_2$. \label{d_assumption_b}
    \end{enumerate}
\end{itemize}

\paragraph{Discussion.}
\begin{itemize}
\item \textbf{Fairness Constraint:} Constraint~\eqref{eq:lipsch-fair} is an ex-ante condition on the probability $x$ that similar agents (as indicated by the distance function $d$) are assigned to facilities that guarantee a similar value (as indicated by the vector $w$). Once the assignment is realized, similar agents may obtain very different values. An ex-post fairness condition is however not reasonable to impose, as it is easy to see that in some cases it can only be guaranteed by leaving all agents unmatched.
\item \textbf{Resource Constraint:} Constraint~\eqref{eq:resource-ex-post} is an ex-post condition on the realized allocation $\tilde x$. Note that the benchmarks are only required to satisfy the corresponding constraint in expectation over $x$. 
\item \textbf{Assumptions:}
In general, the assumptions we make on the setting are standard and nonrestrictive (see, e.g., \citet{balseiro2020dual,bansak2025dynamic}). For the distance function, we connect to the input weight vector and resource consumption vectors to ensure that agents with similar characteristics must be given similar outcomes; in this way, we can prevent unfair discrimination based on outside factors. 

We also relate the distance function to the $\ell_\infty$-norm of the weight vector as a matter of sanity: alternatives such as the $\ell_1$- and $\ell_2$-norms encounter issues with dimensionality, where as the number of facilities increases, the distance between similar and dissimilar agents becomes less differentiated. Furthermore, the $\ell_1$- and $\ell_2$-norms are not robust to symmetries such as splitting a facility into two identical facilities with half capacity. For instance, consider a setting with a single facility, and two types of agents with weight vectors $w_{u_1} = [0.9]$ and $w_{u_2} = [0.5]$. The assumptions on the distance function require that $d(u_1, u_2)\ge ||w_{u_1} - w_{u_2}||_\infty = 0.4$. If we create another instance by splitting the facility into two identical facilities with half as many resources, the agents have weight vectors $w'_{u_1} = [0.9\ 0.9]$ and $w'_{u_2} = [0.5\ 0.5]$. The distance function constraints remain unchanged. However, note that $||w_{u_1} - w_{u_2}||_1\ne ||w'_{u_1} - w'_{u_2}||_1$ and $||w_{u_1} - w_{u_2}||_2\ne ||w'_{u_1} - w'_{u_2}||_2$; if we related $d$ to the $\ell_1$- or $\ell_2$-norm instead of the $\ell_\infty$-norm, the constraints would change despite the revised instance intuitively being symmetric to the original instance.
\end{itemize}

\subsection{Examples}\label{sec:examples}

We next present examples that illustrate our setting and the relevance of the various constraints. We start with one highlighting the fact that fairness constraints will not be naturally satisfied by profit-maximizing algorithms.

\begin{example}\label{ex:fairness-matters}
    \textbf{Importance of fairness constraints}. Consider a refugee resettlement setting with two types of agents $\{u_1, u_2\}$ representing refugees, and two facilities $\{v_1, v_2\}$ representing host cities. There are two types of resources $\{n_1, n_2\}$, corresponding to houses at each host city. There are $S = 100$ agents per batch and $T = 50$ batches. The probability of each agent type for each agent in each batch is $1/2$. The weight vectors, representing the probability of finding employment for each agent at each host city, is given by
    \[w_{u_1} = \begin{bmatrix}
        0.7 \\ 0.3
    \end{bmatrix}\qquad w_{u_2} = \begin{bmatrix}
        0.7-\epsilon \\ 0.3+\epsilon
    \end{bmatrix},\]
    for some small $\epsilon>0$. The resource consumption vectors are given by
    \[b_{u_1} = \begin{bmatrix}
        1 & 0 \\
        0 & 1
    \end{bmatrix}\qquad b_{u_2} = \begin{bmatrix}
        1 & 0 \\
        0 & 1
    \end{bmatrix}.\]
    Resource capacity is given by $B_{n_1} = B_{n_2} = ST/2 = 2500$. The offline unfair fluid linear program can be written as
    \begin{align*}
        \max \quad & \sum_{t=1}^{50} \sum_{s=1}^{100} \sum_{v \in \{v_1, v_2\}} w_{ts, v} x_{ts, v} \\
        \text{s.t.} \quad & \sum_{t=1}^{50} \sum_{s=1}^{100} x_{ts, v} \le 2500 \quad \forall v \in \{v_1, v_2\} \\
        & \sum_{v \in \{v_1, v_2\}} x_{ts, v} \le 1 \quad \forall t, s \\
        & x_{ts, v} \ge 0 \quad \forall t, s, v.
    \end{align*}
    Since agents have the same resource requirements, the unique optimal solution is to match agents of type $u_1$ to $v_1$, and match agents of type $u_2$ to $v_2$, namely
    \[x_{ts} = \begin{bmatrix}
            1 \\
            0
        \end{bmatrix}\text{ if } u_{ts} = u_1 \text{ and }\begin{bmatrix}
            0 \\
            1
        \end{bmatrix}\text{ if } u_{ts} = u_2,\]
    (with minor adjustments for the realized agents appearances, if there are not exactly 2500 of each type). Classical online allocation algorithms without consideration for fairness will also learn this rule. For example, classical online stochastic gradient descent (e.g. \citet{agrawal2014fast}) will learn the optimal dual variables, or shadow prices, for the dual LP
    \begin{align*}
        \min \quad & 2500 \lambda_{v_1} + 2500 \lambda_{v_2} + \sum_{t=1}^{50} \sum_{s=1}^{100} \mu_{ts} \\
        \text{s.t.} \quad & \lambda_v + \mu_{ts} \ge w_{ts, v} \quad \forall t, s, v \in \{v_1, v_2\} \\
        & \lambda_{v_1}, \lambda_{v_2}, \mu_{ts} \ge 0,
    \end{align*}
    where $\lambda_{v_1}, \lambda_{v_2}$, and $\mu_{ts}$ for $t\in [T], s\in [S]$ are the dual variables. The optimal solution will be approximately $\lambda_{v_1} = 0.4$, $\lambda_{v_2} = 0$, $\mu_{ts} = 0.3$ when $u_{ts} = u_1$, and $\mu_{ts} = 0.3+\epsilon$ when $u_{ts} = u_2$. This results in all agents of type $u_1$ being assigned to $v_1$, and all agents of type $u_2$ being assigned to $v_2$. We can also see, though, that this outcome is highly unfair for agents of type $u_2$: they have very similar weight and resource consumption characteristics to type $u_1$, but are completely prevented from matching with the more desirable host city $v_1$. Now consider the additional fairness constraints
    \[d(u_{ts_1}, u_{ts_2}) \ge \gamma (\mathbb E[w_{ts_1}^\top \tilde x_{ts_1}] - \mathbb E[w_{ts_2}^\top \tilde x_{ts_2}]) \quad \forall t \in [T], \forall s_1, s_2 \in [S],\]
    where $d(u_1, u_2) := ||w_{u_1} - w_{u_2}||_\infty + \underline{d}||b_{u_1} - b_{u_2}||_\infty = \epsilon$ and $\gamma = 1$. Then, the fair offline optimal solution would become
    \[x_{ts} = \begin{bmatrix}
            0.5 + \delta \\
            0.5 - \delta
        \end{bmatrix}\text{ if } u_{ts} = u_1 \text{ and }\begin{bmatrix}
            0.5 - \delta \\
            0.5 + \delta
        \end{bmatrix}\text{ if } u_{ts} = u_2,\]
    where $\delta = 5\epsilon / 4$. Essentially, each agent has approximately a 50\% chance of being assigned to each host city. This solution is a much more equitable outcome that also does not sacrifice significant optimality, with the difference in total welfare being bounded by an $O(\epsilon)$ factor.
\end{example}

From this example, we can see the importance of explicit fairness constraints in online resource allocation problems. We also believe that by guaranteeing provable fairness, a resource allocation algorithm is more likely to find acceptance among practitioners in various fields, especially where human considerations are paramount and headlong advances in uninterpretable artificial intelligence models have fomented an understandable distrust in automated systems. As we show in Section~\ref{sec:offline}, it is always possible to (possibly significantly) change the optimal unfair solution to a fair solution that guarantees a fraction of the total objective value that only depends linearly on $\gamma$.

We show in Appendix~\ref{app:examples} two additional examples showing the importance of Assumptions~\eqref{d_assumption_w} and \eqref{d_assumption_b} on the distance function. The examples demonstrate that without these assumptions, the fairness constraints can lead to highly unintuitive restrictions, or overly punish certain agents and drastically reduce the optimal objective function value.

\section{Bounding the Price of Fairness}\label{sec:offline}

In this section, our goal is to bound the loss in value in the offline fluid allocation problem from the addition of fairness constraints. That is, we want to give a lower bound on the value of $\gfair(\{u_{ts}\}_{t\in [T], s\in [S]})$ as a function of the value of $\unf(\{u_{ts}\}_{t\in [T], s\in [S]})$. To do this, we describe an algorithm that produces a constant approximation to the offline unfair problem, and simultaneously satisfies the fairness constraints. Thus, there exists a fluid solution that satisfies the fairness constraints and reduces the objective value only by a constant factor. This fact implies the following main result of this section. 

\begin{theorem}\label{thm:fairness}
For any $\gamma\ge 1/2$, the value of $\gfair$ is at least a $\Omega(1/\gamma)$ factor of the value of $\unf$.
\end{theorem}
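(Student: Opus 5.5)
The plan is constructive. We take an optimal solution $x^*$ of $\unf$ and turn it into a feasible point of $\gfair$ whose objective value is at least $c\cdot\unf$, with $c=\Theta(1/\gamma)$. The constant in $c$ may depend on the fixed constants of the model ($\underline d$, and possibly $\overline w$). The fairness constraints only couple agents in the same batch, and the resource constraint is additive over agents. So it suffices to modify $x^*$ batch by batch, with two requirements: the modified batch must consume, coordinatewise, no more resources than $x^*$ restricted to that batch, and its value must be at least $c$ times the value of $x^*$ on that batch. Throughout, write $a_s:=w_{ts}^\top x^*_{ts}\in[0,1]$ for a fixed batch $t$.

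\textbf{Step 1 (scaling and classes).} Replace $x^*_{ts}$ by $c\,x^*_{ts}$ with $c\le \min\{1/(2\gamma),\,\underline d/\gamma\}$; this is at most $1$ when $\gamma\ge 1/2$ and $\underline d\le 1/2$, and we shrink $\underline d$ to at most $1/2$ if needed. This keeps feasibility and resources and loses exactly a factor $c$. Next, partition the batch into classes of agents with identical consumption matrix $b$. Two agents in different classes have $d\ge\underline d$ by Assumption~(\ref{d_assumption_b}), while every value is at most $c\le \underline d/\gamma$. Hence constraint~\eqref{eq:lipsch-fair} holds automatically across classes, as long as every value stays in $[0,c]$. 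It therefore remains to enforce fairness inside each class, with values kept in $[0,c]$.

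\textbf{Step 2 (fair targets inside a class).} Inside a class $C$, define the target
\[
h_s:=\max_{s'\in C}\Bigl(c\,a_{s'}-\tfrac{1}{\gamma}d(u_{ts},u_{ts'})\Bigr).
\]
By the triangle inequality (Assumption~(\ref{d_assumption_t})), $h$ is $1/\gamma$-Lipschitz with respect to $d$, so any allocation realizing exactly the values $h_s$ is $\gamma$-fair within $C$. Moreover $h_s\ge c\,a_s$ (take $s'=s$, using assumption (0)) and $h_s\le c$, so realizing these values loses nothing relative to Step 1. Agent $s$ can realize $h_s$ by copying the maximizer's lottery $c\,x^*_{ts'}$. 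By Assumption~(\ref{d_assumption_w}) this gives value at least $c\,a_{s'}-c\,d\ge c\,a_{s'}-d/\gamma$, since $c\le1/\gamma$. Scaling that lottery down slightly hits the target exactly. Agents with $a_s=0$ get only what their neighbors force on them.

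\textbf{Step 3 (resources, the main obstacle).} Naive copying can consume far more resources, since many agents may copy the same high-value lottery. Within a class, consumption depends only on the aggregate facility mass $\sum_{s\in C}x_{ts}\in\mathbb R^V$, because all members share the same $b$. So I would instead run a water-filling procedure. Form the pool $P=c\sum_{s\in C}x^*_{ts}$ of probability mass per facility. Let agents draw from the pool, raising the lowest-target agents first, and let each agent take mass from the facilities where its weight is highest. Whenever the pool is exhausted before all targets are met, lower the common cap: replace $c$ by $c/2$ and re-form the targets. The claim to prove is the following. With one extra constant factor (I expect $c=\Theta(1/\gamma)$ overall, e.g. $1/(4\gamma)$ times a constant depending on $\underline d$), the pool suffices to give every agent $\min\{h_s,\text{cap}\}$. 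Since Lipschitz-fair truncation by a constant cap preserves the Lipschitz property, the resulting allocation is fair. Its value is at least a constant times $c\sum_s a_s$ because the agents holding the original high-value mass keep at least half of it.

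This accounting is where I expect the real difficulty. One must show that every unit of mass moved to a lower-valued agent is charged against the value retained by high-valued agents, and that agents of similar type extract comparable value from the same pool mass (again via the $\ell_\infty$ bound of Assumption~(\ref{d_assumption_w})). Summing over classes and batches then yields $\gfair\ge\Omega(1/\gamma)\,\unf$.
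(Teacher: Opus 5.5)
\textbf{There is a genuine gap in Step 3.} Steps 1 and 2 are sound. Scaling by $c\le\underline d/\gamma$ makes fairness across different-$b$ classes automatic, which is a clean use of Assumption~(\ref{d_assumption_b}), and $h_s$ is a valid $1/\gamma$-Lipschitz target if resources are ignored. But the whole theorem rests on Step 3, which you only state as a claim, and the mechanism you propose there fails because the cap is \emph{common to the class}. Here is a counterexample. Take one class with $b=I$ (one resource per facility), facilities $v_1,v_2$ with capacities $1$ and $m$, and $m$ agents of type $A$ with $w_A=(\epsilon,1)$. Add $n$ agents of types $B_j$ with $w_{B_1}=(1,\epsilon)$ and $w_{B_j}=(1-\epsilon',\epsilon)$ for $j\ge 2$. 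Set $d(B_1,B_j)=\epsilon'$, $d(B_j,B_k)=0$ for $j,k\ge2$, and $d(A,B_j)=1$.

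The unfair optimum is then unique: every $A$ is served at $v_2$, all of $v_1$ goes to $B_1$, and $\unf=m+1$. Every $B_j$ has target $h\approx c$. However, the pool $(c,cm)$ can deliver the $B$'s total value at most $c(1+\epsilon m)$, so the common cap must fall to $O(c(1+\epsilon m)/n)$. In particular, no constant-factor cap suffices. Because the cap is shared, every $A$ agent is truncated to that same level, even though its target $c$ is unaffected by the $B$'s. With $m=n$ and $\epsilon\le 1/n$, your allocation has value $O(c)$ against $\unf=n+1$. Meanwhile, spreading $v_1$ evenly over the $B$'s is $1/2$-fair with value about $n+1$, so it is your procedure that fails, not the theorem.

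Two consequences follow. First, ``agents holding the original high-value mass keep at least half of it'' is false. Second, your fairness argument (truncate a Lipschitz target by a constant) is valid only if every agent attains exactly $\min\{h_s,\mathrm{cap}\}$, which is precisely the unproved claim. Per-cluster or per-agent caps would repair the value bound, but then truncation no longer preserves the Lipschitz property. Resolving this tension is the missing idea.

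\textbf{How the paper resolves it.} The paper gives up exact Lipschitz targets. Its water-filling runs all agents in parallel over time $\tau\in[0,1]$, each filling along its own $y^*_\xi$. An agent switches to copying the best eligible vector only once its gap to a higher agent reaches $d$. Each vector has budget $b_\xi^\top y^*_\xi$, charged to its owner and to all copiers; copiers renounce their own budget at the same rate, and a vector closes when its budget is exhausted. Congestion is therefore throttled locally: only over-copied vectors stall, and in the example above the $A$ agents are untouched.

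Fairness then comes from the dynamics. Once $a_\alpha-a_\beta>d$, agent $\beta$ may copy $\alpha$'s vector, and by Assumption~(\ref{d_assumption_w}) it gains at most $d$ per unit time less than $\alpha$. Total time is $1$, so gaps stay at most $2d$, giving $1/2$-fairness; one then scales by $1/(2\gamma)$. Value is recovered by splitting each copied unit half-half between copier and owner, together with Lemma~\ref{lem:triangle}, which shows a copier sits at least $d$ below the owner. Your same-$b$ class reduction might let a local argument of this kind avoid the paper's $1/\overline b$ loss, but as written the proposal does not prove the theorem.
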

\addtocounter{theorem}{1}

In the previous theorem, the $\Omega(\cdot)$ notation hides a linear dependence on $ 1/\underline{d},  1/\overline{b}$. For ease of exposition, we define $\Xi$ to be the set of pairs of indices $t,s$ for $t\in [T]$ and $s\in [S]$. Thus, $\Xi$ represents the set of all arriving agents. We can now present Algorithm~\ref{alg:water}, beginning with a high-level description.

\begin{itemize}
\item \textbf{Initialization:} Let $y^*\in [0,1]^{\Xi\times V}$ be the offline optimal matching solution for $\unf$. Let $x\in [0,1]^{\Xi\times V}$ be the current matching produced by the algorithm, initialized to $0$. Thus, for $\xi \in \Xi$, $x_\xi$ denotes row $\xi$ of $x$, which gives the fractional amount (i.e., the probability) of each facility being assigned to $\xi$. We will refer to the increase of $x_\xi$ along some vector $y^*_{\zeta}$ as \emph{filling along $y^*_{\zeta}$}. Throughout the algorithm, for each $\xi\in \Xi$, we let $a_\xi := w_\xi^\top x_\xi$ be the total value accumulated by agent $\xi$ so far.

\item \textbf{Iteration:} Each iteration is composed of the following steps:
\begin{itemize}
\item \emph{Sorting:} We sort the agents $\xi_1,\xi_2,\cdots$ in non-increasing value of $a_{\xi}$.
\item \emph{Selection of filling vector:}
For each $i$, we initialize $\zeta_i\gets\xi_i$. That is, by default $\xi_i$ fills along  $y^*_{\xi_i}$, the allocation it receives in the unfair optimal solution. Then, in order of $i=1,\dots, |\Xi|$, an agent $\xi_{i}$ sets $\zeta_i\gets\zeta_j$ for some $j<i$ that maximizes $w^\top_{\xi_i} y^*_{\xi_j}$ if any additional increase along the current vectors would violate the fairness constraint.
    \item \emph{Increase along filling vectors:} For each agent $\xi_i$, the total allocation $x_{\xi_i}$ is augmented by $\delta y^*_{\zeta_i}$ for the largest possible $\delta>0$, until a resource is fully consumed, a fairness constraint becomes binding, or the time runs out. We refer to $\delta$ as the amount of \emph{time} spent filling.
    \item \emph{Resources:} The total amount of resources consumed throughout the algorithm when agents fill along vector $y^*_{\xi_i}$ cannot exceed the total resources $b_{\xi_i}^\top y^*_{\xi_i}$ consumed by agent $\xi_i$ in the unfair optimal allocation. Once agents filling along a vector $y^*_{\xi_i}$ consume the available resources, we say that $\xi_i$ becomes \emph{closed} and is added to the set $\Xi'$. All vectors $y^*_{\xi}$ with $\xi\in \Xi'$ can no longer be filled by any agents.
       \item \emph{Renunciation of capacity:}
        If an agent $\xi_i$ consumes $\delta y^*_{\zeta_i}$ following an agent $\zeta_i\ne \xi_i$, it simultaneously ``renounces'' an equal amount of capacity $\delta y^*_{\xi_i}$ from its original allocation. This capacity counts towards the total amount of resources consumed filling along vector $y^*_{\zeta_i}$ (see previous bullet point), but does not contribute to the final allocation $x_\xi$ or value $a_{\xi}$.
 \end{itemize}
\item \textbf{Termination:} The sum of all values $\delta$ across each iteration of the algorithm performed so far is called \emph{time} and denoted by $\tau$. Note that $||x_{\xi}||_1\le \tau$ at any step of the algorithm. The algorithm terminates at time equals $1$, i.e., when $\tau=1$.
\end{itemize}

\begin{algorithm}
	\SetAlgoNoLine
    \DontPrintSemicolon
    \SetKwComment{Comment}{/* }{ */}
	\KwIn{Facilities $V$, agents $\Xi$, weights $w\in [0,1]^{\Xi\times V}$, distances $d:U\times U\to \mathbb R$, resource matrices $b_\xi\in \mathbb N^{V\times N}$, capacities $B_0\in \mathbb N^N$.}
	\KwOut{Matching $x$.}
    $B'_\xi\gets b_\xi^\top y^*_\xi$\label{alg:water:B'}\Comment*[r]{track resource consumption by agent}
    \For{$\xi\in \Xi$}{
        $x_\xi\gets 0\in [0,1]^{V}$\Comment*[r]{current assignment}
        $a_\xi\gets 0$\Comment*[r]{current value}
    }
	$\tau\gets 0$ \Comment*[r]{variable measuring time}
    $\Xi'\gets \emptyset$\Comment*[r]{the list of closed filling vectors is initialized to empty}
    \While{$\tau < 1$\label{alg:water:while}}{
        $\xi_1,\xi_2,\dots\gets \Xi$ sorted in non-increasing order of $a_\xi$\;
        $\zeta_i\gets \xi_i\quad\forall i\in [|\Xi|]$ \;
        \For{$i=1$ to $|\Xi|$\label{alg:water:for_vector}}{
            $\zeta_i\gets \textsc{SelectFill}(i, \Xi, \Xi', w, \zeta, a)$\Comment*[r]{each agent $\xi_i$ selects best eligible agent $\zeta_i$ to copy filling}
        }\label{alg:water:endfor_vector}
        $\delta\gets \textsc{StepSize}(\Xi,\zeta, w, a, b, B', d, y^*, \tau)$ \Comment*[r]{The largest step size that keeps $x$ fair and feasible is computed}
        $x_{\xi_i}\gets x_{\xi_i} + \delta y^*_{\zeta_i}\quad\forall i\in [|\Xi|]$\label{alg:water:allocated}\Comment*[r]{fill}
        $B'_{\xi_i}\gets B'_{\xi_i} - \delta(b_{\xi_i}+\sum_{j\ne i:\zeta_j = \xi_i}b_{\xi_j})^\top y^*_{\xi_i}$\label{alg:water:update_resources} \Comment*[r]{update resource consumption}
        $a_{\xi_i}\gets w_{\xi_i}^\top x_{\xi_i}\quad\forall i\in [|\Xi|]$\label{alg:water:allocated_weight}\Comment*[r]{update total values}
        $\Xi'\gets \Xi'\cup \{\xi_i\mid \exists k\in [N] \text{ s.t. } (B'_{\xi_i})_k = 0 \text{ and } ((b_{\xi_i}+\sum_{j\ne i:\zeta_j = \xi_i}b_{\xi_j})^\top y^*_{\xi_i})_k > 0\}$\label{alg:water:Xi'} \Comment*[r]{update list of closed filling vectors}
        $\tau\gets\tau + \delta$\;
    }
	\caption{Fair water-filling algorithm}
	\label{alg:water}
\end{algorithm}

\begin{algorithm}
	\SetAlgoNoLine
    \DontPrintSemicolon
    \SetKwComment{Comment}{/* }{ */}
	\KwIn{Index $i\in [|\Xi|]$, agents $\Xi$ sorted in non-increasing order of $a_\xi$, set $\Xi'$ of closed agents, fill indices $\zeta_i$, weights $w\in [0,1]^{\Xi\times V}$, current values $a_\xi$.}
	\KwOut{Fill index $\zeta_i$.}
    $J \gets \{j < i \mid a_{\xi_j} - a_{\xi_i} > d(u_{\xi_i}, u_{\xi_j}) \text{ or } (a_{\xi_j} - a_{\xi_i} = d(u_{\xi_i}, u_{\xi_j}) \text{ and } w_{\xi_j}^\top y^*_{\zeta_j}> w_{\xi_i}^\top y^*_{\zeta_i})\}$\label{alg:selectfill:J}\;
    \If{$\xi_i\not\in \Xi'$\label{alg:selectfill:if_capacity}}{
        $J\gets J\cup\{i\}$\label{alg:selectfill:J2}\;
    }
    \eIf{$J\ne\emptyset$}{
        $j^*\gets \arg\max_{j\in J}w_{\xi_i}^{\top}y^*_{\zeta_{j^*}}$\label{alg:selectfill:j*}\Comment*[r]{each agent tracks best eligible agent to follow}
        $\zeta_i\gets \zeta_{j^*}$\label{alg:selectfill:zeta}\;
    }{
        $\zeta_i\gets 0$\Comment*[r]{Define $y^*_0:= 0$}
    }
	\caption{\textsc{SelectFill}}
	\label{alg:selectfill}
\end{algorithm}

\begin{algorithm}
	\SetAlgoNoLine
    \DontPrintSemicolon
    \SetKwComment{Comment}{/* }{ */}
	\KwIn{Agents $\Xi$ sorted in non-increasing $a_{\xi}$, fill indices $\zeta$, weights $w\in [0,1]^{\Xi\times V}$, current values $a_{\xi}$, resource vectors $b$, current resources $B'$, distances $d:U\times U\to \mathbb R$, matching $y^*$, time $\tau$}
	\KwOut{Step size $\delta\in [0,1]$.}
    $\delta_1\gets$ max value s.t. $a_{\xi_j} + \delta_1 w_{\xi_j}^\top y^*_{\zeta_j} - a_{\xi_i} - \delta_1 w_{\xi_i}^\top y^*_{\zeta_i}\le d(u_{\xi_i}, u_{\xi_j})\ \forall i, j$ s.t. $a_{\xi_j} - a_{\xi_i}< d(u_{\xi_i}, u_{\xi_j})$\label{alg:stepsize:delta1}\Comment*[r]{Activation of fairness constraint}
    $\delta_2\gets$ max value s.t. $B'_{\xi_i} - \delta(b_{\xi_i}+\sum_{j\ne i:\zeta_j = \xi_i}b_{\xi_j})^\top y^*_{\xi_i} \ge 0$\label{alg:stepsize:delta2}\Comment*[r]{Activation of resource constraint}
    $\delta_3\gets 1-\tau$\label{alg:stepsize:delta3}\;
    $\delta\gets\min\{\delta_1,\delta_2,\delta_3\}$\;
	\caption{\textsc{StepSize}}
	\label{alg:stepsize}
\end{algorithm}

Note that the exact running time of Algorithm~\ref{alg:water} is not relevant; if we wanted to find the optimal $\gamma$-fair solution, we could simply solve the linear program directly. Algorithm~\ref{alg:water} is simply used as an analysis tool. We use the following helper results to prove Theorem~\ref{thm:fairness}. Proofs omitted from this section are deferred to Appendix~\ref{sec:app:offline}.

\begin{lemma}\label{lem:Alg:waterfill-terminates}
    Algorithm~\ref{alg:water} terminates in finite time.
\end{lemma}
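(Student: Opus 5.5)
We will show that the while loop of Algorithm~\ref{alg:water} runs only finitely many times. Each pass through the loop performs finitely many operations, since the sort, the $|\Xi|$ calls to \textsc{SelectFill} and the call to \textsc{StepSize} are all finite. The loop ends once $\tau=1$. Each iteration either ends with $\delta=\delta_3$, in which case $\tau$ becomes $1$ and the algorithm stops, or ends because $\delta=\delta_1$ or $\delta=\delta_2$, meaning some fairness or resource constraint has just become binding. It therefore suffices to bound the number of iterations of the second kind.

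\textbf{Resource-binding iterations.} When $\delta=\delta_2$, some coordinate $(B'_{\xi_i})_k$ reaches $0$ while that coordinate is being consumed at a strictly positive rate. Line~\ref{alg:water:Xi'} then adds $\xi_i$ to $\Xi'$. Nothing is ever removed from $\Xi'$, and once $\xi_i\in\Xi'$, \textsc{SelectFill} no longer lets anyone fill along $y^*_{\xi_i}$. The step from $i\in J$ to $\zeta_{j^*}$ only inherits vectors that are still open, so inductively no closed vector is ever chosen. Hence there are at most $|\Xi|$ such iterations.

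\textbf{Fairness-binding iterations.} This is the main obstacle, because a pair $(i,j)$ may in principle become tight, slacken and become tight again. We will argue that between two consecutive resource events, i.e.\ while $\Xi'$ is fixed, the set of tight pairs $P=\{(\xi,\xi') : a_{\xi'}-a_\xi=d(u_\xi,u_{\xi'})\}$ only grows. Once a pair is tight, \textsc{SelectFill} puts the higher agent $j$ into $J$ for the lower agent $i$. The lower agent then adopts a filling vector whose rate $w_{\xi_i}^\top y^*_{\zeta}$ is at least that of $\xi_j$'s vector. For this to hold we need $w_{\xi_i}^\top y^*_{\zeta_j}\ge w_{\xi_j}^\top y^*_{\zeta_j}$ in the relevant tie-breaking case. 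Otherwise the tight-equality clause of $J$, combined with the maximization over $J$, guarantees that the gap $a_{\xi_j}-a_{\xi_i}$ is non-increasing. So a tight pair either stays tight or slackens, and if it slackens, $j$ simply leaves $J$.

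To rule out cycling, we will use a potential: the vector of filling rates sorted by the order of the $a$ values, which should increase lexicographically at each fairness event. An alternative is to observe that the configuration $(\Xi',P,\zeta)$ takes finitely many values, and that within a fixed configuration the dynamics are linear in $\tau$. Each constraint of type $\delta_1$ is then an affine function of $\tau$ that crosses its threshold at most once. A configuration therefore cannot persist across more than $|\Xi|^2$ fairness events before it changes.

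The delicate step is showing that a configuration never repeats. If a full monotonicity argument is too intricate, we will fall back on the following observation. Every iteration with $\delta<\delta_3$ makes a new constraint tight that was strictly slack before, because \textsc{StepSize} only considers pairs with $a_{\xi_j}-a_{\xi_i}<d$. Combined with the monotonicity of $P$ between resource events, each phase has at most $|\Xi|^2$ fairness events. This gives at most $(|\Xi|+1)(|\Xi|^2+1)$ iterations in total, and hence termination in finite time.
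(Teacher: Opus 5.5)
Your handling of the $\delta_3$ and $\delta_2$ iterations matches the paper's. The count of $\delta_1$ (fairness) iterations, which you correctly call the crux, is not established. The fallback bound of $|\Xi|^2$ events per phase, and hence $(|\Xi|+1)(|\Xi|^2+1)$ overall, rests entirely on the claim that the set $P$ of tight pairs only grows while $\Xi'$ is fixed. That claim is never proved, and it is false.

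The step ``the lower agent then adopts a filling vector whose rate is at least that of $\xi_j$'s vector'' does not follow from \textsc{SelectFill}. Maximizing over $J$ only gives $w_{\xi_i}^\top y^*_{\zeta_i}\ge w_{\xi_i}^\top y^*_{\zeta_j}$. By Assumption~\ref{d_assumption_w} this is only $\ge w_{\xi_j}^\top y^*_{\zeta_j}-d(u_{\xi_i},u_{\xi_j})$. Moreover, in the tie case $j$ enters $J$ only if $w_{\xi_j}^\top y^*_{\zeta_j}>w_{\xi_i}^\top y^*_{\xi_i}$. So a pair that has just become tight can immediately become strictly over-tight, which is exactly why Lemma~\ref{lem:water_fair} only yields $1/2$-fairness.

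Concretely, take two agents with identical $b$, $w_{\xi_j}=(1,0.5)$, $w_{\xi_i}=(0.9,0.6)$, $d=0.1$ and unit capacities, so that $y^*_{\xi_j}=(1,0)$ and $y^*_{\xi_i}=(0,1)$. The pair becomes tight at $\tau=1/4$. After that, $\xi_i$ follows $y^*_{\xi_j}$ at rate $0.9$ and the gap grows at rate $0.1$, all before any resource event. A tight pair can also slacken, as your own next sentence concedes. Either way it leaves $P$.

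A slackened pair can be re-tightened by a later $\delta_1$ event. An over-tight pair can drift back below $d$ in mid-iteration and re-tighten later; that crossing is not an event, because \textsc{StepSize} only constrains currently slack pairs. Your lexicographic potential is only asserted (``should increase''), and your configuration-counting alternative explicitly leaves open that configurations repeat. What is missing is a mechanism that prevents a fixed pair from being re-tightened infinitely often, i.e.\ one that rules out an accumulation of ever shorter $\delta_1$ steps before $\tau=1$.

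For comparison, the paper never tracks tight pairs. It regards $a(\tau)$ as a componentwise non-decreasing path in $\mathbb{R}^{\Xi}$ and notes that a $\delta_1$ event is a hit of one of the finitely many hyperplanes $a_{\xi_j}-a_{\xi_i}=d(u_{\xi_i},u_{\xi_j})$. It then concludes from monotonicity and the finiteness of the polyhedral cells these hyperplanes cut out. That route is shorter, but it needs the same missing ingredient. The difference $a_{\xi_j}-a_{\xi_i}$ is not monotone along a componentwise monotone path. In the example, once $y^*_{\xi_j}$ closes, $\xi_j$ stops gaining while $\xi_i$ gains at rate $0.6$, so the gap falls back below $d$. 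Hence cells can be left and re-entered, and monotonicity alone does not bound the number of boundary hits.

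Finally, a smaller point: your $\delta_2$ count, like the paper's, tacitly assumes that closed vectors drop out of $\delta_2$. Taken literally, the self-term $b_{\xi}^\top y^*_{\xi}$ in line~\ref{alg:water:update_resources} keeps $\delta_2=0$ forever once $y^*_\xi$ is closed on a coordinate $k$ with $(b_\xi^\top y^*_\xi)_k>0$, as happens in the example. That reading should be made explicit.
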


The following three lemmas imply that the output of Algorithm~\ref{alg:water} is feasible for $\gfair(\{u_{ts}\}_{t\in [T], s\in [S]})$ with $\gamma=1/2$.

\begin{lemma}\label{lem:water_fsbl}
    The allocation $x$ returned by Algorithm~\ref{alg:water} satisfies the resource constraint. 
\end{lemma}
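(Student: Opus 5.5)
The plan is a per-filling-vector accounting argument. For each $\xi\in\Xi$, I will show that the total resource consumption charged to the vector $y^*_\xi$ over the whole run, whether by actual filling or by renunciation, never exceeds the budget $B'_\xi$ initialized at line~\ref{alg:water:B'}, namely $b_\xi^\top y^*_\xi$. Summing over $\xi$, the actual consumption $\sum_\xi b_\xi^\top x_\xi$ is at most the total charged consumption, which in turn is at most $\sum_\xi b_\xi^\top y^*_\xi\le B_0$. The last inequality holds because $y^*$ is feasible for $\unf$.

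The first step is to decompose the final allocation. Every increment of $x_{\xi_i}$ has the form $\delta y^*_{\zeta_i}$, so $x_{\xi_i}=\sum_{\text{iterations}}\delta\, y^*_{\zeta_i}$, and the corresponding resource usage $\delta\, b_{\xi_i}^\top y^*_{\zeta_i}$ is attributed to the vector $y^*_{\zeta_i}$.

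The second step is to match each increment to the update at line~\ref{alg:water:update_resources}. For a fixed iteration and a fixed vector $\eta=\xi_k$, the decrement of $B'_\eta$ is
\[
\delta\Bigl(b_\eta+\sum_{j\neq k:\zeta_j=\eta}b_{\xi_j}\Bigr)^\top y^*_\eta .
\]
This covers every agent $\xi_j$ with $j\neq k$ that follows $\eta$. It also covers $\eta$ itself, either because it fills along its own vector or because it renounces $\delta y^*_\eta$ while following someone else. The key check is therefore that the actual consumption of each follower, $\delta\, b_{\xi_j}^\top y^*_{\zeta_j}$ with $\zeta_j=\eta$, is exactly one of the terms charged to $B'_\eta$. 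It is, since it equals $\delta\, b_{\xi_j}^\top y^*_\eta$. Agents with $\zeta_i=0$ consume nothing.

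The third step is nonnegativity of the budgets. \textsc{StepSize} chooses $\delta\le\delta_2$, so every $B'_\eta$ remains componentwise $\ge 0$ after each iteration. A closed vector ($\eta\in\Xi'$) is never selected again, because \textsc{SelectFill} only admits $\xi_i$ itself when $\xi_i\notin\Xi'$. I also need that $\zeta_{j}$ for $j<i$ is never a closed index; this holds inductively, since each $\zeta_j$ is either its own unclosed index or copied from an earlier one. Hence closed budgets are not decreased further. Telescoping over iterations, the total charge to $\eta$ is at most $b_\eta^\top y^*_\eta$, and summing over $\eta$ finishes the argument.

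The main obstacle is the case where a budget $B'_\eta$ is exhausted in some coordinate $k$ while its current rate in that coordinate is positive, yet other coordinates of the same vector are still usable. I expect the closure rule at line~\ref{alg:water:Xi'} to handle this, because it closes $\eta$ as soon as any coordinate that is actually being drawn hits zero. A related subtlety is a coordinate with zero rate, which stays untouched and so causes no violation. Finally, the renunciation terms only add to the charge without adding to $x$, so they can only help the inequality.
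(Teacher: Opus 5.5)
Your proposal is correct and follows essentially the same argument as the paper: the budgets $B'_\xi$ start at $b_\xi^\top y^*_\xi$, and the decrement in line~\ref{alg:water:update_resources} covers every follower's actual consumption, with renunciation only adding extra charge; then $\delta\le\delta_2$ keeps all budgets nonnegative, and summing gives $\sum_\xi b_\xi^\top x_\xi\le\sum_\xi b_\xi^\top y^*_\xi\le B_0$. One small remark is that you do not need the claim that closed budgets are never decreased again, and it is not accurate for the algorithm as literally written, since line~\ref{alg:water:update_resources} still subtracts the self/renunciation term $\delta b_\xi^\top y^*_\xi$ from closed budgets; your telescoping step only uses nonnegativity, which $\delta_2$ already guarantees.
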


\begin{lemma}\label{lem:water_matching}
    The allocation $x$ returned by Algorithm~\ref{alg:water} satisfies the matching constraint.
\end{lemma}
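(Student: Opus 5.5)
We need to show that for every agent $\xi$, the output satisfies $\sum_{v\in V}(x_\xi)_v\le 1$ and $x_\xi\ge 0$. The approach is to track how $\|x_\xi\|_1$ grows during the algorithm and compare it with the elapsed time $\tau$.

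\textbf{Nonnegativity.} This is immediate. Initially $x_\xi=0$. The only update to $x_\xi$ is on line~\ref{alg:water:allocated}, which adds $\delta y^*_{\zeta_i}$. Here $\delta\ge 0$, because \textsc{StepSize} returns the minimum of three nonnegative quantities: $\delta_3=1-\tau\ge 0$ within the while loop, and $\delta_1,\delta_2\ge 0$ because the current iterate is feasible for the defining inequalities. Moreover, $y^*_{\zeta_i}\ge 0$, since either $y^*$ is a feasible solution of $\unf$ or $\zeta_i=0$, in which case $y^*_0:=0$. Hence $x_\xi\ge 0$ holds throughout.

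\textbf{Bound on total mass.} We prove by induction on the iterations the invariant $\|x_\xi\|_1\le\tau$ for every $\xi\in\Xi$. This is the claim noted informally in the description of the algorithm.
\begin{itemize}
\item At the start, $\|x_\xi\|_1=0=\tau$.
\item In one iteration, $\|x_{\xi_i}\|_1$ increases by exactly $\delta\|y^*_{\zeta_i}\|_1$. Since $y^*$ satisfies the matching constraint of $\unf$, and $y^*_0=0$, we have $\|y^*_{\zeta_i}\|_1=\sum_v (y^*_{\zeta_i})_v\le 1$. So the increase is at most $\delta$.
\item Meanwhile, $\tau$ increases by exactly $\delta$. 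The invariant is therefore preserved.
\end{itemize}

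\textbf{Conclusion.} By Lemma~\ref{lem:Alg:waterfill-terminates}, the algorithm terminates. The while loop only runs while $\tau<1$, and each step has $\delta\le\delta_3=1-\tau$, so $\tau\le 1$ at termination. Combined with the invariant, $\sum_v(x_\xi)_v\le\tau\le 1$ for every agent.

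\textbf{Main obstacle.} The step requiring the most care is verifying that every filling vector is a row of $y^*$ (or the zero vector), so that the bound $\|y^*_{\zeta_i}\|_1\le 1$ applies. The renunciation of capacity might seem to interfere, but it only affects the resource counters $B'$ on line~\ref{alg:water:update_resources}, never $x$. Likewise, \textsc{SelectFill} always sets $\zeta_i$ to some $\zeta_{j^*}$ with $j^*\in J$, or to $0$. An induction over $i$ then shows that each $\zeta_i$ is an index in $\Xi\cup\{0\}$, which gives the required bound on each step.
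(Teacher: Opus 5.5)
Your proof is correct and is essentially the paper's argument: the paper writes $x_{\xi}=\sum_k \delta_k y^*_k$ with $\|y^*_k\|_1\le 1$ and $\sum_k\delta_k=1$, and your invariant $\|x_\xi\|_1\le\tau$ is the iteration-by-iteration form of that same sum. Your explicit nonnegativity check (via $\delta\ge 0$ and $y^*_{\zeta_i}\ge 0$) is a small addition that the paper leaves implicit.
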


\begin{lemma}\label{lem:water_fair}
    The allocation $x$ returned by Algorithm~\ref{alg:water} is 1/2-fair.
\end{lemma}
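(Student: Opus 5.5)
We have to show that for every pair of agents $\xi,\xi'$ in the same batch (in fact any pair), the final values satisfy $a_\xi - a_{\xi'} \le 2\,d(u_\xi,u_{\xi'})$, which is exactly the $1/2$-fairness constraint. The plan is to prove a stronger invariant by induction over the iterations of Algorithm~\ref{alg:water}: at every time $\tau$ and for all pairs, $a_\xi - a_{\xi'} \le d(u_\xi,u_{\xi'})$ whenever the pair has not yet ``activated'', and once a fairness constraint becomes binding it stays within $d$. Between iterations, the values change linearly in $\delta$, so it suffices to control the rates $w_{\xi_i}^\top y^*_{\zeta_i}$ on the active pairs and to use the choice of $\delta_1$ in \textsc{StepSize} for the inactive ones.

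\textbf{Step 1: inactive pairs.} For pairs with $a_{\xi_j}-a_{\xi_i}<d(u_{\xi_i},u_{\xi_j})$, the computation of $\delta_1$ in \textsc{StepSize} guarantees that after the step the gap is at most $d$. So these pairs can only reach equality, never exceed $d$ in a single step.

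\textbf{Step 2: tight pairs.} For a pair with $a_{\xi_j}-a_{\xi_i}\ge d(u_{\xi_i},u_{\xi_j})$ and $j<i$, \textsc{SelectFill} puts $j$ in the eligible set $J$ for $\xi_i$, so $\xi_i$ fills along the vector maximizing $w_{\xi_i}^\top y^*_{\zeta_{j^*}}$ over $J$, in particular at rate at least $w_{\xi_i}^\top y^*_{\zeta_j}$. The leader $\xi_j$ fills at rate $w_{\xi_j}^\top y^*_{\zeta_j}$. By Assumption~\eqref{d_assumption_w}, $w_{\xi_j}^\top y - w_{\xi_i}^\top y \le \|w_{\xi_j}-w_{\xi_i}\|_\infty \|y\|_1 \le d(u_{\xi_i},u_{\xi_j})$ since $\|y^*_\zeta\|_1\le 1$. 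Hence over a step of length $\delta$ the gap grows by at most $\delta\, d(u_{\xi_i},u_{\xi_j})$. Summing over all steps, with total time $\tau\le 1$, the gap accumulated after the pair becomes tight is at most $d$. Combined with the gap of at most $d$ at the moment it became tight, we obtain a final gap of at most $2d$, i.e.\ $\tfrac12$-fairness.

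\textbf{Main obstacle.} The delicate part is that $\xi_j$ itself may switch filling vectors, or its vector $y^*_{\zeta_j}$ may become closed (added to $\Xi'$), in which case $\xi_i$ cannot copy it. I would handle this by observing that closure freezes the vector for everyone: if $\zeta_j$ is closed, then $\xi_j$ also cannot fill along it. Either $\xi_j$ switches to another vector, which by the transitive structure of \textsc{SelectFill} (indices only copy from earlier indices, and $\zeta_j$ is itself some $\zeta_k$ with $k\le j$) remains available to $\xi_i$ through $J$, or it stops increasing. I would formalize this by induction on the sorted index $i$: every vector used by some earlier $\xi_j$ at a tight gap is in the candidate set of $\xi_i$. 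I would also use the triangle inequality (Assumption~\eqref{d_assumption_t}) to handle reorderings in the sort, where a tight pair's roles might swap or an intermediate agent mediates. A secondary check is the tie-breaking clause in line~\ref{alg:selectfill:J} for exact equality, which ensures that pairs exactly at distance $d$ are treated as tight.
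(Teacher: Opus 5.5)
Your Step~2 is the paper's argument. From the last time $\tau^*$ at which the gap is at most $d$, the leader's index lies in the follower's set $J$. Assumption~\eqref{d_assumption_w} with $\|y^*_{\zeta}\|_1\le 1$ then bounds the rate difference by $d$ per unit time, and total time at most $1$ adds at most $d$, for a final gap of at most $2d$.

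Your ``main obstacle'' is moot. $J$ stores indices, and $\xi_i$ copies whatever vector $\zeta_j$ the leader uses in the current pass, which is never closed, so neither an induction on $i$ nor the triangle inequality is needed. Your remark on the tie clause is inaccurate, however: at exact equality it does not force $j\in J$. When it fails, it is $\xi_i$'s own open vector that keeps the gap from growing.
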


The last ingredient for the proof of Theorem~\ref{thm:fairness} is to show that Algorithm~\ref{alg:water} is approximately optimal, with respect to unfair optimal $y^*$.

\begin{lemma}\label{lem:water_optimal}
    If $\overline b = 1$, the allocation $x$ returned by Algorithm~\ref{alg:water} is $\min\{\frac{1}{2}\underline{d}, \frac{3}{8}\}$-optimal. If $\overline b\ge 2$, the allocation $x$ returned by Algorithm~\ref{alg:water} is $\min\{\frac{1}{2}\underline{d}, \frac{1}{2\overline b}\}$-optimal.
\end{lemma}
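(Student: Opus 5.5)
We plan to compare the final value $\sum_\xi a_\xi$ with $\sum_\xi w_\xi^\top y^*_\xi = \unf$ by a charging argument over the continuous time $\tau\in[0,1]$ of Algorithm~\ref{alg:water}. For every agent $\xi$ we look at the rate at which $a_\xi$ grows, namely $w_\xi^\top y^*_{\zeta}$ where $\zeta$ is its current filling index. We then charge the loss $w_\xi^\top y^*_\xi - (\text{accumulated }a_\xi)$ to one of three causes. Cause (i) is that $\xi$ fills along its own vector for the whole time. Cause (ii) is that $\xi$ switches to following some $\zeta_j$. Cause (iii) is that $\xi$ stops because its vector, or the one it follows, is closed. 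Lemmas~\ref{lem:Alg:waterfill-terminates}--\ref{lem:water_matching} guarantee that the process is well defined and feasible, so only the value bound remains.

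\textbf{Step 1 (agents following others).} When $\xi_i$ follows $\zeta_j$, this happens because a fairness constraint $a_{\xi_j}-a_{\xi_i}\ge d(u_{\xi_i},u_{\xi_j})$ is tight. The selection rule chooses the $j$ maximizing $w_{\xi_i}^\top y^*_{\zeta_j}$, and $J$ always contains $i$ itself while $\xi_i$ is open, so the rate of $\xi_i$ is at least $w_{\xi_i}^\top y^*_{\xi_i}$ as long as $\xi_i\notin\Xi'$. Hence switching never decreases the rate, and an agent that is never blocked by a closure gains at least its full unfair value.

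\textbf{Step 2 (charging closures).} A vector $y^*_{\zeta}$ closes only when the resources $b_\zeta^\top y^*_\zeta$ it was budgeted have been consumed. This consumption comes from $\zeta$ itself together with its followers, either directly or through renounced capacity. We split into two cases according to $b$.

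In the first case, the followers have $b_{\xi}=b_\zeta$, which is forced when $d(u_\xi,u_\zeta)<\underline d$ by assumption~(\ref{d_assumption_b}). Then every unit of budget spent produces value at rate at least $w_\xi^\top y^*_\zeta\ge w_\zeta^\top y^*_\zeta - d(u_\xi,u_\zeta)$ by assumption~(\ref{d_assumption_w}) and the $\ell_\infty$ bound, since $\|y^*_\zeta\|_1\le1$. Renunciation means that each follower gives up at most one unit of its own budget per unit of value it receives. So closing a vector "pays" for roughly half the budget in realized value, which should give the factor $1/2$ (or $3/8$ when $\overline b=1$, after a more careful accounting in which $y^*$ rows have integral resource use).

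In the second case, where $b$ differs, the distance is at least $\underline d$. A closed agent is then held back only until the gap reaches $\underline d$. Because all weights lie in $[0,1]$, such an agent still ends with $a_\xi\ge$ (leader value) $-$ something. We convert this into the bound $a_\xi\ge \tfrac12\underline d\, w_\xi^\top y^*_\xi$ by comparing the agent with the leader at time $1$, using that the leader had $a\ge d$ above it when the constraint bound.

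\textbf{Step 3 (combining).} Summing the per-agent or per-vector bounds gives $\sum a_\xi\ge c\cdot\unf$ with $c=\min\{\tfrac12\underline d,\tfrac38\}$ or $\min\{\tfrac12\underline d,\tfrac1{2\overline b}\}$. The dependence on $\overline b$ comes from the fact that a follower consuming $\delta b_\xi^\top y^*_\zeta$ may use up to $\overline b$ times more of $\zeta$'s budget per unit of value.

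The main obstacle is Step 2: cleanly bounding the value lost when a vector closes early because of many followers. The difficulty is that renounced capacity and follower consumption are interleaved with re-sorting across iterations. I would first try a potential-function argument that tracks, for each open vector $\zeta$, the remaining budget against the value still owed to $\zeta$. The claim to establish is that every unit of budget drained yields at least $\min\{\tfrac12, \tfrac1{2\overline b}\}$ units of value to some agent.
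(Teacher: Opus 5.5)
Your outline stops where the lemma's content lies. Step~2 is only sketched, and it is missing two ideas it needs.

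\textbf{Double counting.} Summing ``per-agent or per-vector bounds'' counts the same value twice. The value a follower $\xi$ earns along $y^*_\zeta$ is what your Step~1 uses to show $a_\xi\ge w_\xi^\top y^*_\xi$. It is also what you want to use to compensate $\zeta$ for its early closure. The paper resolves this with an explicit re-partition $a'$ of the total value. Base filling is credited to the filler. Every outside filling of $\xi$ along $y^*_\zeta$ is credited half to $\xi$ and half to the owner $\zeta$. Then $\sum a'=\sum a$, per-agent bounds on $a'$ can legitimately be summed, and the halving produces the factors $\tfrac12$.

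\textbf{Controlling the distance.} Your estimate $w_\xi^\top y^*_\zeta\ge w_\zeta^\top y^*_\zeta-d(u_\xi,u_\zeta)$ says nothing unless $d(u_\xi,u_\zeta)$ is controlled, and in your first case nothing makes it small. $b_\xi=b_\zeta$ is implied by $d<\underline d$, not conversely. Even $d<\underline d$ allows $d$ comparable to $W:=w_\zeta^\top y^*_\zeta$.

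What controls $d$ is the owner's own accumulated value. Membership in $J$ only gives a gap to the immediate predecessor $\xi_j$, but $\xi$ may reach $y^*_\zeta$ through a chain $\zeta_i\gets\zeta_{\ell_1}\gets\cdots$. The triangle inequality along that chain (Lemma~\ref{lem:triangle}) gives $a_\zeta-a_\xi\ge d(u_\xi,u_\zeta)$, hence $d(u_\xi,u_\zeta)\le a_\zeta$. So either the followers earn close to $W$ per unit of $\zeta$'s budget, or $\zeta$ already holds a large value itself.

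This means your proposed potential-function claim, that every unit of drained budget yields at least $\min\{\tfrac12,\tfrac1{2\overline b}\}$ value, is false. A follower at distance nearly $W$ from $\zeta$ earns almost nothing per unit it drains, and only $\zeta$'s own large $a_\zeta$ covers the loss.

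Quantitatively, let $\alpha,\beta,\lambda$ be the units of $\zeta$'s budget used by base filling, renunciation, and followers, so that $\alpha+\beta+\lambda\ge 1$. Let $\phi\ge\tfrac12\beta W$ be $\zeta$'s half-share of its own outside value. Then $a'_\zeta\ge\alpha W+\phi+\tfrac{\lambda}{2\overline b}(W-2\phi-\alpha W)$. Substituting the smallest admissible $\phi$ and $\alpha=0$ gives $\tfrac{\overline b-\overline b\lambda+\lambda^2}{2\overline b}W$. This is $\tfrac38 W$ at $\lambda=\tfrac12$ when $\overline b=1$, and $\tfrac1{2\overline b}W$ at $\lambda=1$ when $\overline b\ge 2$. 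So $\tfrac38$ comes from this quadratic trade-off, not from integrality of the rows of $y^*$, which are fractional.

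\textbf{Your second case.} It attaches the $\underline d$ bound to the wrong agent. The shortfall to cover is that of the owner $\zeta$ of the drained vector. The right statement is $a_\zeta\ge a_\zeta-a_\xi\ge d(u_\xi,u_\zeta)\ge\underline d$, again via the chain lemma, whence $a'_\zeta\ge\tfrac12 a_\zeta\ge\tfrac12\underline d\,W$. Bounding the follower's value relative to its leader does not address $\zeta$'s loss.

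That said, splitting according to whether \emph{some} follower of $y^*_\zeta$ has $b_\xi\ne b_\zeta$ is a good organizing device, arguably cleaner than the paper's. The paper's Case~1 isolates only a zero denominator. It then assumes a follower drains $\zeta$'s budget at most $\overline b$ times faster, which is not automatic when $y^*_\zeta$ is fractional. Under your split, the first case is handled entirely by the $\underline d$ bound above. In the second case units equal time, so the $\overline b=1$ computation applies. Completed this way, your route would give $\min\{\tfrac12\underline d,\tfrac38\}$ for every $\overline b$, which implies the statement. This also shows that your remark that the $\overline b$-dependence ``comes from'' followers draining faster is inconsistent with your own split. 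As written, though, the proposal names the obstacle rather than resolving it.
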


We can now prove Theorem~\ref{thm:fairness}.

\begin{proof}[Proof of Theorem~\ref{thm:fairness}]
    Fix any $\gamma>0$, and let $x$ be the allocation returned by Algorithm~\ref{alg:water}. If $\gamma \le 1/2$, then by Lemmas~\ref{lem:water_fsbl}, \ref{lem:water_matching}, \ref{lem:water_fair}, and \ref{lem:water_optimal}, $x$ is a fractional matching that satisfies the resource constraints, is $\gamma$-fair, and is $O(1)$-optimal. For any $\gamma > 1/2$, set $x^*= \frac{x}{2\gamma}$. Then, $x^* < x$, and so $x^*$ is also a fractional matching that satisfies the resource constraints. By Lemma~\ref{lem:water_fair}, $x$ satisfies
    \[d(u_{\xi_1}, u_{\xi_2}) \ge \frac{1}{2} (w_{\xi_1}^\top x_{\xi_1} - w_{\xi_2}^\top x_{\xi_2})\nonumber\]
    for each $\xi_1, \xi_2\in \Xi$. Since $x = 2\gamma x^*$, it follows that
    \[d(u_{\xi_1}, u_{\xi_2}) \ge \gamma (w_{\xi_1}^\top x^*_{\xi_1} - w_{\xi_2}^\top x^*_{\xi_2})\nonumber\]
    giving $\gamma$-fairness. Finally, by Lemma~\ref{lem:water_optimal}, $x$ is $O(1)$-optimal, which means $x^*$ is $O(1/\gamma)$-optimal.
\end{proof}

\section{Online Fair Algorithm}\label{sec:algorithm}

  In this section, we present the online fair resource allocation algorithm. The first step is to rewrite the problem in a more convenient form. Note that, due to the fairness constraints, if two agents of the same type arrive in the same batch, then any fair algorithm must give each agent the same (fractional) allocation, since their distance is equal to $0$. So, in each time period an algorithm can simply decide how to allocate to each type, instead of each one of $S$ agents. Furthermore, the space of feasible allocations over the types is then bounded by the fairness constraints.

Let $\mathcal X\subseteq [0,1]^{U\times V}$ be the region given by
$$\sum_{v\in V} (X^\top e_u)_v\le 1 \qquad\forall u\in U$$
$$X\ge 0$$
That is, $\mathcal X$ is the space of all allocations of types to facilities. Let $\Pi$ be the set of sequences of $S$ elements from $U$, with replacement. We can therefore think of $\pi \in \Pi$ as the realization of one batch of arrivals. For each $\pi\in \Pi$, let $\mathcal X_\pi\subseteq \mathcal X$ be the region obtained by restricting $\mathcal X$ via the constraints
\[\gamma (w_{\pi(i)}^\top X^\top e_{\pi(i)} - w_{\pi(j)}^\top X^\top e_{\pi(j)}) - d(u_{\pi(i)},u_{\pi(j)})\le 0\qquad \forall i,j\in [S]. \nonumber\]
The previous constraints restrict the allocations in $\mathcal X$ to those that satisfy the fairness constraint for the batch corresponding to $\pi$.
Similarly, at time period $t\le T$, let $\mathcal X_t\subseteq \mathcal X$ be the region obtained by restricting $\mathcal X$ via the constraints
\[\gamma (w_{ti}^\top X^\top e_{ti} - w_{tj}^\top X^\top e_{tj}) - d(u_{ti},u_{tj})\le 0\qquad \forall i,j\in [S], \nonumber\]
where recall that we abbreviate $(w_{ts}, b_{ts}, e_{ts}):= (w_{u_{ts}}, b_{u_{ts}}, e_{u_{ts}})$ for $t \in [T]$ and $s \in [S]$.
With these new objects, the offline fair fluid problem $\flu(F)$ for a probability distribution $F$ can be written as:
\begin{align}
    \flu(F) = \mathbb E_{F}\begin{bmatrix}
        \max_{X_t\in \mathcal X_t} & \sum_{t=1}^T\sum_{s=1}^Sw_{ts}^\top X_{t}^\top e_{ts} \\
        \text{s.t.} & \sum_{t=1}^T\sum_{s=1}^S b_{ts}^\top X_t^\top  e_{ts}\le TS\rho
    \end{bmatrix} \label{eq:fair_primal}
\end{align}
where we recall $\rho:= B_0/(TS)$. For $c\in \mathbb R^{U\times V}$, define the conjugates
\[w^*_{\pi}(c) = \max_{X\in \mathcal X_\pi}\left\{\sum_{s=1}^S w_{\pi(s)}^\top X^\top e_{\pi(s)} - c_{\pi(s)}^\top X^\top e_{\pi(s)}\right\}, \nonumber\]
\[w^*_{t}(c) = \max_{X\in \mathcal X_t}\left\{\sum_{s=1}^S w_{ts}^\top X^\top e_{ts} - c_{ts}^\top X^\top e_{ts}\right\}. \nonumber\]
Let $p_{\pi}$ be the probability of $\pi$ under $F$. For $\mu\in \mathbb R^{N}$ define the offline fair dual problem $D(\mu)$ by
\[D(\mu) = \sum_{\pi\in \Pi} p_{\pi}w^*_{\pi}(\mathbf{b}\mu) + S\mu^\top\rho, \nonumber\]
where $(\mathbf{b}\mu)_u:= b_u\mu$. We can show that the dual problem bounds the primal problem.

\begin{lemma}\label{lem:dual}
    For all $\mu\ge 0$, $\flu(F)\le TD(\mu)$.
\end{lemma}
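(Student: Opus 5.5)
This is weak duality via Lagrangian relaxation, applied realization by realization and then averaged over $F$. Fix $\mu\ge 0$ and a realization $\{u_{ts}\}$ of the arrivals. Let $X_1,\dots,X_T$ be an optimal solution of the inner maximization in \eqref{eq:fair_primal}, so $X_t\in\mathcal X_t$ and $\sum_{t,s} b_{ts}^\top X_t^\top e_{ts}\le TS\rho$.

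\textbf{Step 1: relax the resource constraint.} Since $\mu\ge 0$ and the resource constraint holds, adding the nonnegative term $\mu^\top\bigl(TS\rho-\sum_{t,s} b_{ts}^\top X_t^\top e_{ts}\bigr)$ can only increase the objective:
\begin{align*}
\sum_{t=1}^T\sum_{s=1}^S w_{ts}^\top X_t^\top e_{ts}
&\le \sum_{t=1}^T\sum_{s=1}^S w_{ts}^\top X_t^\top e_{ts} + \mu^\top\Bigl(TS\rho - \sum_{t=1}^T\sum_{s=1}^S b_{ts}^\top X_t^\top e_{ts}\Bigr)\\
&= \sum_{t=1}^T\Bigl(\sum_{s=1}^S \bigl(w_{ts}-b_{ts}\mu\bigr)^\top X_t^\top e_{ts} + S\mu^\top\rho\Bigr).
\end{align*}
The equality uses $\mu^\top b_{ts}^\top X_t^\top e_{ts} = (b_{ts}\mu)^\top X_t^\top e_{ts}$, where $b_{ts}\mu\in\mathbb R^V$. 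This is consistent with $(\mathbf b\mu)_u = b_u\mu$ and with how $c_{ts}$ enters $w_t^*(c)$.

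\textbf{Step 2: decouple the periods.} After relaxation, each summand depends only on $X_t$, and $X_t$ ranges over $\mathcal X_t$. Bounding each summand by its maximum gives, for every realization,
\[
\gfair(\{u_{ts}\}) \le \sum_{t=1}^T w_t^*(\mathbf b\mu) + TS\mu^\top\rho .
\]

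\textbf{Step 3: take expectations.} Take $\mathbb E_F$ of both sides. The batch at time $t$ is a sequence $\pi\in\Pi$, distributed with probability $p_\pi$, because arrivals are i.i.d. When batch $t$ realizes as $\pi$, we have $\mathcal X_t=\mathcal X_\pi$ and $w_t^*(c)=w_\pi^*(c)$. Therefore $\mathbb E_F[w_t^*(\mathbf b\mu)]=\sum_{\pi} p_\pi w_\pi^*(\mathbf b\mu)$ for each $t$. Summing over $t$ yields $\flu(F)\le T\sum_\pi p_\pi w^*_\pi(\mathbf b\mu)+TS\mu^\top\rho = TD(\mu)$.

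\textbf{Main obstacle.} The argument itself is routine. The delicate point is bookkeeping: identifying $w_t^*$ with $w_\pi^*$ for the realized batch. This requires the fairness region $\mathcal X_t$ and the objective to depend on the batch only through its type sequence, which holds by definition. It also requires the inner maximum to be attained, which holds because the feasible set is a compact polytope (it contains $0$, so it is nonempty).
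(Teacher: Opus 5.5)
Your proof is correct and follows essentially the same route as the paper. Both relax the resource constraint with the multiplier $\mu\ge 0$ inside the expectation, bound the relaxed problem by separate per-batch maxima, and use that each batch equals $\pi$ with probability $p_\pi$ to obtain $TD(\mu)$. The paper compresses the per-batch decoupling into ``factoring out $T$''; your Step~2 just spells that out.
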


We defer the proof of Lemma~\ref{lem:dual} to Appendix~\ref{sec:app:algorithm}. In order to describe the fair online allocation algorithm, we first define the Bregman distance for a convex function. 

\begin{definition}[Bregman distance]
    Let $h: \mathbb{R}^N \rightarrow \mathbb{R}$ be any convex (reference) function. The Bregman distance is given by $V_h(x,y) = h(x) - h(y) - \nabla h(y)^\top(x-y)$.
\end{definition}

We can now present Algorithm~\ref{alg:inbatch_lazy}, the fair online allocation algorithm. The algorithm and its analysis are based on the online dual mirror descent algorithm by \citet{balseiro2020dual}. Online mirror descent, a generalization of online gradient descent, is a well-known algorithm in online convex optimization (see, e.g., \citet{srebro2011universality}). In our algorithm, we apply online mirror descent to the dual problem, instead of the primal problem. In principle, if we knew the optimal offline dual variables in advance, we could separate the online primal problem across time periods and compute optimal online allocations. By estimating the optimal dual variables via online dual mirror descent, we can approximate this approach. The main difference from \citet{balseiro2020dual} is that agents arrive in batches, and allocations must be $\gamma$-fair within each batch. We thus adapt the online dual mirror descent algorithm to make decisions for an entire batch at a time, which allows us to enforce the fairness constraints while maintaining low regret.

As it is customary, in Algorithm~\ref{alg:inbatch_lazy}, we let $\mu_0$ be any starting solution, e.g., the zero vector. We moreover define $\mu^{max}\in\mathbb R^{N}$ such that $\mu^{max}_{k} = \frac{\overline w}{\rho_{k}}+1$, and $\overline \mu := ||\mu^{max}||_\infty$. Thus, $\mu^{max}$ is a coordinate-wise upper bound to any feasible dual solution. 

\begin{algorithm}[H]
	\SetAlgoNoLine
    \DontPrintSemicolon
    \SetKwComment{Comment}{/* }{ */}
	\KwIn{Initial dual solution $\mu_0\le\mu^{max}$, number of batches $T$, batch size $S$, available resources $B_0 = TS\rho$, reference function $h$, step size $\eta$, fairness constant $\gamma$, distance metric $d$, agent types $U$.}
	\KwOut{Online matching $\tilde x$.}
    \For{$t = 1,\dots, T$}{
        Receive $(w_{ts}, b_{ts}, e_{ts})\sim F$ for all $s\in [S]$\;
        Compute action:
        \[X^*_{t} \gets \arg\max_{X\in \mathcal X_t} \sum_{s = 1}^{S} w_{ts}^\top X^\top e_{ts} - (b_{ts}\mu_{t-1})^\top X^\top e_{ts}+ S\mu^\top \rho  \nonumber\]\;
        Sample $\tilde x_{ts} \sim X^{*\top}_t e_{ts}$ randomly \;
        Update resources:           
        \[x_{ts} \gets \begin{cases}
            \tilde x_{ts} & \text{if } \sum_{s=1}^{S}b_{ts}^\top\tilde x_{ts} \le B_{t-1} \\
            0 & \text{o.w.}
        \end{cases} \nonumber\]
        \[B_t\gets B_{t-1} - \sum_{s = 1}^{S}b_{ts}^\top x_{ts} \nonumber\]\;
        Update dual variables by stochastic mirror descent:
        \[\tilde g_t \gets\sum_{s=1}^{S}(\rho-b_{ts}^\top X_t^{*\top} e_{ts}) \nonumber\]
        \[\mu_t \gets \arg\min_{\mu\ge 0} \tilde g_t^\top \mu + \frac{1}{\eta} V_h(\mu, \mu_{t-1}) \nonumber\]\;   
    }
	\caption{Online fair resource allocation algorithm}
	\label{alg:inbatch_lazy}
\end{algorithm}

\smallskip 

Since the randomized allocations $X^*_t$ computed in each iteration are within $\mathcal X_t$, by definition they must satisfy the fairness constraint. Note that the fairness constraints for the online problem are applied to the randomized allocations and not the realizations $\tilde x_{ts}\sim X_{t}^{*\top}e_{ts}$. Furthermore, if a realization violates the resource constraints, the allocation is dropped and the agents go unmatched, in order to satisfy both the fairness and resource constraints and to simplify analysis. In practice, more lenient policies can be used. Since at each time $t \in [T]$, $X^*_t \in {\mathcal X}_t$, the following holds.

\begin{corollary}
    Within each batch $t$, the ex-ante allocations $X^*_t$ produced by Algorithm~\ref{alg:inbatch_lazy} are $\gamma$-fair (see~\eqref{eq:lipsch-fair}).
\end{corollary}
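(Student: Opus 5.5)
The statement is essentially a direct consequence of how the algorithm chooses $X^*_t$, so I will prove it by unpacking definitions rather than by any analytic argument. First, I will observe that at iteration $t$ the action $X^*_t$ is defined as a maximizer over $\mathcal X_t$. I need to check that this $\arg\max$ is well defined. The set $\mathcal X_t$ is a polytope: it is cut out of the compact set $\mathcal X\subseteq[0,1]^{U\times V}$ by finitely many linear inequalities. It is also nonempty, since $X=0$ satisfies every constraint: each fairness inequality reduces to $-d(u_{ti},u_{tj})\le 0$, which holds because $d\ge 0$. The objective is linear in $X$, because the term $S\mu^\top\rho$ is constant. Hence a maximizer exists, and any maximizer lies in $\mathcal X_t$.

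Next, I will translate membership in $\mathcal X_t$ into the fairness constraint \eqref{eq:lipsch-fair}. The agent $s$ in batch $t$ is assigned the lottery $x_{ts}:=X^{*\top}_t e_{ts}$, and $\tilde x_{ts}$ is sampled from it. By linearity of expectation, $\mathbb E_\theta[w_{ts}^\top\tilde x_{ts}]=w_{ts}^\top X^{*\top}_t e_{ts}$. The defining inequalities of $\mathcal X_t$ then read
\[
\gamma\bigl(w_{ts_1}^\top X^{*\top}_t e_{ts_1}-w_{ts_2}^\top X^{*\top}_t e_{ts_2}\bigr)\le d(u_{ts_1},u_{ts_2})\qquad\forall s_1,s_2\in[S],
\]
which is exactly \eqref{eq:lipsch-fair} for batch $t$.

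The only point needing care, and the one I would address explicitly, is that fairness is asserted for the ex-ante allocations $X^*_t$ rather than for what is finally implemented. The resource-update step may zero out a realization, setting $x_{ts}=0$ when capacity would be exceeded. This affects only realized outcomes after sampling and does not change $X^*_t$. So the claim holds as stated, for the lotteries chosen within each batch. It does not depend on the dual iterates $\mu_{t-1}$, on the reference function $h$, or on the step size $\eta$, because these enter only the objective and not the feasible region.
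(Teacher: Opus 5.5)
Your proposal is correct and follows the paper's own reasoning. The paper justifies the corollary in one line: $X^*_t$ is chosen in $\mathcal X_t$, whose defining inequalities are exactly the Lipschitz constraints \eqref{eq:lipsch-fair} for batch $t$. Your additional checks are correct and make explicit what the paper leaves implicit: nonemptiness via $X=0$, compactness for existence of the maximizer, linearity of expectation, and the remark that the resource-drop step does not change $X^*_t$.
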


We make the following assumptions about the reference function $h$:
\begin{itemize}
    \item $h$ is coordinate-wise separable, so that $h(\mu) = \sum_{j=1}^N h_j(\mu_j)$, where $h_j$ is a convex univariate function.
    \item $h$ is $\sigma_1$-strongly convex in $\ell_1$-norm for some constant $\sigma_1>0$, so that $h(\mu_1)\ge h(\mu_2) + \nabla h(\mu_2)^\top (\mu_1 - \mu_2) + \frac{\sigma_1}{2}||\mu_1 - \mu_2||_1^2$ for any $\mu_1, \mu_2$. 
    \item $h$ is $\sigma_2$-strongly convex in $\ell_2$-norm for some constant $\sigma_2>0$, so that $h(\mu_1)\ge h(\mu_2) + \nabla h(\mu_2)^\top (\mu_1 - \mu_2) + \frac{\sigma_2}{2}||\mu_1 - \mu_2||_2^2$ for any $\mu_1, \mu_2$.
\end{itemize}

These assumptions are standard in many algorithms based on mirror descent, see, e.g.,~\cite{balseiro2020dual}. An example of a reference function satisfying these assumptions is the squared Euclidean norm, $h(\mu) = \frac{1}{2}||\mu||_2^2$.

\section{Bounding the Price of Learning}\label{sec:online}

The main result of this section is Theorem~\ref{thm:regret}, which shows that our algorithm achieves $O(\sqrt{T})$ regret in expectation. In particular, we show the following result. Let $\theta$ denote the random variable governing the realization of $\tilde x_{ts}$ in Algorithm~\ref{alg:inbatch_lazy}.

\begin{manualthm}[2]\makeatletter\edef\@currentlabel{2}\makeatother\label{thm:regret}
    Consider Algorithm~\ref{alg:inbatch_lazy} with step size $\eta\le \frac{\sigma_2}{S\overline b}$ and $\mu_0\le\mu^{max}$. Then, the expected regret is bounded by 
    \begin{align*}
        \mathbb E_{\theta}[Regret(A\mid  F)]\le \frac{2 \eta S^2(\overline b^2 + \overline \rho^2)}{\sigma_1}T + \frac{1}{\eta}V_h(0, \mu_0) + \frac{\overline w}{\eta\underline \rho}||\nabla h(\mu^{max}) - \nabla h(\mu_0)||_\infty + \frac{S\overline w \overline b}{\underline \rho}.
    \end{align*}
    When we choose $\eta\in O(1/\sqrt{T})$, then $\mathbb E_{\theta}[Regret(A\mid  F)]\in O(\sqrt{T})$.
\end{manualthm}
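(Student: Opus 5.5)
The argument follows the dual mirror descent analysis of \citet{balseiro2020dual}, with each batch treated as one ``super-request'' whose action set is the fair polytope $\mathcal X_t$. Let $\tau_A$ be the stopping time: the first period in which a realized batch could exceed the remaining budget, i.e.\ the first $t$ with $B_{t-1} < S\overline b\,\mathbf 1$ in some coordinate. Set $\tau_A=T$ if this never happens. Before $\tau_A$, no batch is dropped, so the realized reward satisfies $\mathbb E_\theta[w_{ts}^\top \tilde x_{ts}] = w_{ts}^\top X_t^{*\top}e_{ts}$. The term $S\mu^\top\rho$ in the argmax does not affect $X_t^*$, so $X_t^*$ attains $w_t^*(\mathbf b\mu_{t-1})$. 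Hence the per-period expected reward is
\[
\sum_s w_{ts}^\top X_t^{*\top}e_{ts} = w^*_t(\mathbf b\mu_{t-1}) + \mu_{t-1}^\top \sum_s b_{ts}^\top X_t^{*\top}e_{ts}.
\]
Conditioning on the history $\mathcal H_{t-1}$ (which determines $\mu_{t-1}$) and using i.i.d.\ arrivals gives $\mathbb E[w_t^*(\mathbf b\mu_{t-1})\mid\mathcal H_{t-1}] = \sum_\pi p_\pi w^*_\pi(\mathbf b\mu_{t-1})$. Therefore
\[
\mathbb E\bigl[\text{reward}_t \mid \mathcal H_{t-1}\bigr] = D(\mu_{t-1}) - \mu_{t-1}^\top\,\mathbb E[\tilde g_t\mid \mathcal H_{t-1}].
\]
Summing up to $\tau_A$ with the optional stopping theorem, and using convexity of $D$ together with Lemma~\ref{lem:dual} at $\bar\mu=\frac1{\tau_A}\sum_{t\le\tau_A}\mu_{t-1}$, yields
\[
R(A\mid F) \ge \mathbb E\Bigl[\tfrac{\tau_A}{T}\flu(F)\Bigr] - \mathbb E\Bigl[\sum_{t\le\tau_A}\mu_{t-1}^\top\tilde g_t\Bigr].
\]

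Next, $\tilde g_t$ is a stochastic (sub)gradient of the per-period Lagrangian in $\mu$, so the sum $\sum_{t\le\tau_A}\tilde g_t^\top(\mu_{t-1}-\mu)$ is controlled by the standard online mirror descent regret bound for any fixed comparator $\mu\ge0$. Since $\|\tilde g_t\|_\infty\le S(\overline b+\overline\rho)$ and $h$ is $\sigma_1$-strongly convex in $\ell_1$, the bound is
\[
\sum_{t\le\tau_A}\tilde g_t^\top(\mu_{t-1}-\mu) \le \frac{2\eta S^2(\overline b^2+\overline\rho^2)}{\sigma_1}\tau_A + \frac{1}{\eta}V_h(\mu,\mu_0),
\]
which is the first term of the theorem (with $\tau_A\le T$). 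The step-size condition $\eta\le\sigma_2/(S\overline b)$, together with separability of $h$, will be used to show the iterates stay bounded, $\mu_t\le\mu^{max}$ coordinatewise. The argument is per coordinate: once $\mu_k$ exceeds $\overline w/\rho_k$, every type's net value of consuming resource $k$ is negative, so the optimal fair action consumes none of it. Then $(\tilde g_t)_k = S\rho_k > 0$, which pushes $\mu_k$ down. The $\sigma_2$ condition bounds a single step's overshoot by $1$.

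The comparator is chosen by cases. If $\tau_A=T$, take $\mu=0$, giving the $V_h(0,\mu_0)$ term. If $\tau_A<T$, some resource $k$ is nearly exhausted, so $\sum_{t\le\tau_A}(\tilde g_t)_k \le S\rho_k\tau_A - B_0 + S\overline b$. Take $\mu=(\overline w/\rho_k)e_k$. Then
\[
\mu^\top\sum_t\tilde g_t \le -\overline w S(T-\tau_A) + \frac{S\overline w\overline b}{\underline\rho},
\]
which exactly pays for the lost periods, since $(T-\tau_A)\flu(F)/T \le (T-\tau_A)S\overline w$. Bounding $V_h(\mu,\mu_0)-V_h(0,\mu_0)$ via convexity along coordinate $k$ and separability gives the term $\frac{\overline w}{\eta\underline\rho}\|\nabla h(\mu^{max})-\nabla h(\mu_0)\|_\infty$.

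Combining the pieces gives the stated bound; choosing $\eta\asymp1/\sqrt T$ then gives $O(\sqrt T)$. The main obstacle is the stopping-time case analysis. The difficulty is converting resource depletion into a gradient inequality with the right constants, and making the martingale/optional stopping step valid when $X_t^*$ (not $\tilde x_{ts}$) enters $\tilde g_t$. I would first write out $\mu\mapsto V_h(\mu,\mu_0)$ explicitly for separable $h$ to get the $\nabla h(\mu^{max})$ term. Failing that, I would fall back on a cruder bound using boundedness of the iterates.
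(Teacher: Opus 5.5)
Your route differs from the paper's. You fold the stopping-time analysis into one mirror-descent regret bound with a path-dependent comparator: $\mu=0$ if $\tau_A=T$, and $\mu=(\overline w/\rho_k)e_k$ otherwise. The paper instead bounds $\mathbb E_\theta[S(T-\tau_A)]$ separately in Lemma~\ref{lem:stopping_time}, by telescoping $\dot h_j$ along the iterates. That step needs $\mu_t\le\mu^{max}$, hence $\eta\le\sigma_2/(S\overline b)$. It then uses only the comparator $\mu=0$ in Lemma~\ref{lem:part2}.

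Your Bregman step is correct. For separable $h$ and $c=\overline w/\rho_k$, convexity gives $V_h(ce_k,\mu_0)-V_h(0,\mu_0)=h_k(c)-h_k(0)-c\,\dot h_k((\mu_0)_k)\le c\bigl(\dot h_k(c)-\dot h_k((\mu_0)_k)\bigr)\le\frac{\overline w}{\underline\rho}\|\nabla h(\mu^{max})-\nabla h(\mu_0)\|_\infty$, using $c\le\mu^{max}_k$ and monotonicity of $\dot h_k$. So, modulo the issue below, your route gives the stated constants without bounded iterates or the $\sigma_2$ condition, which is a real simplification.

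Your sketch of why $\mu_t\le\mu^{max}$ is wrong, though inessential. First, $\mu_k>\overline w/\rho_k$ does not make net values negative when $\rho_k>1$. Second, under the Lipschitz constraints the batch optimum may use negative-net-value options to stay fair to high-value agents. The correct argument is the paper's: $0\in\mathcal X_t$ gives $(\mu_{t-1})_k\sum_s(b_{ts}^\top X_t^{*\top}e_{ts})_k\le S\overline w$. Hence $(\tilde g_t)_k\ge 0$ (not $=S\rho_k$) once $(\mu_{t-1})_k\ge\overline w/\rho_k$.

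The genuine gap is the step you flagged, and optional stopping does not close it. The inequality $\sum_{t\le\tau_A}(\tilde g_t)_k\le S\rho_k\tau_A-B_0+S\overline b$ is false pathwise. Depletion concerns realized consumption $\sum_s b_{ts}^\top\tilde x_{ts}$, while $\tilde g_t$ is built from $X_t^*$. The discrepancy is $M_{\tau_A}=\sum_{t\le\tau_A}\sum_s b_{ts}^\top(\tilde x_{ts}-X_t^{*\top}e_{ts})$. What enters your bound is $\mathbb E[\mu^\top M_{\tau_A}]$, where $\mu=\frac{\overline w}{\rho_k}e_k$ with $k$ the triggering resource on $\{\tau_A<T\}$, and $\mu=0$ otherwise. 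This $\mu$ is determined by the history up to $\tau_A$, so $\mathbb E[M_{\tau_A}]=0$ does not kill the cross term. It is typically positive, because resources run out exactly on paths where realized consumption overshoots $X_t^*$.

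What you can prove is $\mathbb E[\mu^\top M_{\tau_A}]\le\frac{\overline w}{\underline\rho}\mathbb E[\max_k|(M_{\tau_A})_k|]\le\frac{\overline w S\overline b}{\underline\rho}\sqrt{NT}$, using orthogonal increments bounded by $S\overline b$. This keeps the $O(\sqrt T)$ conclusion but adds an $\eta$-independent term absent from the displayed bound. Alternatively, feeding realized consumption into the dual update makes your argument pathwise with the stated constants, but that changes the algorithm.

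The paper does not supply a cleaner fix. Its proof of Lemma~\ref{lem:stopping_time} applies optional stopping for a fixed $j$, then invokes the depletion inequality for that same $j$. That inequality holds only for the random triggering coordinate, and fails on paths where nothing is depleted, so the paper has the same hole.

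A minor point: with your definition of $\tau_A$ (first $t$ with $B_{t-1}<S\overline b$), batch $\tau_A$ itself can be dropped. Sum rewards only up to $\tau_A-1$, or use the paper's definition.
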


Note that the $O(\sqrt{T})$ term also hides a dependence on the batch size $S$, which we assume to be constant. Due to the fairness constraints for each batch, if the constraints are sufficiently tight, any algorithm may be forced to give each agent in a batch essentially identical allocations. In this case, the agents in a batch act like a single large agent, and thus regret scales linearly in $S$. In general, though, selecting $\eta\in O(1/S)$ will still minimize the dependence on $S$.

To prove Theorem~\ref{thm:regret}, we define the stopping time $\tau_A$ as the first time we (nearly) run out of a resource.

\begin{definition}
    The stopping time $\tau_A$ of Algorithm~\ref{alg:inbatch_lazy} is the first time $\tau \leq T$ such that there exists a resource $j$ with
    \[\sum_{t = 1}^{\tau}\sum_{s=1}^S(b_{ts}^\top \tilde x_{ts})_j + S\overline b \ge \rho_j ST, \nonumber\]
and $\tau_A:=T$ if no such $\tau$ exists. Also, define
    \[\tilde \mu_{\tau_A}:= \frac{\sum_{t=1}^{\tau_A}\mu_t}{\tau_A}. \nonumber\]
\end{definition}

The first main intermediate result on our way to Theorem~\ref{thm:regret} shows that, in expectation, the stopping time $\tau_A$ is close to the final batch $T$. 

\begin{lemma}\label{lem:stopping_time}
    Consider Algorithm~\ref{alg:inbatch_lazy} with step size $\eta \le \frac{\sigma_2}{S\overline b}$. Then, $\mu_t\le \mu^{max}$ for all $t\leq \tau_A$. Furthermore, we have
    \[\mathbb E_\theta[S(T - \tau_A)] \le \frac{1}{\eta\underline\rho}||\nabla h(\mu^{max}) - \nabla h(\mu_0)||_\infty + \frac{S\overline b}{\underline\rho}. \nonumber\]
\end{lemma}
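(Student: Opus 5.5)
The proof adapts the stopping-time argument of \citet{balseiro2020dual} to the batched, fair setting. It has two parts: a deterministic bound on the dual iterates, and an optional-stopping argument on the gradient sums.

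\textbf{Step 1: the dual iterates stay bounded.} We show $\mu_t\le\mu^{max}$ by induction, one coordinate at a time, using separability of $h$. Because $h$ is separable, the mirror step decouples:
\[
\mu_{t,k}=\arg\min_{\mu_k\ge 0}\ \tilde g_{t,k}\mu_k+\tfrac{1}{\eta}V_{h_k}(\mu_k,\mu_{t-1,k}).
\]
Let $\tilde g_{t,k}=S\rho_k-\sum_s(b_{ts}^\top X_t^{*\top}e_{ts})_k\ge S\rho_k-S\overline b$. There are two cases.
\begin{itemize}
\item If $\mu_{t-1,k}> \overline w/\rho_k$, then $\tilde g_{t,k}>0$. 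To see this, recall $w\le\overline w$ componentwise. With price $\mu_{t-1,k}$, any allocation that consumes resource $k$ pays more than it earns, compared to the feasible alternative of lowering that consumption. Here we use that $\mathcal X_t$ is closed under scaling every row down by a common factor: fairness constraints of the form $\gamma(a_i-a_j)\le d$ are preserved when all values are scaled by a factor at most $1$, since $d\ge 0$. So the maximizer consumes little, i.e. $\sum_s (b^\top X)_k\le S\rho_k$, and the step can only decrease the coordinate.
\item If $\mu_{t-1,k}\le\overline w/\rho_k$, the step increases the coordinate by at most what the first-order optimality condition allows. That condition reads $h_k'(\mu_{t,k})\le h_k'(\mu_{t-1,k})+\eta S\overline b$. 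Strong convexity ($\sigma_2$) then gives $\mu_{t,k}-\mu_{t-1,k}\le \eta S\overline b/\sigma_2\le 1$, using $\eta\le\sigma_2/(S\overline b)$. Hence $\mu_{t,k}\le \mu^{max}_k$.
\end{itemize}
I expect the first case to be the main obstacle. The fairness constraints couple the agents, so a high price cannot simply zero out a single agent's allocation; the argument must go through the common down-scaling of the whole batch.

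\textbf{Step 2: bounding $T-\tau_A$.} We telescope the mirror-descent update in dual space. Fix the resource $j$ that triggers $\tau_A$; if $\tau_A=T$ there is nothing to prove. By the first-order conditions,
\[
h_j'(\mu_{t,j})\ge h_j'(\mu_{t-1,j})-\eta\tilde g_{t,j},
\]
because the projection onto $\mu\ge0$ only increases $h_j'$. Summing over $t\le\tau_A$ gives
\[
\eta\sum_{t\le\tau_A}\Bigl(\sum_s (b_{ts}^\top X_t^{*\top}e_{ts})_j-S\rho_j\Bigr)\le h_j'(\mu_{\tau_A,j})-h_j'(\mu_{0,j})\le \|\nabla h(\mu^{max})-\nabla h(\mu_0)\|_\infty,
\]
where the last inequality uses monotonicity of $h_j'$ together with Step 1.

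Next we take expectations over $\theta$. Since $\tilde x_{ts}$ has conditional mean $X_t^{*\top}e_{ts}$ and $\tau_A$ is a stopping time, optional stopping gives
\[
\mathbb E\Bigl[\sum_{t\le\tau_A}\sum_s (b_{ts}^\top X_t^{*\top}e_{ts})_j\Bigr]=\mathbb E\Bigl[\sum_{t\le\tau_A}\sum_s (b_{ts}^\top\tilde x_{ts})_j\Bigr]\ge \rho_j ST-S\overline b,
\]
by the definition of $\tau_A$. Substituting yields
\[
\eta\bigl(\rho_j S(T-\mathbb E\tau_A)-S\overline b\bigr)\le\|\nabla h(\mu^{max})-\nabla h(\mu_0)\|_\infty.
\]
Dividing by $\eta\rho_j$ and using $\rho_j\ge\underline\rho$ gives the stated bound. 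One subtlety remains: $j$ is random. We handle it by summing the inequality over the event on which each $j$ is the triggering resource, or equivalently by applying the per-coordinate bound pathwise before taking expectations. This step needs care but is routine.
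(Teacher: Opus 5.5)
Your Step 1 and the telescoping part of Step 2 follow the paper's argument. The paper proves $\mu_t\le\mu^{max}$ with the same two cases. It then telescopes $\dot h_j$ and applies optional stopping to $M_{t,j}=\sum_{r\le t}\sum_s(b_{rs}^\top(\tilde x_{rs}-X_r^{*\top}e_{rs}))_j$. One small fix in Case 1: the per-allocation claim that anything consuming resource $k$ ``pays more than it earns'' is not true in general (it fails when $\rho_k>1$). What you actually need is the aggregate bound $\mu_{t-1,k}\sum_s(b_{ts}^\top X_t^{*\top}e_{ts})_k\le\sum_s w_{ts}^\top X_t^{*\top}e_{ts}\le S\overline w$, which follows from comparing $X_t^*$ with $0\in\mathcal X_t$.

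\textbf{The gap is the step you call routine.}
\begin{itemize}
\item For a fixed $j$, optional stopping gives only the unconditional identity $\mathbb E[M_{\tau_A,j}]=0$.
\item The bound $\sum_{t\le\tau_A}\sum_s(b_{ts}^\top\tilde x_{ts})_j\ge\rho_jST-S\overline b$ holds only on the event $E_j$ that $j$ triggers the stop. It fails on paths with no trigger (where $\tau_A=T$) and on paths where another resource triggers.
\item Restricting to $E_j$ leaves the term $\mathbb E[\mathbf 1_{E_j}M_{\tau_A,j}]$, which need not vanish. Triggering selects exactly the paths on which realized consumption overshoots ex-ante consumption.
\item Arguing pathwise does not help either. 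The mirror-descent bound controls $X_t^*$, while $\tau_A$ is defined through the realized $\tilde x_{ts}$, and the two are linked only in unconditional expectation. In the original dual mirror descent analysis this step is pathwise because the gradient and the stopping rule use the same consumption; here they do not.
\end{itemize}

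The paper's proof has the same flaw: it fixes $j$ and invokes the definition of $\tau_A$ as if it applied to that $j$ on every path. With these constants the bound can in fact fail. If the dual iterate stays interior, the ex-ante consumption of $j$ equals $tS\rho_j+O(1/\eta)$ pathwise. If in addition $X_t^*$ is fractional a constant fraction of the time, the rounding martingale fluctuates by order $\sqrt T$. Then $\mathbb E[T-\tau_A]=\Omega(\sqrt T)$, while the right-hand side is $O(1)$ for constant $\eta$.

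Fairness does force fractional $X_t^*$. Take one facility and one resource, $S=2$, two types with $w=1$ and $w=1/2$, $b=1$, $d=1/2$, $\gamma=1$, and $\rho=3/8$. For $\mu\in(2/3,1)$, a mixed batch gives $1/2$ to the $w=1$ agent and $0$ to the other. The expected batch consumption then equals $S\rho$, and the dual never approaches $0$.

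\textbf{What does hold.} Let $G:=\|\nabla h(\mu^{max})-\nabla h(\mu_0)\|_\infty$. On $E_j$ you get the pathwise inequality $\rho_jS(T-\tau_A)\le G/\eta+S\overline b+M_{\tau_A,j}$. This gives
\begin{equation*}
\mathbb E[S(T-\tau_A)]\le\frac{G}{\eta\underline\rho}+\frac{S\overline b}{\underline\rho}+\frac{1}{\underline\rho}\,\mathbb E\Bigl[\max_{j}\max_{t\le T}M_{t,j}^+\Bigr].
\end{equation*}
Doob's $L^2$ inequality bounds the last expectation by $2NS\overline b\sqrt T$. This extra term preserves the $O(\sqrt T)$ regret when $\eta\sim 1/\sqrt T$, but the lemma has to be stated with it.
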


Proofs omitted from this section are deferred to Appendix~\ref{sec:app:online}. The next intermediate step for proving Theorem~\ref{thm:regret} is a helper result that bounds the total welfare achieved by Algorithm~\ref{alg:inbatch_lazy} before the stopping time $\tau_A$.

\begin{lemma}\label{lem:part2} 
    The total welfare before $\tau_A$ is bounded by
    \[\mathbb E_F\left[\tau_A D(\tilde \mu_{\tau_A}) - \sum_{t=1}^{\tau_A}\sum_{s=1}^Sw_{ts}^\top \tilde x_{ts}\right] \le \frac{2\eta S^2(\overline b^2 + \overline \rho^2)}{\sigma_1}\mathbb E_F[\tau_A] + \frac{1}{\eta}V_h(0, \mu_0). \nonumber\]
\end{lemma}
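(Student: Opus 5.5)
The plan follows the standard dual mirror descent argument of \citet{balseiro2020dual}, applied batch-by-batch, with $w^*_t$ playing the role of the per-request conjugate. For each $t\le\tau_A$, the primal step chooses $X^*_t\in\mathcal X_t$ to maximize $\sum_s (w_{ts}-b_{ts}\mu_{t-1})^\top X^\top e_{ts}$. Hence
\[
\sum_{s=1}^S w_{ts}^\top X_t^{*\top}e_{ts} = w^*_t(\mathbf b\mu_{t-1}) + \mu_{t-1}^\top \sum_{s=1}^S b_{ts}^\top X_t^{*\top}e_{ts}.
\]
Define $D_t(\mu):=w^*_t(\mathbf b\mu)+S\mu^\top\rho$. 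Then
\[
\sum_s w_{ts}^\top X_t^{*\top}e_{ts} = D_t(\mu_{t-1}) - \mu_{t-1}^\top \tilde g_t .
\]
Conditional on the history $\mathcal H_{t-1}$, $\mu_{t-1}$ is fixed and the batch is drawn fresh from $F$, so $\mathbb E[D_t(\mu_{t-1})\mid\mathcal H_{t-1}]=D(\mu_{t-1})$. Moreover, $\tilde x_{ts}$ is sampled from $X_t^{*\top}e_{ts}$. For $t\le \tau_A$ no allocation is dropped, because the definition of $\tau_A$ leaves slack $S\overline b$. Therefore $\mathbb E_\theta[w_{ts}^\top\tilde x_{ts}]=w_{ts}^\top X_t^{*\top}e_{ts}$.

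Next I sum over $t\le\tau_A$ and apply optional stopping. The quantity $\sum_{t\le\tau_A}(D_t(\mu_{t-1})-D(\mu_{t-1}))$ is a stopped martingale with bounded increments and $\tau_A\le T$, so its expectation is zero. The same holds for the sampling noise. Since $D$ is convex (it is a sum of maxima of affine functions in $\mu$), Jensen's inequality gives $\tau_A D(\tilde\mu_{\tau_A})\le\sum_{t=1}^{\tau_A}D(\mu_t)$. I will need to align indices: the iterates used are $\mu_{t-1}$, whereas $\tilde\mu_{\tau_A}$ averages $\mu_1,\dots,\mu_{\tau_A}$. I would either reindex the average or absorb the one-step shift. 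One option is to note that $|D(\mu_t)-D(\mu_{t-1})|$ is controlled by the same mirror descent step bound, but that would introduce an extra $O(\eta)$ per step. I expect it to fit within the stated constant, or else the intended reading is the average of $\mu_0,\dots,\mu_{\tau_A-1}$, and I will adopt whichever matches. Once this is done, the left-hand side is bounded in expectation by $\mathbb E[\sum_{t\le\tau_A}\mu_{t-1}^\top\tilde g_t]$, and it remains to bound this term.

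To bound $\sum_t \tilde g_t^\top \mu_{t-1}$, I would use the standard online mirror descent regret bound against the comparator $\mu=0$:
\[
\sum_{t=1}^{\tau}\tilde g_t^\top(\mu_{t-1}-0)\le \sum_{t=1}^{\tau}\frac{\eta}{2\sigma_1}\|\tilde g_t\|_\infty^2+\frac1\eta V_h(0,\mu_0).
\]
This follows from the three-point property of the Bregman projection onto $\mu\ge0$ together with $\sigma_1$-strong convexity in $\ell_1$, using the dual norm $\ell_\infty$. The bound holds pathwise for every $\tau$, and in particular at the random time $\tau_A$. For the gradient norm, $\|\tilde g_t\|_\infty\le S(\overline\rho+\overline b)$, because each $X_t^{*\top}e_{ts}$ is a subprobability vector and the entries of $b$ are at most $\overline b$. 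Then $(\overline b+\overline\rho)^2\le 2(\overline b^2+\overline\rho^2)$, so the per-step term is at most $\frac{\eta S^2(\overline b^2+\overline\rho^2)}{\sigma_1}$. This lies within the claimed factor of $2$, with slack available to absorb the index shift above.

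The main obstacle is the stochastic bookkeeping. The martingale and optional-stopping argument must be applied carefully, since $\tau_A$ depends on both the arrivals and $\theta$. In addition, the Jensen step has to be matched correctly with the index of $\tilde\mu_{\tau_A}$. The mirror descent inequality itself is routine.
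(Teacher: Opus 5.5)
Your proposal is correct and is essentially the paper's argument. You write the batch reward as dual value minus $\mu^\top g$, use convexity of $D$ for $\tilde\mu_{\tau_A}$, bound the mirror-descent term against the comparator $\mu=0$ via the three-point property and $\sigma_1$-strong convexity, and remove the arrival and sampling noise by optional stopping. The paper does the same, except that it works with the conditional gradient $g_t=\nabla_\mu L(z_t,\mu_t)$ of an auxiliary Lagrangian plus a martingale for $\tilde g_t-g_t$, whereas you apply the mirror-descent inequality pathwise to the realized $\tilde g_t$.

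The off-by-one you flag is real. The paper's proof silently indexes the dual used in batch $t$ as $\mu_t$, which is exactly your reading of $\tilde\mu_{\tau_A}$ as the average of the duals actually used. Your other option also stays within the stated constant if you use the sharper bounds $\|\tilde g_t\|_\infty,\|\partial D\|_\infty\le S\max(\overline b,\overline\rho)$: the shift then costs at most $\eta S^2\max(\overline b,\overline\rho)^2/\sigma_1$ per step.
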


Finally, we can use Lemma~\ref{lem:stopping_time} and Lemma~\ref{lem:part2} to prove Theorem~\ref{thm:regret}.

\begin{proof}[Proof of Theorem~\ref{thm:regret}]
    For any $F\in\mathcal F$, we know by Lemma~\ref{lem:dual} and the fact that $FLU\le TS\overline w$ that
    \begin{align*}
        FLU(F) &= \frac{\mathbb E_\theta[\tau_A]}{T}FLU(F) + \frac{\mathbb E_\theta[T-\tau_A]}{T}FLU(F) \\
        &\le \mathbb E_\theta\left[\tau_A D(\tilde \mu_{\tau_A}) + (T - \tau_A)S\overline w\right]
    \end{align*}
    So, we can write the expected regret as
    \begin{align*}
        \mathbb E_\theta\left[Regret(A\mid F)\right] &= FLU(F) - \mathbb E_{\theta}[R(A\mid F)] \\
        &\le \mathbb E_F\left[\mathbb E_\theta\left[\tau_A D(\tilde \mu_{\tau_A}) + (T - \tau_A) S\overline w - \sum_{t=1}^{\tau_A} \sum_{s=1}^S w_{ts}^\top \tilde x_{ts}\right]\right]
     \end{align*}
    By Lemma~\ref{lem:stopping_time} and Lemma~\ref{lem:part2}, for some constant $c>0$, we have
    \begin{align*}
         \mathbb E_\theta\left[Regret(A\mid F)\right]&\le \frac{2 S^2(\overline b^2 + \overline \rho^2)}{\sigma_1}\eta\mathbb E_F[\tau_A] + \frac{1}{\eta}V_h(0,\mu_0) + \frac{\overline w}{\eta\underline \rho}||\nabla h(\mu^{max}) - \nabla h(\mu_0)||_\infty + \frac{S\overline w \overline b}{\underline \rho} \\
        &\le \frac{2 S^2(\overline b^2 + \overline \rho^2)}{\sigma_1}\eta T + \frac{1}{\eta}V_h(0,\mu_0) + \frac{\overline w}{\eta\underline \rho}||\nabla h(\mu^{max}) - \nabla h(\mu_0)||_\infty + \frac{S\overline w \overline b}{\underline \rho}
    \end{align*}
    as desired.
\end{proof}

\section{Experiments}\label{sec:experiments}

In this section, we evaluate the performance of Algorithm~\ref{alg:inbatch_lazy} using real-world data from the Refugee Economies Programme. Our primary goal is to empirically validate our theoretical bounds and assess the ``price of fairness''—the trade-off between maximizing refugee welfare and satisfying Lipschitz fairness constraints.

\subsection{Dataset and Simulation Environment}

We utilize the publicly available Refugee Economies Cross-country Dataset, prepared by the Refugee Studies Centre at the University of Oxford \citep{betts2024economic}. This dataset aggregates household survey data from six distinct data collection exercises conducted between 2016 and 2021 across three countries: Kenya, Uganda, and Ethiopia.

The simulation environment is constructed using the following components derived from the data:
\begin{enumerate}
    \item \textbf{Facilities}: We define the set of facilities $V$ as the host locations, given by the six survey sites provided in the dataset, namely
    \begin{itemize}
        \item Kakuma refugee camp (Kenya)
        \item Nairobi (Kenya)
        \item Nakivale refugee camp (Uganda)
        \item Kampala (Uganda)
        \item Dollo Ado (Ethiopia)
        \item Addis Ababa (Ethiopia).
    \end{itemize}
    \item \textbf{Refugee Types and Arrivals}: The set of agents is given by the 3674 refugee families in the dataset. Each family is defined as a unique type in $U$. For online simulation, we group cases into $T = 50$ batches.
    \item \textbf{Weights}: The standard objective in the refugee resettlement literature is to maximize the probability that any individual in a family finds employment within a year of resettlement (see, e.g. \citet{bansak2018improving} and \citet{ahani2021placement}). We encounter the typical challenges when using employment probabilities as the primary objective metric. The weights $(w_{u})_{v}$ are unobserved in practice, and even on our historical dataset, the realized employment outcome for each family is only observed at the host location the family was assigned to in reality. We take the standard approach to ameliorating these issues: as in \citet{bansak2018improving} and \citet{ahani2021placement}, we train a machine learning model on the dataset to estimate the employment probability for each family at each host location.

    \item \textbf{Resources and Capacities}: We model a single resource type for each location, which reflects the capacity of each location to house new arrivals. For the purpose of this experiment, we define capacities proportional to the sample sizes of each location in the dataset, ensuring the simulation reflects the relative scale of the camps and urban areas.
\end{enumerate}

\subsection{Experimental Design and Baselines}

We compare the performance of our proposed fair algorithm against a standard baseline online resource allocation algorithm that prioritizes efficiency without regard for fairness.

\begin{itemize}
    \item We run Algorithm~\ref{alg:inbatch_lazy}, which incorporates the Lipschitz fairness constraints. We vary the fairness parameter $\gamma$ to observe how strictly enforcing similar outcomes for similar individuals impacts the total welfare. 
    \item We compare our approach against the standard Dual Mirror Descent algorithm for online allocation problems, as described by \citet{balseiro2020dual}. This baseline optimizes for total expected employment but does not enforce the local fairness constraints within batches.
\end{itemize}

The experiments simulate the arrival of refugee batches over the finite horizon $T$. We consider the following metrics:
\begin{itemize}
    \item \textbf{Total Welfare}: The cumulative expected employment achieved by the allocation.
    \item \textbf{Fairness}: The realized fairness parameter $\gamma$, particularly of the unfair benchmarks.
\end{itemize}

\subsection{Experimental Results}

We evaluate the offline fair fluid solutions to $\gfair$, and the online fair solution produced by Algorithm~\ref{alg:inbatch_lazy}, with various fairness coefficients of $\gamma$. As benchmarks, we compare to the offline unfair fluid solution for $\unf$, and the online unfair solution produced by the online resource allocation algorithm of \citet{balseiro2020dual}.

Results are presented in Table~\ref{table:offline_comp} and visually in Figure~\ref{fig:offline_comp}. Small values of $\gamma$ have very minimal impact on the offline fluid value, with the fluid $1$-fair solution still producing over 98\% of the value of the fluid unfair solution. Even as $\gamma$ increases to 4, the offline fluid $4$-fair solution still produces 73\% of the value of the fluid unfair solution. For the online setting, Algorithm~\ref{alg:inbatch_lazy} also performs well in comparison to the offline solution despite being a fair online algorithm, consistently achieving at least 90\% of the offline fluid fair value.

\begin{table}[h]
\centering
\begin{tabular}{|l|c|c|c|c|c|}
\hline
 & Benchmark & $\gamma=0.5$ & $\gamma=1$ & $\gamma=2$ & $\gamma=4$ \\ \hline
\textbf{Offline Value} & 2679.40 & 2673.35 & 2647.13 & 2450.49 & 1973.93 \\ \hline
\textbf{Offline \% of Baseline} & - & 99\% & 98\% & 91\% & 73\% \\ \hline
\textbf{Online Value}& 2583.86 & 2395.51 & 2375.87 & 2233.60 & 1945.88 \\ \hline
\textbf{Online \% of Offline} & 96\% & 90\% & 90\% & 91\% & 99\% \\ \hline
\end{tabular}
\caption{Offline optimal and online objective values for unfair baseline $\unf$ and fair solutions to $\gfair$ with various choices of $\gamma$. \textit{Offline \% of Baseline} is computed as [Offline $\gamma$-Fair Value]/[Offline Unfair Benchmark]$\times 100\%$. \textit{Online \% of Offline} is computed as [Online $\gamma$-Fair Value]/[Offline $\gamma$-Fair Value]$\times 100\%$.} \label{table:offline_comp}
\end{table}

\begin{figure}
    \centering
    \includegraphics[width=0.6\linewidth]{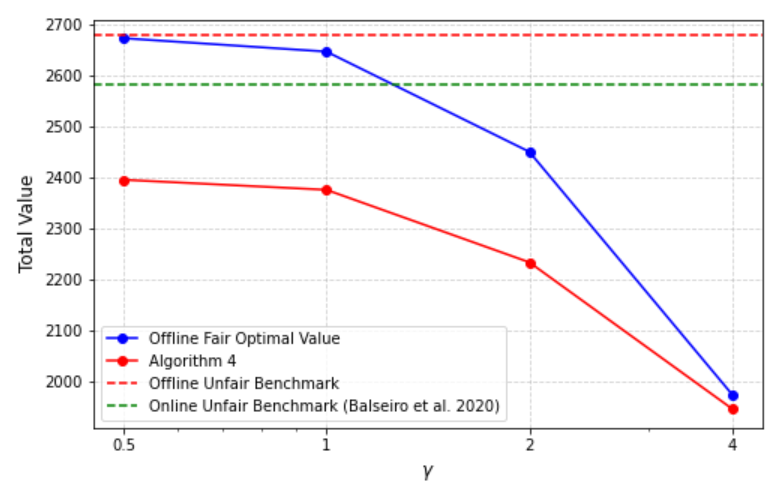}
    \caption{Total objective values for offline unfair benchmark, online unfair fluid benchmark \citep{balseiro2020dual}, offline fair fluid, and Algorithm~\ref{alg:inbatch_lazy}.}
    \label{fig:offline_comp}
\end{figure}

In Figure~\ref{fig:unfair_fairness}, we also examine the fairness properties of the offline unfair fluid solution $\unf$ and the online unfair benchmark algorithm of \cite{balseiro2020dual}. For each solution, we take each pair of agents with types $u_{ts_1}, u_{ts_2}$ within each batch $t$, and measure the realized value of $\gamma^*$ such that $d(u_1, u_2) = \gamma^*|\mathbb E[w_1^\top \tilde x_1] - \mathbb E[w_2^\top \tilde x_2]|$. We then present the distributions of realized $\gamma^*$ as a box-and-whiskers plot, marking each quartile of each distribution. Note that in Figure~\ref{fig:unfair_fairness} we omit rightward outliers. The key point is to examine the left or bottom quartiles of the fairness distributions. As we can see, the solution of~\citet{balseiro2020dual} has a significant proportion of agents who are treated unfairly, with realized $\gamma^*$-fairness essentially approaching 0 at the bottom end. In each unfair benchmark, around $1/4$ of the realized $\gamma^*$ values are less than 1, and around $1/2$ are less than 2. Instead, Algorithm~\ref{alg:inbatch_lazy} forces a lower bound on the fairness of the outcome, based on the chosen input value of $\gamma$. Figure~\ref{fig:welfare_plots} also shows the distribution of ex-ante (expected) welfare outcomes for Algorithm~\ref{alg:inbatch_lazy} vs. the unfair online benchmark. That is, we show the expected employment objective for each agent. While the distributions are similar, with similar mean and median welfare, the unfair algorithm has a much longer and heavier left tail where a significant proportion of cases achieve essentially 0 welfare. In the fair algorithm, the worst outcomes are avoided.

Finally, we examine the regret scaling of Algorithm~\ref{alg:inbatch_lazy} vs. the horizon $T$. For this part, we ran 100 trials, each selecting a horizon length $T$ uniformly at random on $[1,1000]$. Then, we bootstrap a dataset of length $T$ with batches of size $S=10$ by randomly sampling cases from the household survey data, and compared the value of Algorithm~\ref{alg:inbatch_lazy} with the fair fluid value $\gfair$. In Figure~\ref{fig:regret_vs_T}, which shows the regret vs. the time horizon $T$, we can see that the regret scales as $O(\sqrt{T})$. In Figure~\ref{fig:relreward_vs_T}, we plot the relative reward of Algorithm~\ref{alg:inbatch_lazy} compared to the offline fair fluid value, i.e. the online value divided by the offline fair fluid value. As $T$ increases to 1000, the relative reward approaches 1, showing that the relative price of learning essentially goes to 0.

\begin{figure}
    \centering
    \includegraphics[width=0.9\linewidth]{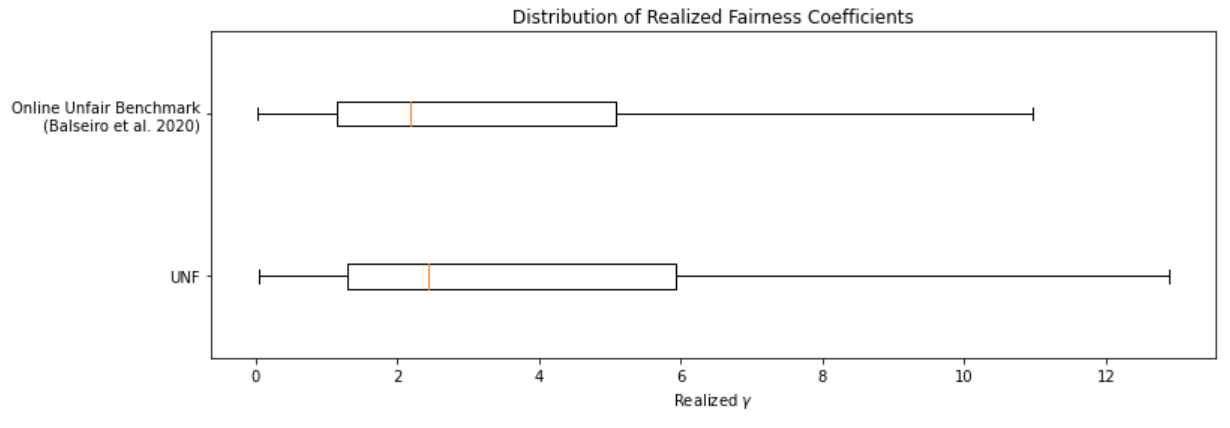}
    \caption{Distribution of realized pairwise fairness coefficients $\gamma$ for unfair benchmarks.}
    \label{fig:unfair_fairness}
\end{figure}

\begin{figure}
    \centering
    \includegraphics[width=0.5\linewidth]{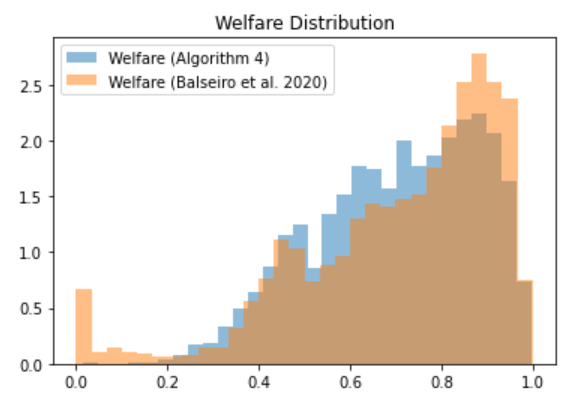}
    \caption{Distribution of ex-ante welfare, or expected employment for each agent, obtained by Algorithm~\ref{alg:inbatch_lazy} vs. unfair online benchmark.}
    \label{fig:welfare_plots}
\end{figure}

\begin{figure}[htbp]
     \centering
     \begin{subfigure}[b]{0.45\textwidth}
         \centering
         \includegraphics[width=\textwidth]{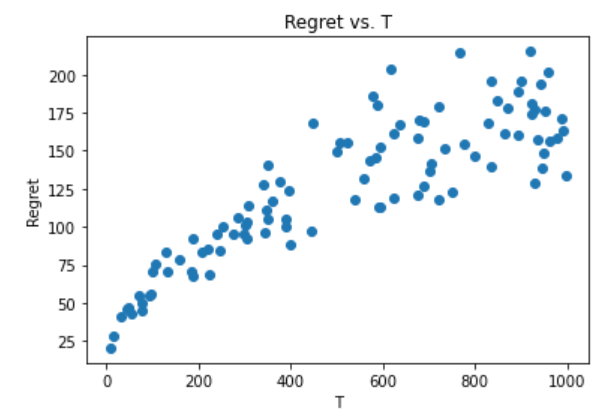}
         \caption{Regret vs. time horizon $T$}
         \label{fig:regret_vs_T}
     \end{subfigure}
     \hfill
     \begin{subfigure}[b]{0.45\textwidth}
         \centering
         \includegraphics[width=\textwidth]{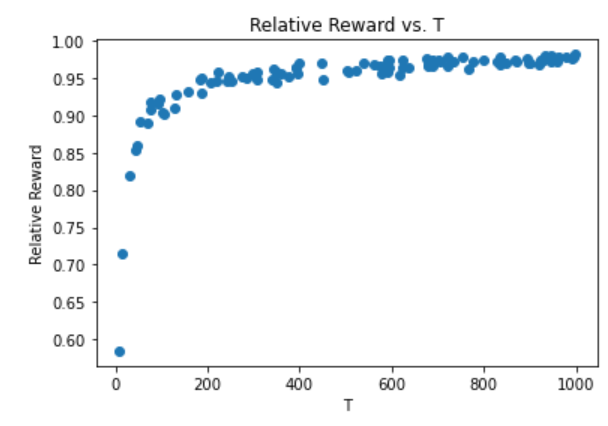}
         \caption{Proportion of offline fluid value vs. $T$}
         \label{fig:relreward_vs_T}
     \end{subfigure}
     \caption{Performance metrics for Algorithm~\ref{alg:inbatch_lazy} across different time horizons.}
     \label{fig:combined_metrics}
\end{figure}

\section*{Acknowledgements} Christopher En and Yuri Faenza acknowledge support from the NSF Grant 2046146. Christopher En acknowledges the support of the Columbia University Presidential Fellowship and Cheung-Kong Innovation Doctoral Fellowship.

\newpage
\nocite{*}
\bibliographystyle{apalike}
\bibliography{main}

\newpage
\appendix

\section{Omitted Examples}~\label{app:examples}

In this section, we have two additional examples that show the importance of our assumptions on $d$. The first shows that without Assumption~\ref{d_assumption_w}, even solutions that are intuitively fair become restricted.

\begin{example}\label{ex:assumption-2}
    \textbf{Importance of Assumption~\ref{d_assumption_w} on $d$}. Consider a setting with two types of agents $\{u_1, u_2\}$ and two facilities $\{v_1, v_2\}$. There are two types of resources $\{n_1, n_2\}$, representing a finite resource associated with each facility. There are $S=100$ agents per batch and $T=1$ batch. The probability of each agent type for each agent is $1/2$. The weight and resource consumption vectors are given by
    \[w_{u_1} = \begin{bmatrix}
        1 \\ 1
    \end{bmatrix}\qquad w_{u_2} = \begin{bmatrix}
        0.5 \\ 0.5
    \end{bmatrix}\]
    \[b_{u_1} = \begin{bmatrix}
        1 & 0 \\
        0 & 1
    \end{bmatrix}\qquad b_{u_2} = \begin{bmatrix}
        1 & 0 \\
        0 & 1 
    \end{bmatrix}.\]
    Define the fairness constraint as
    \[d(u_{ts_1}, u_{ts_2}) \ge \gamma (\mathbb E[w_{ts_1}^\top \tilde x_{ts_1}] - \mathbb E[w_{ts_2}^\top \tilde x_{ts_2}]) \quad \forall t \in [T], \forall s_1, s_2 \in [S],\]
    where $d(u_{ts_1}, u_{ts_2}):= \frac{1}{2}||w_{ts_1} - w_{ts_2}||_\infty + \underline d||b_{ts_1} - b_{ts_2}||_\infty$ and $\gamma = 1$. Note that $d$ violates the second assumption for distance functions, with the additional $1/2$ factor. Consider the assignment 
    \[x_{ts} = \begin{bmatrix}
            0.5 \\
            0.5
        \end{bmatrix},\]
    for all $t,s$. That is, every agent has an equal chance of being assigned to either facility. This assignment is not $1$-fair for this distance function, despite intuitively being a highly fair assignment. If instead we adjust $d$ to satisfy the assumptions on distance functions, setting $d(u_{ts_1}, u_{ts_2}):= ||w_{ts_1} - w_{ts_2}||_\infty + ||b_{ts_1} - b_{ts_2}||_\infty$, we see that $x$ now satisfies the $1$-fairness constraint. In general, the assumption that $d(u_i, u_j)\ge ||w_i - w_j||_\infty$ ensures that any assignment where each agent receives the same (fractional) allocation is 1-fair.
    
\end{example}

The next example shows that without Assumption~\ref{d_assumption_b}, the fairness constraints may lead to a significant loss in the objective, and unreasonably punish some agents.

\begin{example}\label{ex:assumption-3}
    \textbf{Importance of Assumption~\ref{d_assumption_b} on $d$}.
    Consider a setting with two types of agents $\{u_1, u_2\}$ and two facilities $\{v_1, v_2\}$. There are three types of resources $\{n_1, n_2, n_3\}$. There are $S=100$ agents per batch and $T=1$ batch. The probability of each agent type for each agent is $1/2$. The weight and resource consumption vectors are given by
    \[w_{u_1} = \begin{bmatrix}
        1 \\ 0.5
    \end{bmatrix}\qquad w_{u_2} = \begin{bmatrix}
        1 \\ 0.5
    \end{bmatrix},\]
    \[b_{u_1} = \begin{bmatrix}
        1 & 1 & 0  \\
        0 & 0 & 1 
    \end{bmatrix}\qquad b_{u_2} = \begin{bmatrix}
        1 & 0 & 0  \\
        0 & 0 & 1 
    \end{bmatrix}.\]
    Resource capacity is given by $B_{n_1} = B_{n_3} ST/2 = 50$, and $B_{n_2} = 10$. Define the fairness constraint as
    \[d(u_{ts_1}, u_{ts_2}) \ge \gamma (\mathbb E[w_{ts_1}^\top \tilde x_{ts_1}] - \mathbb E[w_{ts_2}^\top \tilde x_{ts_2}]) \quad \forall t \in [T], \forall s_1, s_2 \in [S],\]
    where $d(u_{ts_1}, u_{ts_2}):= ||w_{ts_1} - w_{ts_2}||_\infty$. Note that $d$ violates the third assumption for distance functions. We see that $d(u_i, u_j) = 0$ for all $i, j$. The welfare-optimal assignment is given by
    \[x_{ts} = \begin{bmatrix}
            0.2 \\
            0.8
        \end{bmatrix}\text{ if } u_{ts} = u_1 \text{ and }\begin{bmatrix}
            0.5 \\
            0.2
        \end{bmatrix}\text{ if } u_{ts} = u_2.\]
    Under this assignment, each agent has expected utility $w_{ts}^\top x_{ts} = 0.6$. Furthermore, the total resource consumption is
    \[\sum_{t\in [T], s\in [S]}b_{ts}^\top x_{ts} = \begin{bmatrix}
        35 \\
        10 \\
        50
    \end{bmatrix}\]
    This assignment is intuitively suboptimal, as capacity of resource $n_1$ is wasted while agents of type $u_2$ go under-allocated. Furthermore, agents of type $u_2$ are punished for the lack of a resource they do not consume ($n_2$). This scenario arises because the two agent types, which have different resource requirements, are considered ``equivalent'' by the distance function. If we instead define $d(u_{ts_1}, u_{ts_2}):= ||w_{ts_1} - w_{ts_2}||_\infty + \underline d||b_{ts_1} - b_{ts_2}||_\infty$ with $\underline d = 0.3$, satisfying the assumptions on the distance function, the most efficient $1$-fair allocation is given by
    \[x_{ts} = \begin{bmatrix}
            0.2 \\
            0.8
        \end{bmatrix}\text{ if } u_{ts} = u_1 \text{ and }\begin{bmatrix}
            0.8 \\
            0.2
        \end{bmatrix}\text{ if } u_{ts} = u_2,\]
    which is a solution that wastes fewer resources, and does not excessively punish agents of type $u_2$.
\end{example}

\section{Proofs Omitted from Section~\ref{sec:offline}}\label{sec:app:offline}

First, we prove Lemma~\ref{lem:Alg:waterfill-terminates}. For $\tau \in [0,1]$, we let $a^\tau$ be the vector $a$ at time $\tau$ of the algorithm.

\begin{proof}[Proof of Lemma~\ref{lem:Alg:waterfill-terminates}]
    In each iteration of the while loop, one of several possible events may occur, relating to the selection of $\delta$ in subroutine \textsc{StepSize}.

    \textbf{Case 1}: $\delta\gets \delta_3 = 1-\tau$. This may occur exactly once.

    \textbf{Case 2}: $\delta\gets \delta_2$. That is, $\delta$ is set to the maximum value s.t.~$B'_{\xi_i} - \delta(b_{\xi_i}+\sum_{j\ne i:\zeta_j = \xi_i}b_{\xi_j})^\top y^*_{\xi_i} \ge 0$. That is, a resource constraint activates. Then, in line~\ref{alg:water:Xi'} of Algorithm~\ref{alg:water}, the vector $y^*_{\xi_i}$ is closed and $\xi_i$ is added to $\Xi'$. In all future iterations of subroutine \textsc{SelectFill}, $\xi_i$ can never be selected. Thus, this case can occur at most $|\Xi|$ times.

    \textbf{Case 3}: $\delta\gets \delta_1.$ That is, $\delta$ is set to the maximum value s.t.~$a_{\xi_j} + \delta_1 w_{\xi_j}^\top y^*_{\zeta_j} - a_{\xi_i} - \delta_1 w_{\xi_i}^\top y^*_{\zeta_i}\le d(u_{\xi_i}, u_{\xi_j})\ \forall i, j$ s.t. $a_{\xi_j} - a_{\xi_i}< d(u_{\xi_i}, u_{\xi_j})$. Thus, a fairness constraint activates. Let $a_{\xi_i}(\tau)$ be the value of $a_{\xi_i}$ at time $\tau$ in Algorithm~\ref{alg:water}. Define the state of the algorithm as the values $a_{\xi_i}(\tau)$ of each agent. Then, the algorithm defines a path through $\mathbb R^{\Xi}$ over time, which is monotonically increasing with $\tau$. Next, each fairness constraint defines a half-space through this space; let $\mathcal H$ be the finite set of these hyperplanes. The half-spaces divide the state space into a finite number of convex polyhedral regions. Case 3 occurs exactly when the current state $a(\tau)$ reaches the boundary of a half-space, and thus the boundary of a region. Since the state is monotonically increasing in $\tau$, and there are a finite number of regions, we see that Case 3 can only occur a finite number of times as well.
\end{proof}

Next, we prove Lemma~\ref{lem:water_fsbl}.

\begin{proof}[Proof of lemma~\ref{lem:water_fsbl}]
    The algorithm begins with the resources $B'_{\xi_i} \gets b_{\xi_i}^\top y^*_{\xi_i}$ that each agent $\xi_i\in \Xi$ consumes in the unfair matching $y^*$. Since $y^*$ is feasible, as long as the matching $x$ consumes less resources than $y^*$, then $x$ is feasible as well. In particular, we show that at all stages, $\sum_{\xi_i\in \Xi}b_{\xi_i}^\top y^*_{\xi_i} \ge \sum_{\xi_i\in \Xi}b_{\xi_i}^\top x_{\xi_i}$.
    
    Whenever $x$ is updated in line~\ref{alg:water:allocated}, $B'_{\xi_i}$ is updated immediately afterward in line~\ref{alg:water:update_resources}. The update reduces $B'_{\xi_i}$ by exactly the amount of resources consumed by all agents $j\ne i$ that are currently filling along the vector $y^*_{\xi}$ (i.e., agents with $\zeta_j = \xi_i$), plus the amount of resources that $\xi_i$ itself would consume or renounce. Furthermore, the step size $\delta$ determined in subroutine \textsc{StepSize} is bounded by $\delta_2 = $ max value s.t. $B'_{\xi_i} - \delta(b_{\xi_i}+\sum_{j\ne i:\zeta_j = \xi_i}b_{\xi_j})^\top y^*_{\xi_i} \ge 0$. Thus, the remaining capacity $B'$ is always nonnegative. Summing over all agents, the total resources consumed is bounded above by the change in $B'$ throughout the algorithm. We conclude that $\sum_{\xi_i\in \Xi}b_{\xi_i}^\top y^*_{\xi_i}\ge \sum_{\xi_i\in \Xi}B'_{\xi_i} \ge \sum_{\xi_i\in \Xi}b_{\xi_i}^\top x_{\xi_i}$.
\end{proof}

Next, we prove Lemma~\ref{lem:water_matching}.

\begin{proof}[Proof of Lemma~\ref{lem:water_matching}]
    We wish to show that each $\xi_i\in \Xi$ has $||x_{\xi_i}||_{1} \le 1$. For each iteration $k = 1,\dots, K$ of the while loop in line~\ref{alg:water:while}, let $\delta_k$ be the value of $\delta$ computed in \textsc{StepSize}, and let $y^*_k$ be the value of $y^*_{\zeta_i}$ in that iteration. Then, $x_{\xi_i} = \sum_{k = 1}^K \delta_k y^*_k$. Since $||y^*_k||\le 1$ for each $k$, we have $||x_{\xi_i}||_k = \sum_{k=1}^K \delta_k||y^*_k||_1 \le \sum_{k=1}^K\delta_k = 1$.
\end{proof}

Next, we prove Lemma~\ref{lem:water_fair}.

\begin{proof}[Proof of Lemma~\ref{lem:water_fair}]
    Consider any two agents $\xi_\alpha, \xi_\beta$.  Let $\tau^* = \sup_\tau \{0\le \tau\le 1 \mid a^\tau_{\xi_\alpha} - a^\tau_{\xi_\beta}\le d(u_{\xi_\alpha},u_{\xi_\beta})\}$. If $\tau^*=1$, then we are done. Hence, assume $\tau^*<1$. Then, in all calls of \textsc{SelectFill} at all times $\tau > \tau^*$ for agent $\xi_\beta$, the index $\alpha$ gets added to $J$ in Line \ref{alg:selectfill:J}. By Assumption~\ref{d_assumption_w} on the distance function, the allocation vector $y^*_{\zeta_\alpha}$ satisfies 
    $$d(u_{\xi_\alpha}, u_{\xi_\beta})\mathbf{1}^\top y^*_{\zeta_\alpha}\ge ||w_{\xi_\alpha} - w_{\xi_\beta}||_\infty\mathbf{1}^\top y^*_{\zeta_\alpha}\ge w_{\xi_\alpha}^\top y^*_{\zeta_\alpha} - w_{\xi_\beta}^\top y^*_{\zeta_\alpha}.$$
    Rearranging then gives $w_{\xi_\beta}^\top y^*_{\zeta_\alpha}\ge (w_{\xi_\alpha} - d(u_{\xi_\alpha}, u_{\xi_\beta})\mathbf{1})^\top y^*_{\zeta_\alpha}$. Since $\alpha\in J$, in lines~\ref{alg:selectfill:j*} and \ref{alg:selectfill:zeta} of \textsc{SelectFill} we know that $w_{\xi_\beta}^\top y^*_{\zeta_\beta}\ge w_{\xi_\beta}^\top y^*_{\zeta_\alpha}$, and so we have $w_{\xi_\beta}^\top y^*_{\zeta_\beta}\ge (w_{\xi_\alpha} - d(u_{\xi_\alpha}, u_{\xi_\beta})\mathbf{1})^\top y^*_{\zeta_\alpha}$. Rearranging and multiplying by $\delta$ gives $\delta w_{\xi_\alpha}^\top y^*_{\zeta_\alpha} - \delta w_{\xi_\beta}^\top y^*_{\zeta_\beta} \le \delta  d(u_{\xi_\alpha}, u_{\xi_\beta})\mathbf{1}^\top y^*_{\zeta_\alpha}\le \delta  d(u_{\xi_\alpha}, u_{\xi_\beta})$. Since each agent receives at most 1 total unit of allocation throughout the algorithm, or the sum of the step sizes $\delta$ is less than 1, $a_{\xi_\alpha}^1 - a_{\xi_\beta}^1 \le (a^{\tau^*}_{\xi_\alpha} - a^{\tau^*}_{\xi_\beta}) + d(u_{\xi_\alpha},u_{\xi_\beta}) \le 2d(u_{\xi_\alpha},u_{\xi_\beta})$, giving 1/2-fairness.
\end{proof}

To prove Lemma~\ref{lem:water_optimal}, we first need a helper result showing that whenever agent $\xi_i$ fills from another agent's vector, then fairness must be weakly broken between them.

\begin{lemma}\label{lem:triangle}
    In Algorithm~\ref{alg:selectfill}, suppose $\xi_i$ has $\zeta_i\gets \xi_k$ for some $k\ne i$ in Line~\ref{alg:selectfill:zeta} at some moment of the algorithm. Then, $a_{\xi_k} - a_{\xi_i} \ge d(u_{\xi_i}, u_{\xi_k})$ at that moment of the algorithm.
\end{lemma}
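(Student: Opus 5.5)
We argue by strong induction on the position $i$ in the sorted order within a single execution of the \textbf{for} loop of Algorithm~\ref{alg:water} (lines~\ref{alg:water:for_vector}--\ref{alg:water:endfor_vector}). The relevant values $a_\xi$ stay fixed during this loop; they change only in line~\ref{alg:water:allocated_weight}. So ``at that moment'' just means at the current values $a$. The inductive claim is the following: whenever $\zeta_i = \xi_k$ with $k\neq i$ after \textsc{SelectFill} processes index $i$, we have $a_{\xi_k}-a_{\xi_i}\ge d(u_{\xi_i},u_{\xi_k})$. We also note that $\zeta_i$ is always either $0$, $\xi_i$, or $\zeta_j$ for some $j<i$, and in the last case $\zeta_j$ points to an agent of index at most $j$.

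For the induction step, take the index $j^*$ chosen in line~\ref{alg:selectfill:j*}, so that $\zeta_i=\zeta_{j^*}$. If $j^*=i$, then $\zeta_i=\zeta_i$, which was initialized to $\xi_i$ (it was reset to $\xi_i$ before the loop), so $k=i$ and there is nothing to prove. Otherwise $j^*<i$ and $j^*\in J$ by line~\ref{alg:selectfill:J}. In both branches of the definition of $J$ we get $a_{\xi_{j^*}}-a_{\xi_i}\ge d(u_{\xi_i},u_{\xi_{j^*}})$.

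There are now two subcases. If $\zeta_{j^*}=\xi_{j^*}$, then $k=j^*$ and we are done. Otherwise $\zeta_{j^*}=\xi_k$ with $k\neq j^*$ and $k<j^*$. By the inductive hypothesis applied to $j^*$, we have $a_{\xi_k}-a_{\xi_{j^*}}\ge d(u_{\xi_{j^*}},u_{\xi_k})$. Adding the two inequalities gives
\[
a_{\xi_k}-a_{\xi_i}\ \ge\ d(u_{\xi_i},u_{\xi_{j^*}})+d(u_{\xi_{j^*}},u_{\xi_k})\ \ge\ d(u_{\xi_i},u_{\xi_k}),
\]
where the last step uses symmetry of $d$ and the triangle inequality, Assumption~\ref{d_assumption_t}. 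This closes the induction.

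The main subtlety is the bookkeeping of what $\zeta_{j^*}$ holds when index $i$ is processed. We need that $\zeta_{j^*}$ has already been finalized in this pass, which holds since $j^*<i$. We also need that the initialization $\zeta_i\gets\xi_i$ is what is compared in the case $j^*=i$. One more case is needed: the tie branch of $J$ compares $w_{\xi_j}^\top y^*_{\zeta_j}$ with $w_{\xi_i}^\top y^*_{\zeta_i}$, where $\zeta_i$ might not yet be finalized. This only affects whether $j$ enters $J$, not the inequality $a_{\xi_j}-a_{\xi_i}\ge d$, so the argument is unaffected.

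Finally, since the step sizes are chosen after all $\zeta$ are selected, the inequality holds at the moment the assignment is made. That is exactly what the statement requires.
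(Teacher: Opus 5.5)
Your proof is correct and is essentially the paper's argument. The paper unrolls the pointer chain $i \to \ell_1 \to \cdots \to \ell_n = k$ and inducts along it, adding the $J$-membership inequalities $a_{\xi_{\ell_{m+1}}} - a_{\xi_{\ell_m}} \ge d(u_{\xi_{\ell_m}}, u_{\xi_{\ell_{m+1}}})$ via the triangle inequality, which is exactly what your strong induction on the sorted position does (and you additionally make explicit the $j^* = i$ case, the invariant $k \le j^* < i$, and that the values $a$ are frozen during the \textbf{for} loop).
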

 
\begin{proof}
    Let $\tau$ be the current time of the algorithm. We know that if $\zeta_i\gets \zeta_{\ell_1}$ for some $\ell_1<i$ in Line \ref{alg:selectfill:zeta} of \textsc{SelectFill}, then $\ell_1 = \arg\max_{j\in J}w_{\xi_i}^\top y^*_{\zeta_j}$. Recall that, at time $\tau$, \textsc{SelectFill} is called for each agent $\xi$ in non-decreasing order of $a_\xi$. Thus, \textsc{SelectFill} was called at time $\tau$ on $\xi_{\ell_1}$ before it was called for $\xi_i$. In the earlier execution of \textsc{SelectFill} corresponding to $\ell_1$, if we also had $\zeta_{\ell_1}\gets \zeta_{\ell_2}$ for some $\ell_2<\ell_1$, then we must have had $\ell_2 = \arg\max_{j\in J}w_{\xi_{\ell_1}}^\top y^*_{\zeta_j}$. This reasoning continues until $\zeta_{\ell_{n-1}}\gets \zeta_{\ell_{n}}$ where $\ell_n = k$. We conclude the proof by showing by induction on $m$ that  
    \[a_{\xi_{\ell_{m}}}- a_{\xi_i}\ge d(u_{\xi_i}, u_{\xi_{\ell_n}})\]
    for $m=1,\dots, n$.
    For the base case, we know by Line \ref{alg:selectfill:J} of \textsc{SelectFill} that 
    \[a_{\xi_{\ell_1}} - a_{\xi_i} \ge d(u_{\xi_i}, u_{\xi_{\ell_1}}).\]
    For the inductive step, assume that
    \[a_{\xi_{\ell_m}} - a_{\xi_i}\ge d(u_{\xi_i}, u_{\xi_{\ell_m}}),\label{eq:triangle1}\]
    for some $m=1,\dots, k-1$. By the execution of Line \ref{alg:selectfill:J} of \textsc{SelectFill} corresponding to the $m+1$ iteration of the for loop, we know that
    \[a_{\xi_{\ell_{m+1}}} - a_{\xi_{\ell_m}}\ge d(u_{\xi_{\ell_m}}, u_{\xi_{\ell_{m+1}}})\label{eq:triangle2}.\]
     Adding equations \eqref{eq:triangle1} and \eqref{eq:triangle2} and applying the triangle inequality $d(u_{\xi_i}, u_{\xi_{\ell_m}}) + d(u_{\xi_{\ell_m}}, u_{\xi_{\ell_{m+1}}}) \ge d(u_{\xi_i}, u_{\xi_{\ell_{m+1}}}) $ (Assumption~\ref{d_assumption_t}) gives us
    \[a_{\xi_{\ell_{m+1}}} - a_{\xi_{i}} \ge d(u_{\xi_{i}}, u_{\xi_{\ell_{m+1}}}).\]
\end{proof}
 
We can now prove Lemma~\ref{lem:water_optimal}.
\begin{proof}[Proof of Lemma~\ref{lem:water_optimal}]
    The unfair solution $y^*$ has total value $\sum_{\xi\in \Xi}w_\xi^\top y^*_\xi$. Now, we analyze the value of the final allocation $x$ obtained by Algorithm~\ref{alg:water}, by partitioning the value of $x$ over the agents $\xi\in \Xi$ in a manner different from the natural choice, and comparing to the unfair solution.

    Fix $\xi_i\in \Xi$. Over the course of Algorithm~\ref{alg:water}, resources from $B'_{\xi_i}$ are depleted when the following happens:

    \begin{enumerate}
        \item Some agent $\xi_k$ fills along vector $y^*_{\xi_i}$. For this to happen, $\xi_i$ must not have been added to $\Xi'$ by some prior execution of Line \ref{alg:water:Xi'} of Algorithm~\ref{alg:water}. 
        \begin{enumerate}
            \item If $k=i$, agent $\xi_i$ fills their allocation along vector $y^*_{\xi_i}$ for $\delta$ time in line \ref{alg:water:allocated} of Algorithm~\ref{alg:water}, and a corresponding amount of resources is removed from the resource budget $B'_{\zeta_i}$ in line~\ref{alg:water:update_resources}.  We refer to any allocation of this type as \textit{base} allocation for vector $y^*_{\xi_i}$. 
            \item If $k\neq i$, then similarly agent $\xi_k$ fills their allocation along vector $y^*_{\xi_i}$ for $\delta$ time in Line \ref{alg:water:allocated} of Algorithm~\ref{alg:water}, and  $B'_{\xi_i}$ is decreased by $\delta b_{\xi_k}^\top y^*_{\xi_i}$. We refer to any allocation of this type as \textit{outside} allocation from agent $\xi_k$ to vector $y^*_{\xi_i}$. 
        \end{enumerate}
        \item Agent $\xi_i$ fills their allocation along vector $y^*_{\xi_k}$ for some $k\ne i$. This happens if and only if, in line~\ref{alg:selectfill:zeta} of \textsc{SelectFill}, we set $\zeta_i \gets \xi_k$. In this case, agent $\xi_i$ fills their allocation for $\delta$ time along a different vector $y^*_{\xi_i}$ in Line \ref{alg:water:allocated}. While $x_{\xi_i}$ does not fill along vector $y^*_{\xi_i}$, when computing the resource update in Line~\ref{alg:water:update_resources}, $B'_{\xi_i}$ is still decreased by an amount $\delta b_{\xi_i}^\top y^*_{\xi_i}$, corresponding to the renounced capacity. In particular, note that any resource capacity renounced from $B'_{\xi_i}$ corresponds to an equal amount of resource capacity allocated to agent $\xi_i$ via outside allocation along another vector $y^*_{\xi_k}$. 
    \end{enumerate}

    Fix any $\xi_i, \xi_k\in \Xi$, who may or may not be distinct. Throughout the algorithm, $\xi_k$ will fill along vector $y^*_{\xi_i}$ for a total of at most  $1/\max(b_{\xi_k}^\top y^*_{\xi_i} / b_{\xi_i}^\top y^*_{\xi_i})$ time (where the vector division and max operations are performed elementwise) before a resource is exhausted. In the previous definition, we let $0/0:=1$. We consider in Case 1 below the situation where $\max(b_{\xi_k}^\top y^*_{\xi_i} / b_{\xi_i}^\top y^*_{\xi_i})$ is undefined, i.e., the supremum is achieved for a non-zero numerator and a denominator equal to $0$. In all other cases, if agent $\xi_k$ spends $\delta$ time filling along vector $y^*_{\xi_i}$, we say they have consumed $\delta\max(b_{\xi_k}^\top y^*_{\xi_i} / b_{\xi_i}^\top y^*_{\xi_i})$ \emph{units}
        of allocation along vector $y^*_{\xi_i}$. We categorize the units of allocation along vector $y^*_{\xi_i}$ into the following three types:
    \begin{itemize}
        \item $\alpha$ units of base allocation, i.e., $\xi_i$ filling along $y^*_{\xi_i}$;
        \item $\beta$ units of renounced allocation along $y^*_{\xi_i}$, corresponding to outside allocation between agent $\xi_i$ and a different vector $y^*_{\xi_k}$;
        \item $\lambda$ units of outside allocation of $\xi_{k}\neq \xi_i$ filling along $y^*_{\xi_i}$.
    \end{itemize}    

    Consider the end of the algorithm and recall that we defined $a_\xi := w_\xi^\top x_\xi$ to be the total value accumulated by an agent $\xi\in \Xi$. We next define a vector $a' \in \mathbb{R}^{\Xi}$. For $\xi_i \in \Xi$, start from $a'_{\xi_i}=0$ and increase it as follows:
    \begin{itemize}
        \item If $\xi_i$ fills along vector $y^*_{\xi_i}$ for $\delta$ time, then $\delta w_{\xi_i}^\top y^*_{\xi_i}$ is added to $a'_{\xi_i}$.
        
        \item If $\xi_i$ fills along vector $y^*_{\xi_i}$ for $\delta$ time for some $k \neq i$, then $\frac{1}{2}\delta w_{\xi_i}^\top y^*_{\xi_k}$ is added to $a'_{\xi_i}$ and $\frac{1}{2}\delta w_{\xi_i}^\top y^*_{\xi_k}$ is added to $a'_{\xi_k}$.
    \end{itemize}

    Then, $a'$ can be seen as an alternative partition of the value from $a$. Indeed, the following fact immediately follows from the definition of $a'$.
    
    \begin{myclaim}\label{cl:a-vs-a'}
    $\sum_{\xi_i\in \Xi}a'_{\xi_i}= \sum_{\xi_i\in \Xi}a_{\xi_i}$.
    \end{myclaim}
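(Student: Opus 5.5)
The claim is a bookkeeping identity, and I will prove it by tracking increments over the run of Algorithm~\ref{alg:water}. Both $a$ and $a'$ start at $0$ and change only at Line~\ref{alg:water:allocated}. So it suffices to show that in every iteration, and for every agent $\xi_i$, the increment that $\xi_i$'s filling adds to $\sum_\xi a_\xi$ equals the increment it adds to $\sum_\xi a'_\xi$. Summing over agents and then over the finitely many iterations (finite by Lemma~\ref{lem:Alg:waterfill-terminates}) then gives the claim.

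Fix an iteration with step size $\delta$ and an agent $\xi_i$ with filling index $\zeta_i$. I will split into three cases.

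\textbf{Case $\zeta_i = 0$.} Here $y^*_0 = 0$, so nothing is added to $x_{\xi_i}$, to $a_{\xi_i}$, or to $a'$.

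\textbf{Case $\zeta_i = \xi_i$.} Line~\ref{alg:water:allocated} adds $\delta y^*_{\xi_i}$ to $x_{\xi_i}$, so $a_{\xi_i} = w_{\xi_i}^\top x_{\xi_i}$ grows by exactly $\delta w_{\xi_i}^\top y^*_{\xi_i}$. By the first rule defining $a'$, the same amount is added to $a'_{\xi_i}$, and no other entry of $a'$ changes.

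\textbf{Case $\zeta_i = \xi_k$ with $k \ne i$.} Here $a_{\xi_i}$ grows by $\delta w_{\xi_i}^\top y^*_{\xi_k}$. The second rule is evidently meant to read ``$\xi_i$ fills along $y^*_{\xi_k}$''; the printed ``$y^*_{\xi_i}$'' looks like a typo, since with $y^*_{\xi_i}$ the split value $w_{\xi_i}^\top y^*_{\xi_k}$ would make no sense. Under that rule, $\frac12\delta w_{\xi_i}^\top y^*_{\xi_k}$ is credited to $a'_{\xi_i}$ and another $\frac12\delta w_{\xi_i}^\top y^*_{\xi_k}$ to $a'_{\xi_k}$. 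The total added to $\sum_\xi a'_\xi$ is therefore again $\delta w_{\xi_i}^\top y^*_{\xi_k}$.

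Renounced capacity affects only $B'$, not $x$, so it contributes nothing to either sum. Since $a$ is linear in $x$ and every change to $x$ is covered by exactly one of the cases above, the increments agree.

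The only delicate point is making sure that every unit of value is attributed exactly once in $a'$, with no double counting of the outside allocation toward $\xi_k$ through its own base filling. That is immediate because each increment of $a'$ is tied to a specific filling event of a specific agent, and each filling event appears exactly once.
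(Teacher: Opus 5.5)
Your proof is correct and follows the paper's reasoning: the paper only asserts that the claim follows immediately from the definition of $a'$, and your per-iteration bookkeeping spells this out. Each filling event of $\xi_i$ along $y^*_{\zeta_i}$ raises $a_{\xi_i}$ by $\delta w_{\xi_i}^\top y^*_{\zeta_i}$, and $a'$ redistributes exactly that amount, either all to $\xi_i$ or half each to $\xi_i$ and $\xi_k$. You are also right that the second rule should read ``fills along $y^*_{\xi_k}$'' and that the printed $y^*_{\xi_i}$ is a typo.
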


    Thus, bounding $a$ is equivalent to bounding $a'$. We wish therefore to lower bound the amount of value $a'_{\xi_i}$ assigned to any $\xi_i \in \Xi$ in the partition. We consider two cases:

    \smallskip

    \noindent\textbf{Case 1}: At any time during the algorithm, some agent $\xi_k$ attempts to fill along vector $y^*_{\xi_i}$, but $\max(b_{\xi_k}^\top y^*_{\xi_i} / b_{\xi_i}^\top y^*_{\xi_i})$ is undefined. This case occurs when there is some resource $\ell$ such that $(b_{\xi_i}^\top y^*_{\xi_i})_\ell = B'_{\xi_i\ell} = 0$ at initialization, and some other agent $\xi_k$ with $(b_{\xi_k}^\top y^*_{\xi_i})_\ell>0$ (i.e., $\xi_k$ requires resource $\ell$ when allocated to vector $y^*_{\xi_i}$) wanted to fill along vector $y^*_{\xi_i}$. When this happens, $\xi_i$ is closed and added to $\Xi'$. In this case, we know that $y^*_{\zeta_k} = y^*_{\xi_i}$ at this point in the algorithm. This is only possible if $i$ is added to $J$ in the execution of Line \ref{alg:selectfill:J} in \textsc{SelectFill} corresponding to $\xi_k$, which means that $a_{\xi_i} - a_{\xi_k}\ge d(u_{\xi_i}, u_{\xi_k})$.  Furthermore, we know that $b_{\xi_i}\ne b_{\xi_k}$, and so by Assumption~\ref{d_assumption_b} on the distance function $d$ we have $d(u_{\xi_i}, u_{\xi_k})\ge \underline d$. It follows that $a_{\xi_i}\ge \underline d$, which remains true until the end of the algorithm, as $a_{\xi_i}$ monotonically increases throughout the algorithm. 
    
    Now, the value $a_{\xi_i}$ can be obtained from the base allocation of $\xi_i$, and / or some outside allocation of $\xi_i$ along some other filling vectors $y^*_{\xi_j}$. Note that all value corresponding to the first type of allocation is also added to $a'_{\xi_i}$; of the value corresponding to the second type of allocation, only half is added to $a'_{\xi_i}$. Thus, $a'_{\xi_i}\geq \frac{1}{2}a_{\xi_i}\ge \frac{1}{2}\underline d$. Since for the offline optimum $y^*_{x_i}$ we have $w_{\xi_i}^\top y^*_{\xi_i}\le 1$ (since we assumed $\|w\|_\infty \leq 1$ and $y^*$ is a fractional matching), we conclude that the contribution of $a'_{\xi_i}$ to the total value of $x$ is at least $\frac{1}{2}\underline d$ times $w_{\xi_i}^\top y^*_{\xi_i}$.

    \smallskip

    \noindent\textbf{Case 2}: Case 1 does not apply. Then, we know that $\alpha+\beta+\lambda \ge 1$. That is, the capacity is split between $\alpha$ units of base allocation, $\beta$ units of renounced capacity corresponding to outside allocation between agent $\xi_i$ and other vectors, and $\lambda$ units of outside allocation between vector $y^*_{\xi_i}$ and other agents. We examine the value assigned in the partition $a'$ to $\xi_i$ from each part (note that while the capacity contains $\beta+\lambda$ units of outside allocation, only half of that value is assigned to $a'_{\xi_i}$ in the partition), namely

    \begin{itemize}
        \item The base allocation contributes to $a'_{\xi_i}$ a value $\alpha w_{\xi_i}^\top y^*_{\xi_i}$.

        \item Let $\phi_{\xi_i}$ be the contribution to $a'_{\xi_i}$ of the outside allocations of agent $\xi_i$ along vector $y^*_{\xi_k}$ with $k\neq i$. Observe that $\phi_{\xi_i}\geq \frac{1}{2}\beta w_{\xi_i}^\top y^*_{\xi_i}$. This is by definition of $a'$ and because $\xi_i$ obtains outside value from another vector $y^*_{\xi_k}$ only when $w_{\xi_i}^\top y^*_{\xi_k}\ge w_{\xi_i}^\top y^*_{\xi_i}$. That is, the weight of the outside allocation must be greater than the weight that agent $\xi_i$ could obtain from its own vector $\xi_i$.

        \item Now consider the contribution to $a'_{\xi_i}$ from the $\lambda$ units of outside allocation from agents other than $\xi_i$ who fill along vector $y^*_{\xi_i}$. In order for some other agent $\xi_k$ to be eligible to receive allocation from vector $y^*_{\xi_i}$ in this way at time $\tau$, we know by Lemma~\ref{lem:triangle} that $a^\tau_{\xi_i} - a^\tau_{\xi_k}\ge d(u_{\xi_i}, u_{\xi_k})$. In particular, this holds for the last time $\tau$ that agent $\xi_k$ fills along vector $y^*_{\xi_i}$. Next, note that since $a_{\xi_i}$ increases monotonically throughout the algorithm, $a^\tau_{\xi_i} \le a^1_{\xi_i} = 2\phi_{\xi_i} + \alpha w_{\xi_i}^\top y^*_{\xi_i}$.
        
         Finally, we note that other agents $\xi_k$ may take up resources at a different rate than agent $\xi_i$. The worst case occurs when $(b_{\xi_i}^\top y^*_{\xi_i})_\ell = 1$ and $(b_{\xi_k}^\top y^*_{\xi_i})_\ell = \overline b$ for some resource $\ell$, in which case 1 unit of allocation along vector $y^*_{\xi_i}$ only corresponds to $\frac{1}{\overline b}$ time for agent $\xi_k$ filling along vector $y^*_{\xi_i}$. Then, by using Assumption~\ref{d_assumption_w} on $d(\cdot,\cdot)$, the contribution to $a'_{\xi_i}$ of these $\lambda$ units of outside allocation is at least
        \[\frac{1}{2\overline b}\lambda (w_{\xi_i}^\top y^*_{\xi_i} - \max_{\xi_k\ne \xi_i} d(u_{\xi_i}, u_{\xi_k})) \ge \frac{1}{2\overline b}\lambda (w_{\xi_i}^\top y^*_{\xi_i} - a^\tau_{\xi_i}) \ge \frac{1}{2\overline b}\lambda (w_{\xi_i}^\top y^*_{\xi_i} - 2\phi_{\xi_i} - \alpha w_{\xi_i}^\top y^*_{\xi_i}).\nonumber\]
        
    \end{itemize}
    Putting the three contributions above together, we deduce that the value of $a'_{\xi_i}$ is at least
    \[a'_{\xi_i} \ge \alpha w_{\xi_i}^\top y^*_{\xi_i} + \phi_{\xi_i} + \frac{1}{2\overline b}\lambda (w_{\xi_i}^\top y^*_{\xi_i} - 2\phi_{\xi_i} - \alpha w_{\xi_i}^\top y^*_{\xi_i}).\nonumber\]
    The value $\xi_i$ obtains from the unfair fluid allocation is $w_{\xi_i}^\top y^*_{\xi_i}$, giving an approximation ratio of at least
    \[\frac{\alpha w_{\xi_i}^\top y^*_{\xi_i} + \phi_{\xi_i} + \frac{1}{2\overline b}\lambda (w_{\xi_i}^\top y^*_{\xi_i} - 2\phi_{\xi_i} - \alpha w_{\xi_i}^\top y^*_{\xi_i})}{w_{\xi_i}^\top y^*_{\xi_i}}.\nonumber\]
    This ratio is increasing in $\phi_{\xi_i}$ because $\overline b\ge 1$, and so is minimized when $\phi_{\xi_i}$ is as small as possible. We know that $\phi_{\xi_i}\ge \frac{1}{2}\beta w_{\xi_i}^\top y^*_{\xi_i} \ge \frac{1}{2}(1-\alpha-\lambda)w_{\xi_i}^\top y^*_{\xi_i}$, so the approximation ratio is at least
    \[\frac{\alpha w_{\xi_i}^\top y^*_{\xi_i} + \frac{1}{2}(1-\alpha-\lambda)w_{\xi_i}^\top y^*_{\xi_i} + \frac{1}{2\overline b}\lambda (w_{\xi_i}^\top y^*_{\xi_i} - (1-\alpha-\lambda)w_{\xi_i}^\top y^*_{\xi_i} - \alpha w_{\xi_i}^\top y^*_{\xi_i})}{w_{\xi_i}^\top y^*_{\xi_i}}\nonumber\]
    \[ = \frac{\overline b \alpha - \overline b \lambda + \overline b + \lambda^2}{2\overline b}.\nonumber\]
    Thus, the ratio is monotonically increasing in $\alpha$, and is minimized at $\alpha=0$. Substituting gives us a ratio of
    \[\frac{-\overline b \lambda + \overline b + \lambda^2}{2\overline b}.\nonumber\]
    The global minimum of the ratio is then achieved with $\lambda = \overline b / 2$. Since we require $\lambda\le 1$, we conclude that
    \begin{itemize}
        \item If $\overline b \ge 2$, the minimum is achieved at $\lambda = 1$, leading to an approximation ratio of $\frac{1}{2\overline b}$.
        \item If $\overline b = 1$, then the minimum is achieved at $\lambda = \overline b / 2 = 1/2$, leading to an approximation ratio of $3/8$.
    \end{itemize}
\end{proof}

\section{Proofs Omitted from Section~\ref{sec:algorithm}}\label{sec:app:algorithm}

In this section, we present the proof of Lemma~\ref{lem:dual}.

\begin{proof}[Proof of Lemma~\ref{lem:dual}]
    For any fixed choice of Lagrange multipliers $\mu \geq 0$, we can write
    \begin{align*}
        \flu(F) &=\mathbb E_{F}\begin{bmatrix}
        \max_{X_t\in \mathcal X_t} & \sum_{t=1}^T\sum_{s=1}^Sw_{ts}^\top X_{t}^\top e_{ts} \\
        \text{s.t.} & \sum_{t=1}^T\sum_{s=1}^S b_{ts}^\top X_t^\top  e_{ts}\le TS\rho
        \end{bmatrix}\nonumber \\
        &\le \mathbb E_{F}\left[
        \max_{X_t\in \mathcal X_t} \sum_{t=1}^T\sum_{s=1}^Sw_{ts}^\top X_{t}^\top e_{ts}- \sum_{t=1}^T\sum_{s=1}^S \mu^\top b_{ts}^\top X_t^\top e_{ts} + TS\mu^\top\rho \right].
    \end{align*}
    Factoring out $T$ and computing the expectation over all arrival sequences $\pi$ then gives us 
    \begin{align*}
        \flu(F) &\le T\mathbb E_{\pi\sim F}\left[
        \max_{X_\pi\in \mathcal X_\pi} \sum_{s=1}^S w_{\pi(s)}^\top X_{\pi}^\top e_{\pi(s)} - \sum_{s=1}^S (b_{\pi(s)}\mu)^\top X_{\pi}^\top e_{\pi(s)} + S\mu^\top\rho \right]\\
        &= T\left(\sum_{\pi\in \Pi}p_\pi \max_{X_\pi\in\mathcal X_\pi}\left\{\sum_{s=1}^Sw_{\pi(s)}^\top X_{\pi}^\top e_{\pi(s)} - (b_{\pi(s)}\mu)^\top X_{\pi}^\top e_{\pi(s)}\right\} + S\mu^\top\rho\right). \\
    \end{align*}
    Finally, using the definition of the conjugate gives
    \begin{align*}
        \flu(F) &\leq  T\left(\sum_{\pi\in\Pi} p_\pi w^*_\pi(\mathbf{b}\mu) + S\mu^\top\rho\right) \\
        &= TD(\mu).
    \end{align*}
\end{proof}

\section{Proofs Omitted from Section~\ref{sec:online}}\label{sec:app:online}
 
To prove Lemma~\ref{lem:stopping_time}, we first show that with a suitable step size $\eta$, the updated dual variables $\mu_t$ never exceed the upper bound $\mu^{max}$.

\begin{lemma}\label{lem:stepsize}
    Let $t \leq \tau_A$. Suppose $\mu_{t-1}\le \mu^{max}$ and $\eta\le \frac{\sigma_2}{S\overline b}$. Then $\mu_t \le \mu^{max}$.
\end{lemma}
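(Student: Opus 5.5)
The plan follows the analogous step in \citet{balseiro2020dual}: use the separability of $h$ to reduce the mirror-descent update to one-dimensional problems, then bound each coordinate separately. Since $h(\mu)=\sum_j h_j(\mu_j)$, the update
\[
\mu_t=\arg\min_{\mu\ge 0}\ \tilde g_t^\top\mu+\tfrac1\eta V_h(\mu,\mu_{t-1})
\]
decouples across coordinates. The first-order optimality conditions then give, for each resource $j$,
\[
\dot h_j(\mu_{t,j})=\dot h_j(\mu_{t-1,j})-\eta\tilde g_{t,j}
\]
when $\mu_{t,j}>0$, and $\mu_{t,j}=0$ otherwise. Strong convexity makes each $\dot h_j$ strictly increasing. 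Hence $\mu_{t,j}>\mu_{t-1,j}$ happens only when $\tilde g_{t,j}<0$.

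Fix a coordinate $j$ and split into two cases.

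\textbf{Case (a): $\mu_{t-1,j}$ is well below $\mu^{max}_j$.} The $\ell_2$ strong convexity ($\sigma_2$) bounds how far one step can move:
\[
\sigma_2|\mu_{t,j}-\mu_{t-1,j}|\le|\dot h_j(\mu_{t,j})-\dot h_j(\mu_{t-1,j})|=\eta|\tilde g_{t,j}|.
\]
Since $\tilde g_{t,j}=\sum_s(\rho_j-(b_{ts}^\top X^{*\top}_te_{ts})_j)\ge -S\overline b$, an increase is at most $\eta S\overline b/\sigma_2\le 1$ by the step-size assumption. So if $\mu_{t-1,j}\le \overline w/\rho_j=\mu^{max}_j-1$, then $\mu_{t,j}\le\mu^{max}_j$.

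\textbf{Case (b): $\mu_{t-1,j}>\overline w/\rho_j$.} Here I would show $\tilde g_{t,j}\ge 0$, so the coordinate cannot increase. The key fact needed is that when the price of resource $j$ exceeds $\overline w/\rho_j$, the batch problem computing $X^*_t$ allocates nothing that consumes resource $j$. Then $(b_{ts}^\top X^{*\top}_te_{ts})_j=0$ for all $s$, giving $\tilde g_{t,j}=S\rho_j\ge0$. In the unconstrained setting this is immediate: any allocation using resource $j$ has per-unit reduced value $w-b\mu<\overline w-\rho_j^{-1}\overline w\cdot(\text{positive integer})\le 0$ once we note $b\in\mathbb N$ and $\rho_j\le$ appropriate bounds. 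More precisely, $b_{jv}\ge1$ whenever it is nonzero, so the reduced value is $\le \overline w-\mu_j<0$ if $\rho_j\le1$; otherwise I would compare against $\mu_j>\overline w$.

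\textbf{Main obstacle: fairness in Case (b).} The set $\mathcal X_t$ couples agents within the batch, so zeroing out one agent's allocation on resource-$j$ facilities can violate fairness against other agents. To handle this I would argue by perturbation. Take the optimal $X^*_t$, and scale down every agent's mass on facilities that consume resource $j$. I would check whether a uniform scaling preserves the Lipschitz constraints; it likely does not. The alternative is a uniform scaling of the whole allocation, $X\mapsto cX$ with $c<1$, which keeps fairness because the constraints are homogeneous and $d\ge0$. Under this scaling the objective changes by $(1-c)$ times the total reduced value, and that total is negative whenever the resource-$j$ terms dominate.

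If even this fails, I would weaken the claim to $\tilde g_{t,j}\ge0$ directly. The route would be to show that the aggregate consumption $\sum_s(b^\top X^{*\top}e)_j$ must be at most $S\rho_j$, since otherwise scaling down strictly improves the objective. This is the step where I would concentrate the effort.
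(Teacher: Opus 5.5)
\textbf{Gap in Case (b).} Your Case (a) is correct and is the paper's Case 2. Your handling of the boundary $\mu_{t,j}=0$ through the KKT condition is also cleaner than the paper's, which writes the first-order condition as an equality.

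The gap is in Case (b). The ``key fact'' you build it on is that for $\mu_{t-1,j}>\overline w/\rho_j$ the batch problem puts no mass on facilities consuming resource $j$. This is false, and not only because of fairness:
\begin{itemize}
\item If $\rho_j>1$, then $\mu_{t-1,j}$ can lie in $(\overline w/\rho_j,\overline w)$. An agent with $w_{uv}=\overline w$ that consumes one unit of $j$ at $v$ then has strictly positive reduced value at $v$. Nothing in the lemma forces $\mu_{t-1,j}>\overline w$, so your fallback ``compare against $\mu_j>\overline w$'' is unavailable.
\item Even with $\rho_j=1$, the fairness coupling breaks it. Take one facility and two agents with $w_{u_1}=w_{u_2}=1$ and $\overline w=1$. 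Let $b_{u_1}=0$ and let $b_{u_2}$ consume one unit of $j$. Set $d(u_1,u_2)=\underline d=0.1$, $\gamma=1$, and $\mu_{t-1,j}=1.5$. The batch objective $x_1-0.5x_2$ subject to $|x_1-x_2|\le 0.1$ is maximized at $x_1=1$, $x_2=0.9$. So agent $2$ receives mass $0.9$ on a facility with negative reduced value.
\end{itemize}
No per-agent or per-facility argument can work here.

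\textbf{The fix.} What works is the aggregate statement you mention only as a last resort. It takes three lines and is exactly the paper's proof.
\begin{itemize}
\item Since $d\ge 0$, the zero allocation lies in $\mathcal X_t$; this is your scaling $c\,X^*_t$ with $c=0$. Optimality of $X^*_t$ then gives $\sum_s (w_{ts}-b_{ts}\mu_{t-1})^\top X^{*\top}_t e_{ts}\ge 0$.
\item Each lottery has total mass at most $1$, so $\sum_s w_{ts}^\top X^{*\top}_t e_{ts}\le S\overline w$.
\item Since $\mu_{t-1}\ge 0$ and all consumptions are nonnegative, you may drop every resource except $j$: $\mu_{t-1,j}\sum_s (b_{ts}^\top X^{*\top}_t e_{ts})_j\le \sum_s \mu_{t-1}^\top b_{ts}^\top X^{*\top}_t e_{ts}\le S\overline w$.
\end{itemize}
Hence $\sum_s (b_{ts}^\top X^{*\top}_t e_{ts})_j\le S\overline w/\mu_{t-1,j}<S\rho_j$, i.e.\ $\tilde g_{t,j}>0$. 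Your monotonicity argument then gives $\mu_{t,j}\le\mu_{t-1,j}\le\mu^{max}_j$.

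No ``dominance'' hypothesis is needed; it is automatic. Consumption above $S\rho_j$ would make the resource-$j$ charge alone exceed $\mu_{t-1,j}S\rho_j>S\overline w$, which is the largest possible batch welfare.
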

\begin{proof}
    Fix any $j\in [N]$. We wish to show that $(\mu_t)_j\le \mu^{max}_j$. By Algorithm~\ref{alg:inbatch_lazy}, we have
     \[\mu_t = \arg\min_{\mu\ge 0} \tilde g_t^\top \mu + \frac{1}{\eta} V_h(\mu, \mu_{t-1}). \nonumber\]
    To solve for $\mu_t$, the first-order condition gives us
    \[\eta \left(\sum_{s=1}^S(\rho-b_{ts}^\top X_t^{*\top} e_{ts})\right) + \nabla h(\mu_{t}) - \nabla h(\mu_{t-1}) = 0. \nonumber\]
    Restricting to coordinate $j$ and rearranging gives us
    \[\dot h_j ((\mu_t)_j) = \dot h_j((\mu_{t-1})_j) + \eta \sum_{s=1}^S(b_{ts}^\top X_t^{*\top} e_{ts})_j - \eta S\rho_j.\label{lem:stepsize:derivative}\]
    Define $h^*_j(c) = \max_{\psi \in \mathbb{R}} \{c\psi - h_j(\psi)\}$ for $c \in \mathbb{R}$  as the conjugate function of $h_j$. Then by standard properties of conjugate functions (see, e.g., \citet{boyd2004convex}), we know that $h_j^*$ is $\frac{1}{\sigma_2}$-smooth univariate increasing convex function, such that $\dot h_j^*(\dot h_j(\psi)) = \psi$.
    By the definition of $X^*_t$ in Algorithm~\ref{alg:inbatch_lazy}, and because $0\in \mathcal X_t$, we know that
    \[0 \le \sum_{s=1}^S w_{ts}^\top X_t^{*\top}e_{ts} - \mu_{t-1}^\top b_{ts}^\top X_t^{*\top}e_{ts}. \nonumber\]
    By definition of $\overline w$, this also gives us
    \[0 \le \sum_{s=1}^S \overline w - \mu_{t-1}^\top b_{ts}^\top X_t^{*\top}e_{ts}. \nonumber\]
    Rearranging then gives
    \[S\overline w \ge  \sum_{s=1}^S \mu_{t-1}^\top b_{ts}^\top X_t^{*\top}e_{ts}=  \sum_{s=1}^S \sum_{\hat \jmath=1}^N (\mu_{t-1})_{\hat \jmath} (b_{ts}^\top X_t^{*\top}e_{ts})_{\hat \jmath}. \nonumber\]
    Since $\mu_{t-1}\ge 0$ and $(b_{ts}^T {X^*_t}^T e_{ts})\ge 0$, all summands in the rightmost formula are nonnegative and we deduce
    \[\frac{S\overline w}{(\mu_{t-1})_j} \ge \sum_{s=1}^S (b_{ts}^\top X_t^{*\top}e_{ts})_j.\nonumber\]
    By definition of $\overline b$, we also know that 
    \[\sum_{s=1}^S(b_{ts}^\top X_t^{*\top}e_{ts})_j \le S\overline b. \nonumber\]
    So, using \eqref{lem:stepsize:derivative} we obtain
    \[\dot h_j ((\mu_t)_j) \le \dot h_j((\mu_{t-1})_j) + \eta S \min\left(\frac{\overline w}{(\mu_{t-1})_j}, \overline b\right) - \eta S \rho_j.\label{lem:stepsize:derivative2}\]
    We now consider two cases.

\smallskip 

    \textbf{Case 1}: $\frac{\overline w}{\rho_j}\le (\mu_{t-1})_j \le \mu^{max}_j$. Then, we have $\frac{\overline w}{(\mu_{t-1})_j} - \rho_j\le 0$. By \eqref{lem:stepsize:derivative2}, we have 
    \[\dot h_j ((\mu_t)_j) \le \dot h_j((\mu_{t-1})_j). \nonumber\]
    By convexity of $h_j$, we see that $(\mu_t)_j \le (\mu_{t-1})_j\le \mu^{max}_j$, as desired.

    \textbf{Case 2}: $(\mu_{t-1})_j\le \frac{\overline w}{\rho_j}$. Then, by \eqref{lem:stepsize:derivative2} and convexity of $h_j^*$ we have
    \[(\mu_t)_j = \dot h_j^*(\dot h_j((\mu_t)_j))\le \dot h_j^*\left(\dot h_j\left((\mu_{t-1})_j\right) + \eta S\min\left(\frac{\overline w}{(\mu_{t-1})_j}, \overline b\right) - \eta S\rho_j\right). \nonumber\]
    By strong convexity of $h_j$, we can then write
    \[(\mu_t)_j \le \dot h_j^*\left(\dot h_j\left((\mu_{t-1})_j\right) + \eta S\overline b\right). \nonumber\]
    Using the $\frac{1}{\sigma_2}$-smoothness of $h_j^*$, we have $\dot h_j^*(y) \le \dot h_j^*(x) + \frac{1}{\sigma_2}(y-x)$. Setting $x = \dot h_j((\mu_{t-1})_j)$ and $y = x + \eta S\overline b$, we get
    \[ (\mu_t)_j\le \dot h_j^*(\dot h_j((\mu_{t-1})_j)) + \frac{\eta S\overline b}{\sigma_2} = (\mu_{t-1})_j+ \frac{\eta S\overline b}{\sigma_2}. \nonumber\]
    Finally, by assumption we have $\eta \le \frac{\sigma_2}{S\overline b}$, which combined with the definition of $\mu^{max}_j$ gives us
    \[(\mu_t)_j\le \frac{\overline w}{\rho_j} + 1 = \mu^{max}_j, \nonumber\]
    as desired.
\end{proof}

We can now prove Lemma~\ref{lem:stopping_time} bounding the stopping time $\tau_A$.

\begin{proof}[Proof of Lemma~\ref{lem:stopping_time}]
    We know by Lemma~\ref{lem:stepsize} that $\mu_t\le \mu^{max}$ at every step $t\leq \tau_A$ of the algorithm. In particular, this holds for $t = \tau_A$, proving the first statement.
    
    To prove the second statement, define the sigma algebra $\phi_t = \sigma(\xi_{t+1}, \zeta_t)$, where $\xi_{t+1} = \{w_r, b_r, e_r\}_{r=1}^{t+1}$ and $\zeta_t = \{\tilde x_r\}_{r=1}^t$. Then, for all $t$, we have $X_{t+1}^{*\top}e_{t+1s}\in \phi_t$, while $\tilde x_{t+1}\in \phi_{t+1}$. Let
    \[M_t = \sum_{r=1}^t\sum_{s=1}^S b_{rs}^\top(\tilde x_{rs} - X_r^{*\top}e_{rs}).\nonumber\]
    Then $M_t$ is a martingale with respect to $\phi_t$, as $M_t\in \phi_t$, $\mathbb E_\theta(||M_t||_\infty)$ is bounded, and
    \[\mathbb E_\theta[M_{t+1} - M_t\mid \phi_t] = \sum_{s=1}^S b_{t+1s}^\top \mathbb E\left[\tilde x_{t+1s} - X_{t+1}^{*\top}e_{t+1s}\mid X._{t+1}^{*\top}e_{t+1s}\right] = 0.\nonumber\]
    Now, recall that $\tau_A$ is the first time $\tau$ such that there exists $j$ with
    \[\sum_{t=1}^{\tau}\sum_{s=1}^S(b_{ts}^\top \tilde x_{ts})_j + S\overline b \ge TS\rho_j.\nonumber\]
    Since $b_{ts}^\top \tilde x_{ts}\in \phi_t$, we see that $\tau_A$ is a stopping time with respect to $\phi_t$. By the Optional Stopping Theorem (see, e.g., \citet{durrett2019probability}), we have $\mathbb E_\theta[M_{\tau_A}] = 0$. Thus, we can write
    \[\mathbb E_\theta\left[\sum_{t=1}^{\tau_A}(\tilde g_t)_j\right] = \mathbb E_\theta\left[\tau_A S\rho_j - \sum_{t=1}^{\tau_A}\sum_{s=1}^S(b_{ts}^\top X_t^{*\top}e_{ts})_j\right] = \mathbb E_\theta\left[\tau_A S\rho_j - \sum_{t=1}^{\tau_A}\sum_{s=1}^S(b_{ts}^\top \tilde x_{ts})_j\right].\nonumber\]
    Then, using the definition of $\tau_A$, we have
    \[  \mathbb E_\theta\left[\sum_{t=1}^{\tau_A}(\tilde g_t)_j\right] \le \mathbb E_\theta[\tau_A S\rho_j - TS\rho_j + S\overline b].\nonumber\]
    Rearranging then gives us
    \[\mathbb E_\theta[S(T-\tau_A)]\le \mathbb E_\theta\left[\frac{S\overline b - \sum_{t=1}^{\tau_A}(\tilde g_t)_j}{\rho_j}\right].\label{lem:stopping_time:t-tau}\]
    By definition of the update rule for $\tilde g_t$, we have
    \[\dot h_j ((\mu_{t})_j) \ge \dot h_j((\mu_{t-1})_j) - \eta (\tilde g_t)_j \nonumber\]
    for all $t\le \tau_A$. So, we can write
    \[ - \sum_{t=1}^{\tau_A}(\tilde g_t)_j\le \frac{1}{\eta}\left(\dot h_j((\mu_{\tau_A})_j) - \dot h_j((\mu_0)_j)\right)\le \frac{1}{\eta}\left(\dot h_j(\mu^{max}_j) - \dot h_j((\mu_0)_j)\right). \nonumber\]
    Combining with \eqref{lem:stopping_time:t-tau} gives us
    \[\mathbb E_\theta[S(T - \tau_A)] \le \frac{S\overline b}{\rho_j} + \frac{\dot h_j(\mu^{max}_j) - \dot h_j((\mu_0)_j)}{\eta \rho_j} \nonumber\]
    \[\le \frac{S\overline b}{\underline\rho} + \frac{1}{\eta\underline\rho}||\nabla h(\mu^{max}) - \nabla h(\mu_0)||_\infty, \nonumber\]
    as desired.

\end{proof}

To prove Lemma~\ref{lem:part2}, we need a result by \citet{chen1993convergence}, known as the Three-Point Property.

\begin{lemma}[\cite{chen1993convergence}]\label{lem:3point}
    Let $\eta > 0$, let $f:\mathbb R^N\to (-\infty, \infty]$ be a closed proper convex function, and let $\{\mu_t\}$ be a sequence computed as
    \[\mu_t = \arg\min_{\mu} f(\mu) + \frac{1}{\eta}V_h(\mu, \mu_{t-1}).\nonumber\]
    Then, for any $\mu\ge 0$, 
    \[\eta(f(\mu_{t+1}) - f(\mu))\le  V_h(\mu, \mu_t) - V_h(\mu, \mu_{t+1}) - V_h(\mu_{t+1}, \mu_t) \nonumber\]
    In particular, for $\{\mu_t\}$ computed as in Algorithm~\ref{alg:inbatch_lazy}, we have
    \[\eta\tilde g_t^\top (\mu_{t+1} - \mu) \le V_h(\mu, \mu_t) - V_h(\mu, \mu_{t+1}) - V_h(\mu_{t+1}, \mu_t).\nonumber\]
\end{lemma}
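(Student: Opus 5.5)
The plan is the standard Chen--Teboulle argument: combine the first-order optimality condition of the proximal step with convexity of $f$ and the algebraic three-point identity for Bregman distances. Throughout I read the constraint $\mu\ge 0$ as part of $f$, i.e.\ $f$ is replaced by $f+\iota_{\mathbb R^N_+}$. This keeps $f$ closed, proper and convex. With this convention the update in Algorithm~\ref{alg:inbatch_lazy} is of the stated form. I also align indices so that $\mu_{t+1}=\arg\min_\mu f(\mu)+\frac1\eta V_h(\mu,\mu_t)$, which is the statement's recursion shifted by one.

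\textbf{Step 1 (three-point identity).} For any $\mu,\mu_t,\mu_{t+1}$, expanding the definition $V_h(x,y)=h(x)-h(y)-\nabla h(y)^\top(x-y)$ gives
\[
V_h(\mu,\mu_t)-V_h(\mu,\mu_{t+1})-V_h(\mu_{t+1},\mu_t)=\bigl(\nabla h(\mu_{t+1})-\nabla h(\mu_t)\bigr)^\top(\mu-\mu_{t+1}). \nonumber
\]
This is pure algebra: all the $h(\cdot)$ terms cancel and the linear terms regroup.

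\textbf{Step 2 (optimality condition).} The objective $\phi(\mu)=f(\mu)+\frac1\eta V_h(\mu,\mu_t)$ is convex. Its minimizer exists and is unique because $h$ is strongly convex (assumed) and $f$ is closed proper convex. By the subdifferential sum rule, $0\in\partial f(\mu_{t+1})+\frac1\eta(\nabla h(\mu_{t+1})-\nabla h(\mu_t))$. The sum rule is valid here because $V_h(\cdot,\mu_t)$ is differentiable and finite everywhere, so a relative-interior qualification holds. Hence there is $g\in\partial f(\mu_{t+1})$ with $\eta g=\nabla h(\mu_t)-\nabla h(\mu_{t+1})$.

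\textbf{Step 3 (combine).} Convexity of $f$ gives $f(\mu)\ge f(\mu_{t+1})+g^\top(\mu-\mu_{t+1})$ for any $\mu$ (in particular any $\mu\ge0$). Multiplying by $\eta$ and substituting Step 2 yields
\[
\eta\bigl(f(\mu_{t+1})-f(\mu)\bigr)\le \bigl(\nabla h(\mu_{t+1})-\nabla h(\mu_t)\bigr)^\top(\mu-\mu_{t+1}).
\]
Step 1 turns the right-hand side into exactly $V_h(\mu,\mu_t)-V_h(\mu,\mu_{t+1})-V_h(\mu_{t+1},\mu_t)$.

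\textbf{Step 4 (specialization).} For the algorithm, take $f(\mu)=\tilde g^\top\mu+\iota_{\mathbb R^N_+}(\mu)$, where $\tilde g$ is the stochastic gradient used to produce $\mu_{t+1}$ (the $\tilde g_t$ of the statement under the index alignment above). For $\mu\ge0$ and $\mu_{t+1}\ge0$ the indicator terms vanish, so $f(\mu_{t+1})-f(\mu)=\tilde g^\top(\mu_{t+1}-\mu)$, which is the second inequality.

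\textbf{Main obstacle.} Step 2 is the only delicate point. It needs $h$ to be differentiable at $\mu_{t+1}$, which may lie on the boundary of $\mathbb R^N_+$. It also needs the sum rule to hold, so that the normal cone of $\mu\ge0$ is absorbed into $\partial f$. I would handle this by assuming $h$ is differentiable on an open set containing $\mathbb R^N_+$, as for $h=\frac12\|\mu\|_2^2$. With that assumption, $V_h(\cdot,\mu_t)$ is a finite differentiable convex function and the standard sum rule applies without further qualification.
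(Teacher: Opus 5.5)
Your proof is correct and is exactly the standard Chen--Teboulle argument: first-order optimality of the proximal step, combined with the algebraic three-point identity for $V_h$, which the paper invokes only by citation without reproducing it. Your index alignment in Step~4 is also the right reading: under the literal indexing of Algorithm~\ref{alg:inbatch_lazy}, the vector that moves $\mu_t$ to $\mu_{t+1}$ is $\tilde g_{t+1}$, so the second inequality holds for that vector, which is the convention the paper tacitly uses in the proof of Lemma~\ref{lem:part2}.
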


We can now prove Lemma~\ref{lem:part2}.

\begin{proof}[Proof of Lemma~\ref{lem:part2}]
    Define $p\mathcal X = \{y\mid y_{\pi}\in p_\pi \mathcal X_\pi\}\subseteq \mathbb R_+^{\Pi\times U\times V}$. For $y\in p\mathcal X$, define
    \[L(y, \mu): = \sum_{\pi\in\Pi}p_\pi\sum_{s=1}^S\left(w_{\pi(s)}^\top((y_\pi)_{\pi(s)}/p_\pi) - \mu^\top b_{\pi(s)}^\top(y_\pi)_{\pi(s)}\right) + S\mu^\top\rho. \nonumber\]
    Consider the saddle point problem $\min_{\mu\ge 0}\max_{y\in p\mathcal X} L(y, \mu)$. Minimizing over $\mu\ge 0$ gives us the primal:
    \begin{align*}
        \max_{y} & P(y) := \sum_{\pi\in\Pi}p_\pi\sum_{s=1}^S w_{\pi(s)}^\top((y_\pi)_{\pi(s)}/p_\pi) \\
        \text{s.t.} & \sum_{\pi\in\Pi}p_\pi\sum_{s=1}^S b_{\pi(s)}^\top(y_\pi)_{\pi(s)}\le S\rho \\
        & y\in p\mathcal X.
    \end{align*}
    Define
    \[z_t := \arg\max_{z\in p\mathcal X} L(z, \mu_t), \nonumber\]
    \[g_t := \nabla_\mu L(z_t, \mu_t) = -\sum_{\pi\in \Pi} \sum_{s=1}^S b_{\pi(s)}^\top((z_t)_\pi)_{\pi(s)} + S\rho.  \nonumber\]
    Let $\xi_{t} = \{w_r, b_r, e_r\}_{r=1}^{t}$. Also define $\pi_t = \{w_r, b_r, e_r\}$ to be the realized types in the batch at time $t$. Then $\mu_{t+1}\in \sigma(\xi_t)$, i.e., the $\sigma$-algebra over events $\xi_t$. By definition of $\tilde g_t, \overline b, \overline \rho$, we can write
    \[\mathbb E_{\pi_t}||\tilde g_t||_\infty^2 \le 2\left(\mathbb E_{\pi_t}||\sum_{s=1}^S b_{ts}^\top X_t^{*\top}e_{ts}||_\infty^2 + ||S\rho||_\infty^2\right) \le 2S^2(\overline b^2 + \overline \rho^2).\label{lem:part2:eq_gt2}\]
    Note that $\mu_t\in\sigma(\xi_{t-1})$, $g_t\in \sigma(\xi_{t-1})$, and $\tilde g_t\in \sigma(\xi_t)$. Also, $\mathbb E_{\pi_t}[\tilde g_t] = g_t$. Now, for all $0\le \mu\le \mu^{max}$, we can write using Lemma~\ref{lem:3point}
    \begin{align*}
        g_t^\top(\mu_t - \mu) &= \mathbb E_{\pi_t}\left[\tilde g_t\mid\mu_t\right]^\top (\mu_t-\mu) \nonumber\\
        &\le \mathbb E_{\pi_t}\left[\tilde g_t^\top(\mu_t-\mu_{t+1})+\frac{1}{\eta}V_h(\mu, \mu_t) - \frac{1}{\eta}V_h(\mu, \mu_{t+1}) - \frac{1}{\eta}V_h(\mu_{t+1}, \mu_t) \mid \mu_t\right].
    \end{align*}
    By strong convexity of $h$, we can write
    \[g_t^\top(\mu_t - \mu)\le \mathbb E_{\pi_t}\left[\tilde g_t^\top(\mu_t-\mu_{t+1})+\frac{1}{\eta}V_h(\mu, \mu_t) - \frac{1}{\eta}V_h(\mu, \mu_{t+1}) - \frac{\sigma_1}{2\eta}||\mu_{t+1}-\mu_t||_1^2 \mid \mu_t\right].\label{lem:part2:eq1}\]
    Note that using $a^2+b^2\ge 2ab$ for $a,b\in\mathbb R$ and then the Cauchy-Schwarz inequality, we can write
    \begin{align*}
        \frac{\sigma_1}{2\eta}||\mu_{t+1}-\mu_t||_1^2 + \frac{\eta}{\sigma_1}||\tilde g_t||_\infty^2 & \ge ||\mu_{t+1} - \mu_t||_1||\tilde g_t||_\infty \\
        &\ge |\tilde g_t^\top(\mu_t - \mu_{t+1})|.
    \end{align*}
    So, \eqref{lem:part2:eq1} can be continued with
    \[g_t^\top(\mu_t - \mu)
    \le \mathbb E_{\pi_t}\left[\frac{\eta}{\sigma_1}||\tilde g_t||_\infty^2 + \frac{1}{\eta}V_h(\mu, \mu_t) - \frac{1}{\eta}V_h(\mu, \mu_{t+1})\mid \mu_t\right]. \nonumber\]
    Then, using \eqref{lem:part2:eq_gt2}, we have
    \[g_t^\top(\mu_t - \mu)\le \frac{2S^2\eta}{\sigma_1}(\overline b^2 + \overline\rho^2) + \frac{1}{\eta}V_h(\mu,\mu_t) - \frac{1}{\eta}\mathbb E_{\pi_t}\left[V_h(\mu, \mu_{t+1})\mid \mu_t\right]. \nonumber\]
    Taking the expectation with respect to $\xi_{t-1}$, then multiplying by $\eta$, we can write
    \begin{align}
        \mathbb E_{\xi_{t-1}} \left[\eta g_t^\top(\mu_t - \mu)\right]& \le \frac{2S^2\eta^2}{\sigma_1}(\overline b^2 + \overline\rho^2) + \mathbb E_{\xi_{t-1}}[V_h(\mu,\mu_t)] - \mathbb E_{\xi_t}\left[V_h(\mu, \mu_{t+1})\right]. \label{lem:part2:eq2}
    \end{align}
    Now, consider the martingale process 

    \[M_t := \sum_{r=1}^t \eta \tilde g_r^\top (\mu_r - \mu) - \mathbb E_{\xi_{r-1}}[\eta\tilde g_r^{\top}(\mu_r-\mu)]\nonumber\]
    \[ = \sum_{r=1}^t \eta \tilde g_r^\top (\mu_r - \mu) - \eta g_r^{\top}(\mu_r-\mu).\nonumber\]
    We can verify that $M_t\in \sigma(\xi _t)$ and 
    \[\mathbb E[M_{t} - M_{t-1}\mid \xi_{t-1}] = (\mathbb E[\tilde g_t\mid \xi_{t-1}] - g_t)^\top(\mu_t - \mu) = (g_t - g_t)^\top (\mu_t - \mu) = 0.\]
    We can bound its increments using Cauchy-Schwarz, giving us
    $$|M_t - M_{t-1}|\le \eta\left(||\tilde g_t||_\infty + \mathbb E_{\xi_{t-1}}||\tilde g_t||_\infty\right)||\mu_t - \mu||_1.$$
    Since $||\tilde g_t||_\infty\le \overline b + \overline \rho$, we have 
    \[|M_t - M_{t-1}|\le 2(\overline b + \overline\rho)N ||\mu_t - \mu||_\infty.\label{lem:part2:mgle_bound}\]
    Then, by Lemma~\ref{lem:stepsize}, $\mu_t\le \mu^{max}$, given $\mu_{t-1}\le \mu^{max}$. Thus, we can further bound \eqref{lem:part2:mgle_bound} by
    $$|M_t - M_{t-1}|\le 4N(\overline b + \overline\rho)\mu^{max} <\infty.$$
    Thus, the increments are finite, and by the Optional Stopping Theorem, we have $\mathbb E[M_{\tau_A}] = 0$.
    Then we can write \begin{gather}
        \mathbb E\left[\sum_{t=1}^{\tau_A} \eta \tilde g_t^\top(\mu_t - \mu)\right] = \mathbb E\left[\sum_{t = 1}^{\tau_A}\mathbb E_{\xi_{t-1}}[\eta \tilde g_t^\top (\mu_t - \mu)]\right] \nonumber\\
        \le \frac{2S^2\eta^2}{\sigma_1}(\overline b^2 + \overline\rho^2)\mathbb E[\tau_A] + V_h(\mu, \mu_0),\label{lem:part2:eqright}
    \end{gather}
    where the inequality follows from \eqref{lem:part2:eq2}. Next, if we select $\mu = 0$, we can write using the definition of $g_t$ and linearity of $L$ in $\mu$ that
    \begin{align*}
        \sum_{t = 1}^{\tau_A}\eta g_t^\top (\mu_t - \mu) &= \sum_{t = 1}^{\tau_A}\eta \nabla_\mu L(z_t, \mu_t)^\top (\mu_t - \mu) \\
        &= \sum_{t = 1}^{\tau_A}\eta (L(z_t, \mu_t) - L(z_t, \mu)).
    \end{align*}
    Expanding using the definition of $L$ and $D$ gives us
    \begin{align*}
        \sum_{t = 1}^{\tau_A}\eta g_t^\top (\mu_t - \mu)&= \sum_{t = 1}^{\tau_A}\eta \left(L(z_t, \mu_t) - P(z_t) - \mu^\top\left(S\rho - \sum_{\pi\in\Pi}\sum_{s=1}^Sb_{\pi(s)}^\top((z_t)_\pi)_{\pi(s)}\right)\right) \\
        &= \sum_{t = 1}^{\tau_A}\eta \left(D(\mu_t) - P(z_t) - \mu^\top\left(S\rho - \sum_{\pi\in\Pi}\sum_{s=1}^Sb_{\pi(s)}^\top((z_t)_\pi)_{\pi(s)}\right)\right).
    \end{align*}
    Since $\mu = 0$, this is equal to
    $$\sum_{t = 1}^{\tau_A}\eta g_t^\top (\mu_t - \mu)=\sum_{t = 1}^{\tau_A}\eta \left(D(\mu_t) - P(z_t)\right).$$
    Then by convexity of $D$, we have
    \[\sum_{t = 1}^{\tau_A}\eta g_t^\top (\mu_t - \mu)\ge \tau_A\eta\left(D(\tilde\mu_{\tau_A}) - \frac{\sum_{t=1}^{\tau_A}P(z_t)}{\tau_A}\right).\label{lem:part2:eqleft}\]
    \eqref{lem:part2:eqright} and \eqref{lem:part2:eqleft} give us, for $\mu = 0$,
    \[\mathbb E\left[\tau_AD(\tilde\mu_{\tau_A}) - \sum_{t=1}^{\tau_A}P(z_t)\right]\le \frac{2S^2\eta}{\sigma_1}(\overline b^2 + \overline\rho^2)\mathbb E[\tau_A] + \frac{1}{\eta}V_h(\mu, \mu_0).\label{lem:part2:eq_end1}\]
    Next, recall that $\mu_t, z_t\in\sigma(\xi_{t-1})$. If a batch $\pi$ arrives at time $t$, Algorithm~\ref{alg:inbatch_lazy} computes $X^*_t = (z_t)_\pi/p_\pi$ and then randomly samples $\tilde x_{ts}\sim X_{ts}^{*\top}e_{ts}$. Thus, for $t\le \tau_A$, we have
    $$\mathbb E_{\pi_t}\left[\sum_{s=1}^S w_{ts}^\top \tilde x_{ts}\mid \xi_{t-1}\right] = \sum_{\pi\in\Pi}p_\pi\sum_{s=1}^S w_{\pi(s)}^\top(((z_t)_\pi)_{\pi(s)}/p_\pi) = P(z_t).$$
    So, the process
    $$N_t = P(z_t) - \mathbb E_{\pi_t}\left[\sum_{s=1}^S w_{ts}^\top \tilde x_{ts}\mid \xi_{t-1}\right]$$
    is a martingale with finite intervals, and the Optional Stopping Theorem gives us $\mathbb E[N_{\tau_A}] = 0$. Equivalently, we have
    \[\mathbb E\left[\sum_{t=1}^{\tau_A}\sum_{s=1}^S w_{ts}^\top \tilde x_{ts}\mid \xi_{t-1}\right] = \mathbb E\left[\sum_{t=1}^{\tau_A} P(z_t)\right].\label{lem:part2:eq_end2}\]
    Finally, we can combine \eqref{lem:part2:eq_end1} and \eqref{lem:part2:eq_end2} to give
    \[\mathbb E_F\left[\tau_A D(\tilde \mu_{\tau_A}) - \sum_{t=1}^{\tau_A}\sum_{s=1}^Sw_{ts}^\top \tilde x_{ts}\right] \le \frac{2\eta S^2(\overline b^2 + \overline \rho^2)}{\sigma_1}\mathbb E_F[\tau_A] + \frac{1}{\eta}V_h(0, \mu_0), \nonumber\]
    as desired.
\end{proof}

\end{document}